\pgfplotsset{compat=1.17}
\newcommand{\poly}[1]{\mathrm{poly}\bigl(#1\bigr)}
\newtheorem{theorem}{Theorem}
\newtheorem{lemma}[theorem]{Lemma}
\newtheorem{proposition}[theorem]{Proposition}
\newtheorem{corollary}[theorem]{Corollary}
\newtheorem{definition}[theorem]{Definition}
\theoremstyle{remark}
\newtheorem{remark}[theorem]{Remark}
\newcommand{\OH}{\mathcal{H}_{\mathrm{OH}}}
\DeclareMathOperator*{\E}{\mathbb{E}}   
\newcommand{\Hpen}{H_{\mathrm{pen}}}
\newcommand{\Hobj}{H_{\mathrm{obj}}}
\newcommand{\UM}{U_{M}}
\newcommand{\UC}{U_{C}}
\g@addto@macro\normalsize{%
    \abovedisplayskip 3pt plus 1pt minus 1pt%
    \abovedisplayshortskip 3pt plus 1pt minus 1pt%
    \belowdisplayskip 3pt plus 1pt minus 1pt%
    \belowdisplayshortskip 3pt plus 1pt minus 1pt%
}
\def\BibTeX{{\rm B\kern-.05em{\sc i\kern-.025em b}\kern-.08em
    T\kern-.1667em\lower.7ex\hbox{E}\kern-.125emX}}
\author{Chinonso Onah}
\affiliation{Volkswagen AG, Berliner Ring 2, Wolfsburg 38440, Germany}
\affiliation{Department of Physics, RWTH Aachen, Germany}
\author{Roman Firt}
\affiliation{Volkswagen AG, Berliner Ring 2, Wolfsburg 38440, Germany}
\author{Kristel Michielsen}
\affiliation{Department of Physics, RWTH Aachen, Germany}
\affiliation{Forschungszentrum Jülich, Germany}
\title{Empirical Quantum Advantage in Constrained Optimization from Encoded Unitary Designs}
\begin{document}
\maketitle

\begin{abstract}
\noindent
We introduce the \emph{Constraint–Enhanced Quantum Approximate Optimization Algorithm} (CE–QAOA), a shallow, constraint-aware ansatz that operates \emph{inside} the one–hot product space \(\OH=[n]^m\), where \(m\) is the number of blocks and each block is initialized in an \(n\)-qubit \(W_n\) state. We give an ancilla-free, depth-optimal encoder that prepares \(W_n\) using \(n{-}1\) two-qubit rotations per block, and a two-local block-XY mixer that preserves the one–hot manifold and has a constant spectral gap on the one-excitation sector. At the level of expressivity, we establish per-block controllability, implying approximate universality per block. At the level of distributional behavior, we show that (after natural block/symbol permutation twirls) shallow CE–QAOA realizes an encoded unitary \(1\)-design and supports approximate second-moment (\(2\)-design) behavior; combined with a Paley--Zygmund argument, this yields finite-shot anticoncentration guarantees. Algorithmically, we wrap constant-depth sampling with a deterministic feasibility checker to obtain a polynomial-time hybrid quantum--classical solver (PHQC) that returns the best observed feasible solution in \(O(Sn^2)\) time, where \(S=\poly n\) is the shot budget. We obtain two advantages. First, when CE--QAOA fixes \(r\ge 1\) locations different from the start city, we achieve a \(\Theta(n^{r})\) reduction in shot complexity even against a classical sampler that draws uniformly from the feasible set. Second, against a classical baseline restricted to raw bitstring sampling, we show an \(\exp(\Theta(n^{2}))\) minimax separation. In noiseless circuit simulations of TSP instances with \(n\in\{4,\dots,10\}\) locations from the QOPTLib benchmark library, we recover the global optimum at \(p{=}1\) using polynomial shot budgets and coarse parameter grids defined by the problem size, as suggested by Theorem~\ref{thm:exist-params} and Algorithm~\ref{alg:PHQC1}.
\end{abstract}


\section{Introduction}
\label{sec:introduction}
\noindent
Combinatorial optimization problems (COP) underpin modern logistics, manufacturing, finance, and drug discovery; and it surfaces throughout artificial intelligence and machine-learning workflows~\cite{PapadimitriouSteiglitz1982,PadbergRinaldi1991BranchCut,TothVigo2014VRP,BengioLodiProuvost2021}. Classical methods like branch--and--bound and cutting planes~\cite{PadbergRinaldi1991BranchCut,LawlerWood1966BranchBound,LinKernighan1973} excel when they exploit problem structure. In contrast, many near-term quantum approaches—notably QAOA~\cite{Farhi2014QAOA} and its variants~\cite{montanezbarrera2024universalqaoa,BaeLee2024RecursiveQAOA,Finzgar2024QIRO}—tend to explore the full Hilbert space in search of the (near) optimal solution(s), paying in \emph{measurement cost}~\cite{Cerezo2021VQAReview,Tilly2022VQEReview}, \emph{depth}~\cite{Cerezo2021VQAReview,Tilly2022VQEReview}, and \emph{barren plateaus}~\cite{McClean2018BarrenPlateaus}. A growing line of work mitigates this by encoding directly into symmetry-rich feasible subspaces and using feasibility-preserving mixers~\cite{Hadfield2019AOA,Fuchs2022ConstrainedMixers,BaertschiEidenbenz2020,tsvelikhovskiy2024equivariant,tsvelikhovskiy2024symmetries,Xie2024CVRP}. In this work we study \emph{Constraint--Enhanced Quantum Approximate Optimization Algorithm} (CE--QAOA), a shallow, constraint-aware ansatz that \emph{natively} operates on a block one-hot manifold. Our circuit constructions are based on an ancilla-free, depth-optimal encoder that prepares a block \(W_n\) state using exactly \(n{-}1\) excitation preserving two qubit gates, and a two-local block--XY mixer that preserves the manifold.\cite{Hadfield2019AOA} The properties arising from these circuit primitives allow us to make contact with low-order moment analysis of random quantum circuits\cite{BrandaoHarrowMixing2016} in the encoded subspace and to subsequently study the emergence of approximate unitary $t-$ designs. 


\medskip \noindent
Why do unitary $t$-designs matter? In the context of variational quantum circuits, unitary designs are a principled proxy for \emph{expressivity}, \emph{anticoncentration}, and \emph{typical-case trainability}~\cite{BrandaoHarrowMixing2016}. Recall that a unitary $t$-design reproduces Haar moments up to degree $t$. Thus any degree-$\le t$ observable (e.g., overlaps, energies, and many gradient statistics) computed under the design agrees with the Haar baseline up to~$\varepsilon$~\cite{BrandaoHarrowMixing2016}. Secondly, for $t=2$, 2-designs suggests an \emph{instance-agnostic typical-case behaviour} because second moments are insensitive to the fine details of the cost landscape. From the “landscape information” viewpoint~\cite{PerezSalinasWangBonetMonroig2024}, matching Haar second moments ensures that a nontrivial amount of task-relevant variance survives at shallow depth on the \emph{encoded} space, which increases the effective information capacity of the ansatz~\cite{PerezSalinasWangBonetMonroig2024,HadfieldHoggRieffel2022}. 

\medskip \noindent
Our circuit design inherits from prior work in Refs. \cite{Hadfield2019AOA, doCarmo2025warmstartingqaoa}. Here we leverage additional problem specific structures through problem-algorithm codesign to capture a broad class of COPs (TSP/ATSP, QAP, CVRP, etc) via block one-hot encodings and shared constraint properties. Consequently, we propose a unifying kernel in Def. \ref{def:kernel-requirement}. Within this kernel, a block-permutation twirl yields a unitary \(1\)-design on the encoded space, reproducing the Haar baseline \(\mathbb{E}\,|\langle x|U|\phi\rangle|^2=1/D\) and a corresponding \emph{existence} theorem for any target basis vector \(x^\star\) (and thus for the optimum) in Thm.~\ref{thm:exist-params} and Cor. \ref{cor:feasible-optimum}. Algorithmically, we introduce the the Polynomial time Hybrid Quantum-Classical Solver  (PHQC)  which wraps constant-depth quantum sampling over a coarse  parameter grid with a deterministic classical checker that identifies the best sampled solutions, operationalizing these $t-$design bounds. PHQC identifies the optimal solution if it is sampled at least once.

\medskip \noindent
Our running examples are derived from the Travelling Salesman Problem (TSP) where the \(n^2\) qubits are imagined to be arranged in \(n\) blocks of size \(n\) and the double one-hot constraint restricts to a fixed-Hamming-weight manifold containing only \(n!\) basis states out of the encoded $n^n$ encoded basis vectors and  \(2^{n^2}\) vectors in the ambient Hilbert space~\cite{Lucas2014Ising}. Similar one-hot encoded problems arise naturally for assignment problems~\cite{Edmonds1965Blossom}, matching problems~\cite{Kuhn1955Hungarian}, vehicle routing problems~\cite{TothVigo2014VRP}, graph coloring problems~\cite{Lucas2014Ising} and the recently introduced shared transportation framework~\cite{onah2025waas}. CE--QAOA exploits this structure at the encoder, mixer layers, phase layers, and classical feasibility checker; extending  problem--algorithm co-design from quantum to classical stage~\cite{Li2021CoDesign,Tomesh2021QuantumCodesign}. The advertised quantum advanatge comes from direct utilization of problem structures at the classical and quantum levels of the algorithm.  

\medskip \noindent
We present noiseless circuit simulation results for TSP instances ranging from 4 to 10 locations drawn from the QOPTLib benchmark problems\cite{Osaba2024Qoptlib}. We demonstrate that PHQC recovered the optimal solution in all instances considered while dramatically reducing the shot cost to a polynomial $O(n^{k})$ at depth $p=1$ as reported  in Table \ref{tab:design-bound}. All plots presented in the following can be reproduced as demonstrated in \cite{Onahempdata}. 


\medskip \noindent


\subsection{Relation to Prior Work}
\label{sec:background}
\medskip \noindent
Recently, Smith\mbox{--}Miles \emph{et\,al.}~\cite{smith-miles2025tsp-ntqa} surveyed the obstacles to near-term quantum advantage on TSP—including resource inflation from generic encodings (e.g., QUBO), feasibility handling, noise/optimizer pathologies, and inconsistent benchmarking against strong classical solvers—and argue for structure-exploiting formulations and transparent, reproducible evaluations. Our CE\mbox{--}QAOA/PHQC approach aligns with this perspective but differs in emphasis because we treat global constraints as \emph{useful structure} within an encoded manifold. Our formalism inherits from the preexisiting body of work in the \emph{alternating-operator} literature (often called QAOA+) which develops mixers that commute with, or otherwise preserve, constraint projectors---e.g., one-hot, ring mixers, and null-swaps~\cite{Hadfield2019AOA,Chancellor2017}. For routing families (e.g., CVRP), feasibility is frequently enforced via Grover-style oracles~\cite{Xie2024CVRP,BaertschiEidenbenz2020,LaRose2022MixerPhaser}, which introduce oracle/ancilla overhead and reflection-based subroutines. In contrast, our kernel uses a \emph{two-local, ancilla-free} block--XY mixer that \emph{preserves} all blockwise one-hot constraints, acts entirely within the encoded manifold, and is normalized to have a \emph{constant} spectral gap on the one-excitation sector (Prop.~\ref{prop:spectral-gap}). The initial state is the block-$W$ product\cite{carmo2025warmstarting} prepared by an ancilla-free, depth-optimal cascade (Thm.~\ref{thm:encoder-opt}), making the mixer’s top eigenvector the initial state. Our mixer design ensures \emph{invariance} (excitation conservation) and block factorization, enabling quditization $H_{M}^{(b)} \mapsto A(K_n)$\cite{Hadfield2019AOA} and explicit control of norms and gaps at the encoded per block level. This normalization is used quantitatively in our mixing/expressivity arguments and avoids hidden $n$-dependent angles that can plague unnormalized XY implementations.


\medskip \noindent
Unlike prior feasibility-preserving approaches (e.g.QAOA+~\cite{Hadfield2019AOA, BaertschiEidenbenz2020,LaRose2022MixerPhaser}),
our kernel \emph{explicitly} requires that the \emph{constraints} be exchangeable under block permutations $S_m$ and symbol relabelings $S_n$. This symmetry is a co-design choice that is is enforced at the modeling level and then exploited algorithmically. Two consequences are
central to our analysis: (i) $H_{\mathrm{pen}}$ is invariant under $S_m\times S_n$, so every penalty level set $L_t=\{x:\,H_{\mathrm{pen}}(x) :=\langle x \mid H_{\mathrm{pen}} \mid x \rangle=t\}$ (in particular $L_0$) is preserved setwise; and (ii) a block-permutation twirl acts transitively on the encoded basis and yields an \emph{exact unitary $1$-design} on $\OH$ (Lemma~\ref{lem:perm-twirl}). This exact first-moment identity directly powers our \emph{existence} guarantee for the optimal basis vector $x^\star$ (Cor.~\ref{cor:feasible-optimum}). This symmetry-driven $1$-design baseline, and the resulting existence results, differentiate our co-designed kernel from prior feasibility-preserving approaches.

\medskip \noindent
Although our kernel uses familiar primitives, here we study them under the lens of norms, spectral properties, and typical case behaviors. By systematic use of \emph{low-order moment} methods (\(L_1\), \(L_2\)) together with symmetry and twirling, we expose and utilize: \textbf{(i)} Lie controllability and approximate universality within the one-excitation sectors induced by the block–XY mixer; \textbf{(ii)} \(\varepsilon\)-\emph{approximate unitary 2-designs at shallow depth} arising from per-block design plus diagonal entanglers; \textbf{(iii)} \emph{exponentially fast mixing} and \emph{anticoncentration} on the encoded manifold; and \textbf{(iv)} expressivity that translates into provable lower tails at the \(c/D\) scale. Where $D$ is the encoded dimension and $c>0$ from Thm. \ref{thm:exist-params}. To our knowledge, prior feasibility-preserving or domain-specific QAOA variants do not provide \emph{universal, instance-agnostic} guarantees of this kind. 

\medskip \noindent
Our analyses remain complementary to recent work on \emph{landscape information content}~\cite{PerezSalinasWangBonetMonroig2024}; and analysis of energy/gradient landscapes and alternating-operator structure~\cite{HadfieldHoggRieffel2022}. Our approach equally remains orthogonal to warm-start schedules~\cite{Egger2021Warm,carmo2025warmstarting}. Warm starts can yield strong numerics but provide limited worst-case control and sometimes require non-eigen initializations of the mixer, violating the adiabatic protocol and limiting performance\cite{He2023A}. We avoid classical initialization completely and focus on provable algorithmic performance. 


\section{Circuit Designs and Primitives for CE-QAOA }
\label{sec:main}


\subsection{Problem-Algorithm Codesign}
Our goal is to identify potential opportunities for provable algorithmic enhancements by taking explicit advantage of \emph{problem structure}. This aligns with the broader \emph{co-design} perspective in quantum computing, where algorithms and platforms are jointly tailored to application constraints and device characteristics~\cite{Li2021CoDesign,Tomesh2021QuantumCodesign}. Here, however, we omit hardware considerations and focus on problem-algorithm co-design. We look beyond \emph{feasibility-preserving} ans\"atze and mixers~\cite{Hadfield2019AOA,Wang2020XYMixers,Fuchs2022ConstrainedMixers,Xie2024CVRP,LaRose2022MixerPhaser} and impose additional symmetry on the constraint structure consistent with the various problem classes, bringing them under a single kernel defined in Def.~\ref{def:kernel-requirement}. As a consequence, performance bounds obtained propagate across a broad class of COPs with minimal tweaks. The definition below fixes what it means to be ``in the kernel'' and will be the used henceforth for all results that follow.

\begin{definition}[CE--QAOA kernel]
\label{def:kernel-requirement}
An optimization instance $I$ belongs to the \emph{CE--QAOA kernel} if there exist
integers $n,m\in\mathbb N$ and the \emph{one-hot} encoder $\mathsf E_{\mathrm{1hot}}$
that initializes the dynamics in the fixed–Hamming–weight space
\[
\OH \;=\; (\mathcal H_1)^{\otimes m},
\qquad
\mathcal H_1 \;=\; \mathrm{span}\{\ket{e_0},\dots,\ket{e_{n-1}}\}
\quad\text{(one excitation per block)}.
\]
The problem Hamiltonian splits as
\[
H_C \;=\; H_{\mathrm{pen}} \;+\; H_{\mathrm{obj}},
\]
Where $H_{\mathrm{obj}}$ is the Ising Hamiltonian representing the  objective and only needs to be diagonal in the computational basis. $H_{\mathrm{pen}}$ is the penalty Hamiltonain and enjoys the symmetries specified in (a) and (b) in addition to being diagonal in the computational basis. 
\begin{enumerate}[label=\textup{(\alph*)}, leftmargin=2.2em]
\item \emph{Penalty structure.} $H_{\mathrm{pen}}$ is a sum of squared affine
      one–hot/degree/capacity penalties (optionally plus linear forbids) with
      integer coefficients bounded by $\mathrm{poly}(n)$. Consequently,
      $\mathrm{spec}(H_{\mathrm{pen}})\subseteq\{0,1,\dots,t_{\max}\}$ with
      $t_{\max}=O(m)=\mathrm{poly}(n)$.
\item \emph{Pattern symmetry.} $H_{\mathrm{pen}}$ is invariant under
      (i) block permutations $S_m$ and (ii) global symbol relabelings $S_n$.
      Hence the configuration space decomposes into level sets
      $L_t=\{x:\, H_{\mathrm{pen}}(x) :=\langle x \mid H_{\mathrm{pen}} \mid x \rangle=t\}$ that are preserved setwise.
\item In addition, the initial state is the $+1$ eigenstate of the block–local normalized XY mixer Hamiltonian,
      \[
      \widetilde H_{M}^{(b)} \;=\;
      \frac{1}{n-1}\sum_{0\le j<k\le n-1}(X_j^{(b)}X_k^{(b)}+Y_j^{(b)}Y_k^{(b)}),
      \]
      with $\|\widetilde H_{M}^{(b)}\|=O(1)$ on each block. The initial state is the       uniform one–hot product
      \[
      \ket{s_0}\;=\;\ket{s_{\mathrm{b}}}^{\otimes m},
      \qquad
      \ket{s_{\mathrm{b}}}\;=\;\frac{1}{\sqrt n}\sum_{k=0}^{n-1} \ket{e_k}
      \quad\text{(a $W_n$ state per block)}.
      \]
\end{enumerate}
\end{definition}

Although the kernel above fixes one-hot blocks ($1$ excitation per block), many applications naturally use \emph{$k$-hot} blocks (fixed Hamming weight of $k$ per block). In that case, one replaces $\mathcal H_1$ by
\(
\mathcal H_k=\mathrm{span}\{\ket{x}\in\{0,1\}^n:\, \|x\|_1=k\}
\),
and chooses a suitable \emph{number-conserving} block-local mixer (e.g.\ modified  XY Hamiltonian, which preserves excitation number) normalized to constant operator norm on $\mathcal H_k$. The corresponding initial state becomes a tensor product of Dicke states encoders~\cite{B_rtschi_2019},
\[
\ket{s_{\mathrm{b}}^{(k)}}\;=\;\ket{D^{(n)}_{k}}
\;=\;\binom{n}{k}^{-\tfrac12}\!\!\sum_{\substack{x\in\{0,1\}^n\\ \|x\|_1=k}}\!\ket{x},
\qquad
\ket{s_0}\;=\;\bigl(\ket{D^{(n)}_{k}}\bigr)^{\otimes m}.
\]
This further enlarges the class of constrained combinatorial optimization problems that can fit into the kernel proposed above. Some identified problem classes include: Travelling Salesman (TSP/ATSP)~\cite{Lawler1985TSP,GareyJohnson1979}, Quadratic Assignment Problem (QAP)~\cite{Koopmans1957QAP,Loiola2007QAPSurvey}, Vehicle Routing with Capacities (CVRP)~\cite{TothVigo2014VRP}, Generalized Assignment / Multiple–Knapsack~\cite{SahniGonzalez1976GAP,MartelloToth1990Knapsack}, $k$D Matching (NP-complete for $k\ge3$)~\cite{Karp1972}, the shared transportation problems\cite{onah2025waas}, and Job–Shop / Flow–Shop Scheduling~\cite{GareyJohnsonSethi1976,LenstraRinnooyKan1977}. Some base cases (e.g., linear assignment) are polynomial-time solvable via the Hungarian algorithm~\cite{Kuhn1955Hungarian,Munkres1957Hungarian}; nevertheless, they fit the kernel structurally. Our interest is in the NP-hard variants (e.g., TSP/QAP/CVRP, etc.) for which the same one-hot encodings and symmetries apply.

\begin{definition}[Feasible basis vectors]
\label{def:feasible-basis}
Let $\OH=\mathcal H_{1}^{\otimes m}$ denote the encoded one–hot product space from
Definition~\ref{def:kernel-requirement}, and let
\[
H_C \;=\; H_{\mathrm{pen}} + H_{\mathrm{obj}}
\]
be the problem Hamiltonian with $H_{\mathrm{pen}}$ satisfying the kernel symmetries.
A computational basis vector $\ket{x}$ with block decomposition
$x=(j_0,\ldots,j_{m-1})\in [n]^m$ is called \emph{feasible} if and only if
\[
    H_{\mathrm{pen}}(x) :=\langle x \mid H_{\mathrm{pen}} \mid x \rangle \;=\; 0,
\]
i.e.\ $\ket{x}$ lies in the zero–penalty level set
\[
   L_0(H_{\mathrm{pen}})\;:=\;\{\,\ket{x}\in\OH:\ H_{\mathrm{pen}}(x)=0\,\}.
\]

\end{definition}

\begin{remark}
In the standard one–hot encodings for permutation and routing problems, each basis vector $\ket{x}$ can be reshaped into an $n\times n$ binary matrix
$X$ with exactly one $1$ in each row.  A basis vector is feasible precisely when
this reshaped matrix is a permutation matrix:
\[
   X\in S_n
   \qquad\Longleftrightarrow\qquad
   \sum_{k=0}^{n-1} X_{ik}=1\ \text{for all rows $i$}
   \;\text{ and }\;
   \sum_{i=0}^{n-1} X_{ik}=1\ \text{for all columns $k$}.
\]
Thus the feasible basis vectors are exactly those $x\in\OH$ whose blockwise
one–hot assignments are globally collision–free and satisfy all penalty constraints,
i.e.\ the elements of the permutation manifold $L_0(H_{\mathrm{pen}})\subset\OH$.
\end{remark}

\subsection{ The Initial State Preparation Protocol}
\label{sec:init_state}
Consider some $n\times m$--qubit system where $n$ can be considered of a fixed size repeated $m$ times. Suppose each block has \(n\) qubits, and consider a single excitation subspace spanned by
 the equal superposition \(
|s\rangle = \frac{1}{\sqrt{n}} \sum_{k=0}^{n-1} |e_ k\rangle.
\) Over the $m$ blocks, the final state is:
\begin{equation}
  |s_{\mathrm{0}}\rangle \;=\; 
  \bigotimes_{b=0}^{m-1} \left(\frac{1}{\sqrt{n}}\sum_{k=0}^{n-1} \lvert e_k\rangle\right),
  \label{eq:init}
\end{equation}
realising block-wise tensor product of uniform W-states \cite{diker2022,B_rtschi_2019}. Algorithm ~\ref{alg:wblock}, a simple cascade-style algorithm can prepare the initial state described above in Eq \ref{eq:init}; obtaining a single block of \(n\) qubits in a W-state. Then Algorithm ~\ref{alg:product} (\textbf{\texttt{OneHotMultiBlockPrepare}(\(m\))}) uses the single-block procedure on each of the \(m\) blocks to create a W-state per block. The total Hamming weight per bitstring is fixed and the Hamming Geometry is constrained by the location of single excitation contributed by each block. This circuit construction scales optimally in gate depth on a linear array.

\begin{theorem}[Gate–depth optimality on a line]
\label{thm:encoder-opt}
There exists an ancilla-free circuit that prepares $\ket{W_n}$ on a linear array
using $n-1$ two-qubit \emph{excitation-preserving rotations}
\[
U^{(k,k{+}1)}_{M}(\phi)\;:=\;\exp\!\Bigl[-\,i\,\frac{\phi}{2}\,\bigl(X_k X_{k+1}+Y_k Y_{k+1}\bigr)\Bigr],
\]
on each adjacent pair $(k,k{+}1)$, with suitable angles $\{\phi_k\}_{k=0}^{n-1}$. Moreover, any circuit over single-qubit gates and two-qubit gates on a linear array that prepares $\ket{W_n}$ must use at least $n-1$ two-qubit gates.
\end{theorem}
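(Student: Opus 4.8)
The plan is to establish the two halves of Theorem~\ref{thm:encoder-opt} independently: the upper bound by writing down the cascade of Algorithm~\ref{alg:wblock} explicitly and tuning its angles, and the matching lower bound by a connectivity (light-cone) argument on the circuit's interaction graph.

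\emph{Upper bound (the $n-1$ gates suffice).} First I would start from the weight-one product state $\ket{1}_1\ket{0}_2\cdots\ket{0}_n$ (one free single-qubit bit-flip) and apply the cascade $U^{(1,2)}_M(\phi_1)\,U^{(2,3)}_M(\phi_2)\cdots U^{(n-1,n)}_M(\phi_{n-1})$ of nearest-neighbour excitation-preserving rotations. Since $U^{(k,k{+}1)}_M(\phi)$ conserves Hamming weight and, on the one-excitation sector of the pair, implements $\ket{\cdots 1 0 \cdots}\mapsto\cos\phi\,\ket{\cdots 1 0 \cdots}-i\sin\phi\,\ket{\cdots 0 1 \cdots}$, a short induction on the cascade step shows the output amplitude on site $j$ equals $(-i)^{j-1}\big(\prod_{\ell<j}\sin\phi_\ell\big)\cos\phi_j$ for $j<n$ and $(-i)^{n-1}\prod_{\ell<n}\sin\phi_\ell$ for $j=n$. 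Demanding that all $n$ magnitudes equal $1/\sqrt n$ forces $\cos^2\phi_k=1/(n-k+1)$, i.e.\ $\phi_k=\arccos\big(1/\sqrt{n-k+1}\big)$, and the telescoping identity $\prod_{\ell=1}^{n-1}\sin^2\phi_\ell=\prod_{\ell=1}^{n-1}\frac{n-\ell}{n-\ell+1}=\tfrac1n$ shows the last magnitude constraint is then automatic, so the system of $n$ equations in $n-1$ angles is consistent. The residual phases $(-i)^{j-1}$ are absorbed either into the definition of $\ket{W_n}$ or by single-qubit $Z$-rotations, neither of which changes the two-qubit count, and all couplings used are line adjacencies; hence $n-1$ excitation-preserving rotations suffice on a linear array.

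\emph{Lower bound (at least $n-1$ are necessary).} To any circuit on $n$ qubits I associate its interaction multigraph $G$ on the qubit set, with one edge per two-qubit gate. If the circuit used at most $n-2$ two-qubit gates, $G$ would have at most $n-2$ edges and hence be disconnected, yielding a partition of the qubits into nonempty $A\sqcup B$ with no two-qubit gate crossing the cut. Every gate then acts entirely inside $A$ or entirely inside $B$, and gates on disjoint qubit sets commute, so the circuit reorders to $U=V_A\otimes V_B$; applied to any product initial state this outputs a state that is a product across the $A\mid B$ cut. But for $n\ge2$ and any nonempty $A$ the Schmidt decomposition of $\ket{W_n}$ across $A\mid\bar A$ is $\sqrt{|A|/n}\,\ket{W_A}_A\ket{0}_{\bar A}+\sqrt{|\bar A|/n}\,\ket{0}_A\ket{W_{\bar A}}_{\bar A}$, which has two strictly positive Schmidt coefficients because $\ket{W_S}\perp\ket{0}_S$ on each side; thus $\ket{W_n}$ is entangled across every bipartition, contradicting the product form. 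Therefore $G$ is connected, forcing at least $n-1$ edges, i.e.\ at least $n-1$ two-qubit gates. (This half uses no geometric assumption, so the bound holds for arbitrary connectivity, in particular on the line.)

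\emph{Where the work is.} Neither direction is deep. On the construction side the only thing to get right is the bookkeeping of the accumulated $(-i)$ phases through the cascade and the telescoping product that makes the $n$-th constraint redundant. On the lower-bound side the points needing care are the justification of the factorization $U=V_A\otimes V_B$ from disconnectedness (commuting disjoint-support gates) and the observation that $\ket{W_n}$ has Schmidt rank exactly $2$, hence is entangled, across every cut; the rest is the standard fact that a connected graph on $n$ vertices has at least $n-1$ edges.
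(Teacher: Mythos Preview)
Your proof is correct. The upper bound is exactly the cascade of Algorithm~\ref{alg:wblock} worked out in detail, matching the paper.

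For the lower bound you take a genuinely different route from the paper. The paper argues \emph{cut by cut}: for each of the $n-1$ single-edge bipartitions $\{0,\dots,k\}\,|\,\{k{+}1,\dots,n{-}1\}$ on the line, $\ket{W_n}$ has Schmidt rank~$2$ while $\ket{0}^{\otimes n}$ has rank~$1$, and a two-qubit gate can raise the rank across a cut only if it straddles that cut; since on a linear array each nearest-neighbour gate crosses exactly one such cut, all $n-1$ cuts require distinct gates. Your argument instead bounds the two-qubit count through the \emph{interaction graph}: fewer than $n-1$ gates force a disconnected graph, hence a factorization $U=V_A\otimes V_B$ and a product output across the induced cut, contradicting the Schmidt-rank-$2$ property across that (single) bipartition. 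What this buys you is connectivity-independence---your bound holds for arbitrary two-qubit couplings, not just nearest-neighbour---whereas the paper's per-cut count relies on the line geometry (a long-range gate would straddle several linear cuts at once, breaking the one-gate-per-cut accounting). Both arguments are sound for the stated theorem; yours is slightly more general, the paper's is more tied to the specific linear layout.
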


\begin{proof}
Label qubits $0,1,2,\dots,n-1$ on a line and, for each $k=0,\dots,n-1$, consider the bipartition
$\{0,\dots,k\}\,|\,\{k+1,\dots,n-1\}$.
For the uniform $W$-state $\ket{W_n}=\frac1{\sqrt n}\sum_{k=0}^{n-1}\ket{0\cdots 1_k\cdots0}$,
the Schmidt rank across \emph{every} such single-edge cut is $\mathrm{SR}_k(\ket{W_n})=2$.
The product state $\ket{0}^{\otimes n}$ has rank $1$ across every cut. A two-qubit gate that does \emph{not} cross a given cut cannot increase that cut’s rank, while a two-qubit gate that \emph{does} cross it can increase the rank by at most a factor of $2$. Thus each of the $n{-}1$ adjacency cuts must be crossed at least once by a two-qubit gate, so any preparation requires $\ge n{-}1$ two-qubit gates.  Alg. \ref{alg:wblock} gives a construction that achieves the bound.
\end{proof}

\begin{remark}[Hardware-friendly alternatives]
\label{rem:alt}
On platforms with native $\mathrm{iSWAP}^\alpha$ or $\mathrm{fSim}(\theta,\varphi)$, $U^{(k,k{+}1)}_{M}(\phi)$ is implementable in one entangling gate (up to single-qubit $Z$ phases). If $XX{+}YY$ is not native, one can realize the same two-level rotation on the
preparation path with the single 2Q primitive
\[
\tilde U(\theta)\;:=\;\mathrm{CX}_{\,k{+}1\to k}\;\cdot\;\mathrm{CRY}_{\,k\to k{+}1}(\theta)
\quad\text{with}\quad \theta=2\phi,
\]
which has the block action
$\ket{10}\mapsto \cos(\tfrac{\theta}{2})\,\ket{10}+\sin(\tfrac{\theta}{2})\,\ket{01}$ on
$\mathrm{span}\{\ket{10},\ket{01}\}$.
\end{remark}

\begin{algorithm}[H]
\caption{\texttt{OneHotBlockPrepare}(\(n\)) — exact single–excitation $W_n$}
\label{alg:wblock}
\begin{algorithmic}[1]
\Require number of qubits \(n\ge 2\) labelled \(0,\dots,n{-}1\)
\State For adjacent qubits \((k,k{+}1)\), let
\[
\mathrm{U}^{(k,k{+}1)}(\theta)\;:=\;\exp\!\Bigl[-\,i\,\tfrac{\theta}{2}\,\bigl(X_kX_{k+1}+Y_kY_{k+1}\bigr)\Bigr]
\]
\Comment{acts as a real Givens rotation on \(\mathrm{span}\{\ket{10},\ket{01}\}\)}
\Statex \hspace{\algorithmicindent}\emph{(Hardware alternative: }\(\mathrm{U}^{(k,k{+}1)}(\theta)=
\mathrm{CX}_{\,k{+}1\to k}\cdot \mathrm{CRY}_{\,k\to k{+}1}(\theta)\)\emph{)}
\State Apply \(X\) on qubit \(0\) \Comment{\(\ket{100\ldots0}\)}
\For{\(k \gets 0\) to \(n-2\)}
    \State \(\displaystyle \theta_k \gets 2\arccos\!\Bigl(\frac{1}{\sqrt{\,n-k\,}}\Bigr)\)
    \State Apply \(\mathrm{U}^{(k,k{+}1)}(\theta_k)\) \Comment{fixes amplitude on site \(k\) to \(1/\sqrt n\)}
\EndFor
\Ensure \(\displaystyle \ket{W_n} = \tfrac1{\sqrt n}\sum_{k=0}^{n-1}\ket{0\ldots1_k\ldots0}\)
\end{algorithmic}
\end{algorithm}

\begin{algorithm}[H]
\caption{\texttt{OneHotMultiBlockPrepare}(\(m\)) — tensor product of $m$ $W$-blocks}
\label{alg:product}
\begin{algorithmic}[1]
\Require block size $n$, number of blocks $m$
\For{$b \gets 0$ to $m-1$}
    \State Apply \textsc{OneHotBlockPrepare}$(n)$ on qubits
           $\;bn,\;bn{+}1,\ldots,(b{+}1)n-1$
\EndFor
\Ensure $\displaystyle
  \bigl(\ket{W_n}\bigr)^{\!\otimes m}
  \;=\;
  \frac{1}{\sqrt{n^m}}
  \sum_{\substack{k_0,\ldots,k_{m-1}\\\in\{0,\ldots,n-1\}}}
  \bigotimes_{b=0}^{m-1} \ket{0\ldots 1_{k_b} \ldots 0}$
\end{algorithmic}
\end{algorithm}

\begin{figure}[h]
\centering
\includegraphics[width=.89\linewidth]{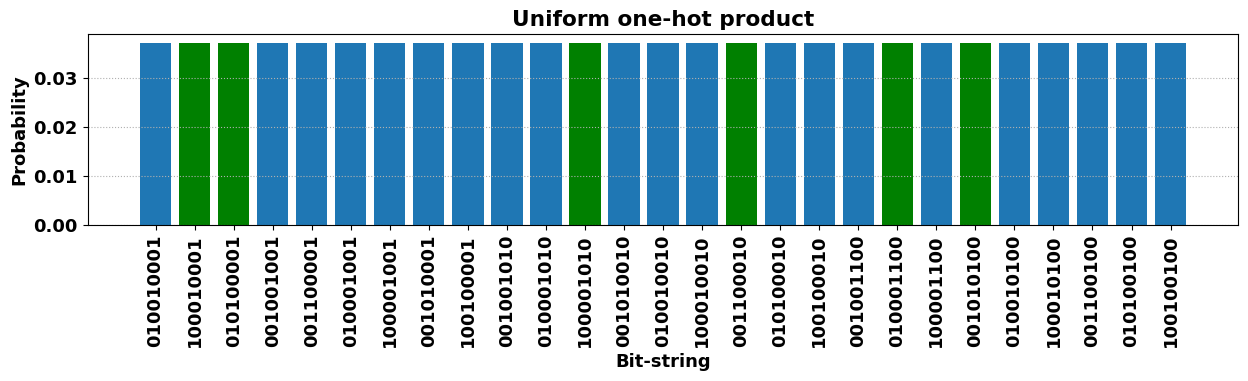}

\caption{Histogram produced in \textsc{Qiskit}\cite{Qiskit2023} for a TSP problem on 3 locations. Each of the \(3^{3}=27\) block–one-hot bit-strings carries probability
\(1/27\) verifying that the state–preparation algorithm populates all basis states with equal amplitude while leaving probability mass strictly inside the one-hot subspace. The six strings whose block indices form a permutation are coloured green.}
\label{fig:uniform-27}
\end{figure}

\noindent
It follows from well known results in quantum circuit simulation that the initial state prepared by Algorithm ~\ref{alg:product} is non-stabilizer for $n\ge3$\cite{bravyi_gosset_2016} because each block contains a $W_n$-state. We cite this result in the following proposition.
\begin{proposition}[${W_n}$ is non-Clifford\;\cite{bravyi_gosset_2016}]%
\label{prop:W_non_stabiliser1}
For $n\ge 3$ the $n$-qubit $W$ state
$\ket{W_n}=\frac{1}{\sqrt n}\sum_{k=0}^{n-1} \ket{e_k}$
is not a stabilizer state. In particular, it cannot be prepared from
$\ket{0}^{\otimes n}$ using Clifford operations alone. (The edge case $W_2=(\ket{01}+\ket{10})/\sqrt2$ is stabilizer.) Outside the stabilizer regime, strong simulation typically relies on non-Clifford techniques which are not known to run in polynomial time in general~\cite{bravyi_gosset_2016}.
\end{proposition}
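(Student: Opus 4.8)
The plan is to deduce non-stabilizerness from the rigid computational-basis structure that every pure stabilizer state must possess. Concretely, I will invoke the standard fact (Gottesman's stabilizer formalism / Dehaene--De Moor normal form) that a pure stabilizer state on $N$ qubits is supported on an affine subspace $A\subseteq\mathbb F_2^{N}$ and has amplitudes of constant modulus $|A|^{-1/2}$ on $A$, with phases given by a $\mathbb Z_4$-valued quadratic form; equivalently, the reduced density operator on any block of qubits is proportional to an orthogonal projector. Either formulation suffices to separate $\ket{W_n}$ from the set of stabilizer states.

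The main route is a support argument. The computational-basis support of $\ket{W_n}$ is $S=\{e_0,\dots,e_{n-1}\}$, the Hamming-weight-one strings of $\mathbb F_2^{n}$. Over $\mathbb F_2$ an affine subspace $v\oplus V$ is closed under the ternary map $(a,b,c)\mapsto a\oplus b\oplus c$, since $a\oplus b\oplus c = v\oplus\bigl((a\oplus v)\oplus(b\oplus v)\oplus(c\oplus v)\bigr)$ lies in $v\oplus V$ whenever $a,b,c\in v\oplus V$. For $n\ge 3$ choose three distinct indices $i,j,\ell$; then $e_i\oplus e_j\oplus e_\ell$ has Hamming weight $3$, so it is not any $e_k$, hence $e_i\oplus e_j\oplus e_\ell\notin S$. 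Therefore $S$ is not affine, so $\ket{W_n}$ cannot be a stabilizer state. I would record the complementary boundary fact as well: for $n=2$, $S=\{01,10\}=01\oplus\{00,11\}$ is a genuine affine line, consistent with $W_2$ being stabilizer.

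A second, independent argument I would give (which also isolates exactly why $n=2$ is special) uses marginals: the one-qubit reduced state of $\ket{W_n}$ is $\rho_1=\tfrac1n\ketbra{1}{1}+\tfrac{n-1}{n}\ketbra{0}{0}$, whose eigenvalues $\{1/n,(n-1)/n\}$ are distinct for $n\ge 3$, whereas marginals of pure stabilizer states are normalized projectors (all nonzero eigenvalues equal) — a contradiction; for $n=2$ both eigenvalues are $1/2$. Once $\ket{W_n}$ is shown to be non-stabilizer for $n\ge3$, the remaining claim is immediate: the Clifford group permutes stabilizer states and $\ket{0}^{\otimes n}$ is a stabilizer state, so no Clifford circuit starting from $\ket{0}^{\otimes n}$ can output $\ket{W_n}$, and the simulation remark follows because the Gottesman--Knill shortcut is then unavailable. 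There is essentially no technical obstacle here — the statement is classical; the only care needed is to cite the stabilizer normal form in an unambiguous form and to treat the $n=2$ edge case, which is precisely where the weight-one support happens to be affine.
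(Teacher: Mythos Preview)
Your proof is correct; both the affine-support argument and the marginal-eigenvalue argument are valid and cleanly handle the $n=2$ edge case. Note, however, that the paper does not actually supply a proof of this proposition: it is stated as a well-known fact with a citation to Bravyi--Gosset, so there is no ``paper's own proof'' to compare against. Your write-up therefore goes beyond what the paper provides, and either of your two arguments would serve as a self-contained justification; the support argument is perhaps the more direct one to keep if space is tight.
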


\noindent
It is thus reasonable to expect that our encoder already places the ansatz beyond the efficiently simulable stabilizer regime, strengthening the potential for quantum–classical separation discussed in Sections~\ref{sec:classical-competitiveness} and \ref{sec:discussion}. Fig.\ref{fig:uniform-27} shows the basis vectors forming a product of one-hot states in the reduced Hilbert space with bitstrings representing valid permutations marked in green. Fig. \ref{fig:full-Blockqaoa} is a single layer CE-QAOA circuit for a $12$ qubit problem with $3$ blocks\cite{Onahempdata}. \footnote{Data availability:
A minimal Python implementation and integration into standard QAOA wrapper in Qiskit is made available \url{https://doi.org/10.5281/zenodo.15725265}}%




\subsection{The XY Mixer Operator}
Recall that the task in QAOA+ is to construct, for the initial state, a suitable  mixer Hamiltonian such that the initial state is its $+1$ eigenstate\cite{tsvelikhovskiy2024symmetries, tsvelikhovskiy2024equivariant}. Thus, we require an XY mixer Hamiltonian that preserves the one--hot subspace. For each n--qubit block \(b\) we define  (up to constant prefactor) 
\begin{equation}
\label{eq:mixer}
H_M^{(b)} = \sum_{0\leq i < j \leq n-1}\Bigl( X_{i}^{(b)}X_{j}^{(b)}+Y_{i}^{(b)}Y_{j}^{(b)}\Bigr),
\end{equation}

where \(X_i^{(b)}\) and \(Y_i^{(b)}\) denote the Pauli \(X\) and \(Y\) operators acting on the \(i\)th qubit in block \(b\). The overall mixer unitary is given by
\begin{equation}
U_M(\beta) =  U_M^{(0)}(\beta) \otimes \dots \otimes U_M^{(m-1)}(\beta).
\end{equation}


Where  \(
U_M^{(b)}(\beta) = e^{-i\beta\, H_M^{(b)}}
\) for each block. The construction is outlined in Algorithm~\ref{alg:mixer}. See Fig. \ref{fig:full-Blockqaoa} for the blockwise construction depicted in a full circuit.  


\medskip
\noindent
In terms of the dynamics, the  XY Mixer continuously mixes the valid one--hot basis states among themselves thanks to its adjacency and ergodicity properties on the  $\mathcal H_1$ subspace.  To expose these properties, it is helpful to rewrite the mixer Hamiltonian in a form that is manifestly 2-local and preserves the excitation number. The mixer Hamiltonian is:

\begin{equation}
\label{eq:mixerb}
    H_{M}^{(b)}   \;:=\;
    \sum_{0\le i<j\le n-1}\!\bigl(X_{i}^{(b)}X_{j}^{(b)}+Y_{i}^{(b)}Y_{j}^{(b)})
    \;=\;
    \sum_{i<j}\!\bigl(\sigma_{bi}^{+}\sigma_{bj}^{-}
                   + \sigma_{bi}^{-}\sigma_{bj}^{+}\bigr),
\end{equation}
Restricting attention to the
one-excitation subspace
\(
   \mathcal H_{1}
   = \operatorname{span}\{\ket{e_0},\dots,\ket{e_{n-1}}\},
\)  and normalising by $n$ gives
\(
   \widetilde H_{M}^{(b)} :=H_{M}^{(b)} /n
\)
with \emph{constant} spectral gap. (cf. App \ref{app:two-local}).

\begin{algorithm}[H]
  \caption{\textsc{BlockXYMixer}\,$(n,m,\beta)$ — two-local mixer for the one-hot subspace}
  \label{alg:mixer}
  \begin{algorithmic}[1]
    \Require Block size $n$, number of blocks $m$, mixer angle $\beta$
    \For{$b \gets 0$ \textbf{to} $m-1$}
      \State \emph{iterate over blocks}
      \ForAll{$(i,j)$ \textbf{with} $0 \le i < j < n-1$}
        \State \emph{apply two-local mixer on qubit pair $(i,j)$ of block $b$}
        \State $\operatorname{RXX}_{bn+i,bn+j}\!\bigl(2\beta\bigr)$
        \State $\operatorname{RYY}_{bn+i,bn+j}\!\bigl(2\beta\bigr)$
      \EndFor
    \EndFor
    \Ensure
      \[
        U_M(\beta)
        = \exp\!\Bigl[-i\beta
          \sum_{b=0}^{m-1}
          \sum_{0\le i<j<n-1}
          \left(
      \frac{1}{n-1}\sum_{0\le j<k\le n-1}(X_j^{(b)}X_k^{(b)}+Y_j^{(b)}Y_k^{(b)})\right)
        \Bigr]
      \]
  \end{algorithmic}
\end{algorithm}

\begin{proposition}[Spectral gap of one-block XY mixer]
\label{prop:spectral-gap}
On $\mathcal H_{1}$ the operator $H_{M}^{(b)} $ acts as the adjacency matrix
$A(K_{n})$ of the complete graph on $n$ vertices and has spectrum
\[
   \operatorname{spec}\bigl(H_{M}^{(b)} \!\upharpoonright_{\mathcal H_{1}}\bigr)
   = \bigl\{\,n-1,\;\underbrace{-1,\dots,-1}_{n-1\text{ times}}\bigr\}.
\]
Hence the spectral gap is $\Delta(H_{M}^{(b)} ) = n$.  
\(
   \widetilde H_{M}^{(b)} =H_{M}^{(b)} /n
\)
has constant gap
\(
   \Delta(\widetilde H_{M}^{(b)} ) = 1.
\)
\end{proposition}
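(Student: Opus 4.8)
The plan is to prove the statement in three steps: first, show that the one–excitation subspace $\mathcal{H}_1$ of a single block is invariant under the block mixer $H_M^{(b)}$ and identify the restricted operator with the adjacency matrix $A(K_n)$; second, diagonalize $A(K_n)$ explicitly; and third, read off the gap and its normalization. Throughout, I read the statement as a claim about one block, the full mixer being a tensor product $\bigotimes_b H_M^{(b)}$ of copies of this operator.

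For the first step I would pass to the hopping form of Eq.~\eqref{eq:mixerb}, $H_M^{(b)} = \sum_{0\le i<j\le n-1}\bigl(\sigma_{bi}^{+}\sigma_{bj}^{-}+\sigma_{bi}^{-}\sigma_{bj}^{+}\bigr)$, and note that every summand commutes with the block occupation number $N_b=\sum_i \sigma_{bi}^{+}\sigma_{bi}^{-}$, so each excitation sector, in particular $\mathcal{H}_1=\operatorname{span}\{\ket{e_0},\dots,\ket{e_{n-1}}\}$, is preserved. I would then evaluate the action on a basis vector $\ket{e_k}$: for a pair $(i,j)$ the operator $\sigma_i^{+}\sigma_j^{-}+\sigma_i^{-}\sigma_j^{+}$ annihilates $\ket{e_k}$ unless $k\in\{i,j\}$, in which case it transports the single excitation to the partner site, producing $\ket{e_j}$ (if $k=i$) or $\ket{e_i}$ (if $k=j$). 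Summing over all $\binom n2$ pairs gives $H_M^{(b)}\ket{e_k}=\sum_{l\neq k}\ket{e_l}$, which is precisely the action of the adjacency matrix $A(K_n)$ in the ordered basis $\{\ket{e_l}\}$ (under the normalization of $XX{+}YY$ fixed in Eq.~\eqref{eq:mixerb}; the alternative convention of Remark~\ref{rem:alt} contributes only an overall factor of $2$, which is harmless and absorbed into $\widetilde H_M$).

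For the second step I would use $A(K_n)=J_n-I_n$, with $J_n$ the all-ones $n\times n$ matrix. Since $J_n$ has eigenvalue $n$ on the uniform vector $\tfrac{1}{\sqrt n}\sum_l\ket{e_l}$ — which is exactly the single-block $W_n$ state $\ket{s}$, hence the intended mixer top eigenvector and initial state — and eigenvalue $0$ on the $(n-1)$-dimensional orthogonal complement, $A(K_n)$ has the nondegenerate eigenvalue $n-1$ together with the $(n-1)$-fold degenerate eigenvalue $-1$, matching the claimed spectrum.

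Finally, the spectral gap is the distance from the largest eigenvalue to the next distinct one, $(n-1)-(-1)=n$, so $\Delta(H_M)=n$; rescaling, $\widetilde H_M = H_M/n$ has $\Delta(\widetilde H_M)=1$, independent of $n$. I do not expect a substantive obstacle — the content is elementary linear algebra once the hopping form is in hand — and the only care needed is the $\sigma^{\pm}$ bookkeeping in the three cases $k\notin\{i,j\}$, $k=i$, $k=j$, and being explicit that $\mathcal{H}_1$ here denotes the one–excitation sector of a single block.
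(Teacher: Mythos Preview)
Your proposal is correct and follows essentially the same approach as the paper: compute the action of the hopping terms on the one-excitation basis to identify $H_M\!\upharpoonright_{\mathcal H_1}$ with $A(K_n)$, then read off the spectrum and gap. The only cosmetic difference is that you derive $\operatorname{spec}(A(K_n))$ explicitly via $A(K_n)=J_n-I_n$, whereas the paper simply cites it as well known from spectral graph theory; your version is slightly more self-contained but otherwise identical in structure.
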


\begin{proof}

Acting on a basis state with the excitation at site~$i$ with Eq. \ref{eq:mixerb} (See App. \ref{app:two-local}),
\[
   (X_iX_j+Y_iY_j)\ket{e_i}
     \;=\;
     \sigma_i^{-}\sigma_j^{+}\ket{e_i}
     \;=\;
     \ket{e_j},
\]
while the same operator annihilates $\ket{e_k}$ for $k\notin\{i,j\}$.
Summing over all unordered pairs $(i,j)$ therefore maps
\(
   \ket{e_i}\mapsto\sum_{j\ne i}\ket{e_j},
\)
which is exactly the action of the adjacency matrix $A(K_n)$ on the
standard vertex basis.  The spectrum of $A(K_n)$ is well known from spectral graph theory~\cite{GodsilRoyle2001}. Largest eigenvalue $\lambda_{\max}=n-1$ (eigenvector
$\frac1{\sqrt n}\sum_{k}\ket{e_k}$) and the remaining $n-1$ eigenvalues equal $-1$.  Hence $\Delta(H_{M}^{(b)} ) = (n-1)-(-1)=n$.  Dividing by $n$ rescales both extremal eigenvalues by the same factor, so $\Delta(\widetilde H_{M}^{(b)} )=1$.
The claimed \emph{constant} spectral gap and $K_n$ denotes its complete adjacency graph. 
\end{proof}

\begin{proposition}[Ergodicity of the angle-averaged XY mixer on $\mathcal H_1$]
\label{prop:xy-ergodicity}
Consider a single $n$-qubit one-hot block with one-excitation sector
$\mathcal H_1=\mathrm{span}\{\ket{e_0},\dots,\ket{e_{n-1}}\}$.
Let $H_{M}^{(b)} \!\upharpoonright_{\mathcal H_1}=A(K_n)$ be the restriction of the all-to-all unnomalised XY Hamiltonian to $\mathcal H_1$ (equivalently, the adjacency matrix of the complete graph on $n$ vertices up to an overall scalar).
For $\beta\in\mathbb R$ define $U(\beta):=e^{-i\beta H_{M}^{(b)} }$ and the transition matrix
\[
P_{ij} := \int_{0}^{2\pi} \frac{\mathrm{d}\beta}{2\pi}\,
\left|\langle e_j \,|\, U(\beta) \,|\, e_i \rangle\right|^{2},
\qquad 0 \le i,j \le n-1.
\]

Then:
\begin{enumerate}
    \item $P$ is \emph{primitive} (all entries are strictly positive), hence the associated Markov chain is \emph{ergodic} (irreducible and aperiodic).
    \item $P$ is \emph{doubly stochastic}, and its unique stationary distribution is the uniform distribution $\pi^\star=(1/n,\dots,1/n)$.
    \item Explicitly,
    \[
    P_{ii}\;=\;1-\frac{2}{n}+\frac{2}{n^2},
    \qquad
    P_{ij}\;=\;\frac{2}{n^2}\quad (j\neq i),
    \]
    so that $P^t\to \mathbf 1\,\pi^{\star\!\top}$ as $t\to\infty$.
\end{enumerate}
These conclusions are invariant under any nonzero rescaling $H_{M}^{(b)} \mapsto c\,H_{M}^{(b)} $, $c\in\mathbb R\setminus\{0\}$.
\end{proposition}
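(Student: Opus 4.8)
The plan is to diagonalize $H_{M}\!\upharpoonright_{\mathcal H_1}=A(K_n)$ in closed form and carry the resulting expression through the $\beta$-average. Working in the basis $\{\ket{e_0},\dots,\ket{e_{n-1}}\}$, write $A(K_n)=J-I$ with $J$ the all-ones matrix. Since $J=n\,\ket{s}\!\bra{s}$ for $\ket{s}=\tfrac1{\sqrt n}\sum_k\ket{e_k}$ (rank one: eigenvalue $n$ on $\ket{s}$, eigenvalue $0$ on $\ket{s}^{\perp}$), the functional calculus gives
\[
U(\beta)=e^{-i\beta(J-I)}=e^{i\beta}\Bigl[\,I+(e^{-i\beta n}-1)\,\ket{s}\!\bra{s}\,\Bigr],
\]
so that $\langle e_j|U(\beta)|e_i\rangle=e^{i\beta}\bigl[\delta_{ij}+\tfrac1n(e^{-i\beta n}-1)\bigr]$. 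The global phase $e^{i\beta}$ cancels in $|\langle e_j|U(\beta)|e_i\rangle|^2$, leaving a trigonometric polynomial in $\beta$ with frequencies $0$ and $\pm n$ only.

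Next I would integrate term by term using $\tfrac1{2\pi}\int_0^{2\pi}e^{ik\beta}\,d\beta=\delta_{k,0}$ for integer $k$. For $i=j$, $|\langle e_i|U(\beta)|e_i\rangle|^2=\tfrac1{n^2}\bigl((n-1)^2+1+2(n-1)\cos n\beta\bigr)$, whose average is $P_{ii}=\tfrac{(n-1)^2+1}{n^2}=1-\tfrac2n+\tfrac2{n^2}$; for $i\neq j$, $|\langle e_j|U(\beta)|e_i\rangle|^2=\tfrac1{n^2}\bigl(2-2\cos n\beta\bigr)$, with average $P_{ij}=\tfrac2{n^2}$. This gives item~(3). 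Strict positivity of every entry is then immediate (for $n\ge 3$ from the formulas, and for $n=2$ one has $P_{ii}=P_{ij}=\tfrac12$), so $P$ is primitive, hence irreducible and aperiodic — item~(1).

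For item~(2) I would obtain double stochasticity directly from unitarity: for each fixed $\beta$, $\sum_j|\langle e_j|U(\beta)|e_i\rangle|^2=\sum_i|\langle e_j|U(\beta)|e_i\rangle|^2=1$ since the rows and columns of the unitary $U(\beta)$ are unit vectors, and these identities survive the average; thus $\pi^\star=(1/n,\dots,1/n)$ is stationary, and uniqueness follows from irreducibility (Perron–Frobenius). The convergence claim then follows from the explicit form $P=(1-\tfrac2n)\,I+\tfrac2{n^2}J=(1-\tfrac2n)\bigl(I-\tfrac1nJ\bigr)+\tfrac1nJ$: since $I-\tfrac1nJ$ and $\tfrac1nJ$ are complementary projectors, $P^t=(1-\tfrac2n)^t\bigl(I-\tfrac1nJ\bigr)+\tfrac1nJ\to\tfrac1nJ=\mathbf 1\,\pi^{\star\top}$, using $|1-\tfrac2n|<1$ for $n\ge 3$ (the subdominant term vanishing identically at $n=2$). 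Finally, under $H_M\mapsto cH_M$ with $c\neq 0$ the only change is $n\beta\mapsto cn\beta$ inside the cosines; the global phase still cancels, and $\bigl|\tfrac1{2\pi}\int_0^{2\pi}\cos(cn\beta)\,d\beta\bigr|<1$ (equal to $0$ exactly when $cn\in\mathbb Z\setminus\{0\}$, recovering the formulas verbatim), so strict positivity and double stochasticity — and hence ergodicity and the uniform stationary law — persist.

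The computation is routine; the one place needing genuine care is the rescaling clause, where for $c$ with $cn\notin\mathbb Z$ the precise constants in~(3) shift and one must substitute the bound $\bigl|\tfrac1{2\pi}\int_0^{2\pi}\cos(cn\beta)\,d\beta\bigr|<1$ for the clean orthogonality identity in order to keep strict positivity (hence primitivity) intact. The only other item to verify separately is the $n=2$ edge case, where the subdominant eigenvalue of $P$ is exactly $0$, so mixing is achieved in a single step.
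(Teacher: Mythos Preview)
Your proof is correct and follows essentially the same route as the paper's: diagonalize $A(K_n)$ via the uniform state $\ket{s}$, compute the matrix elements of $U(\beta)$ in closed form, and average over $\beta$ using Fourier orthogonality to obtain the explicit $P_{ij}$. Your treatment of the rescaling clause is in fact more careful than the paper's --- you correctly note that for $cn\notin\mathbb Z$ the exact constants in item~(3) shift while items~(1)--(2) survive via the strict inequality $\bigl|\tfrac1{2\pi}\int_0^{2\pi}\cos(cn\beta)\,d\beta\bigr|<1$, whereas the paper simply asserts that ``the averaged cross terms still vanish and the same $P$ formula holds'' after any rescaling.
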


\begin{proof}
On $\mathcal H_1$, $H_{M}^{(b)} $ has spectral decomposition
\(
H_{M}^{(b)} =(n-1)\Pi_s+(-1)\Pi_\perp
\),
where $\Pi_s=\ket{s}\!\bra{s}$ with $\ket{s}=\tfrac{1}{\sqrt n}\sum_{k=0}^{n-1}\ket{e_k}$ and $\Pi_\perp=I-\Pi_s$.
Hence
\[
U(\beta)=e^{-i\beta(n-1)}\Pi_s \;+\; e^{+i\beta}\Pi_\perp.
\]
For basis states,
\(
\langle e_j \,|\, U(\beta) \,|\, e_i \rangle
= e^{-i\beta(n-1)}\frac{1}{n} + e^{i\beta}\big(\delta_{ij}-\tfrac{1}{n}\big).
\)

Averaging $\bigl|\cdot\bigr|^2$ over $\beta\in[0,2\pi)$ removes cross terms (orthogonal Fourier modes), giving
\[
P_{ij}
=\frac{1}{n^2}+\Bigl(\delta_{ij}-\frac{1}{n}\Bigr)^2
=
\begin{cases}
1-\dfrac{2}{n}+\dfrac{2}{n^2}, & j=i,\\[6pt]
\dfrac{2}{n^2}, & j\neq i.
\end{cases}
\]
All entries are strictly positive, so $P$ is primitive (hence ergodic). Row and column sums are $1$ by symmetry, so $P$ is doubly stochastic and the uniform distribution is the unique stationary distribution. Convergence $P^t\to \mathbf 1\,\pi^{\star\!\top}$ follows from primitivity. If $H_{M}^{(b)} $ is rescaled by a $c > 0$, the averaged cross terms still vanish and the same $P$ formula holds. In particular it holds for $\widetilde H_{M}^{(b)} $.
\end{proof}

Local unitary designs make extensive use of qudit language. Since t-designs are central to our methods, following Ref. \cite{Hadfield2019AOA}, we formulate the qudit \(\leftrightarrow\)qubit isometry below.

\begin{definition}[Encoding isometry (\emph{qubit} \(\leftrightarrow\) \emph{qudit})]
\label{def:isometry}
Let \(\mathbb C^n\) be the abstract qudit space with basis \(\{\ket{j}\}_{j=0}^{n-1}\).
Define the isometry
\[
V:\ \mathbb C^n \longrightarrow (\mathbb C^2)^{\otimes n},
\qquad
V\ket{j}\;=\;\ket{e_j}.
\]
Then \(V^\dagger V=I_n\) and \(VV^\dagger\) is the projector onto \(\mathcal H_1\).
\end{definition}

\begin{proposition}[Invariance and quditization of the block–XY mixer]
\label{prop:quditization}
Let
\(
H_{M}^{\mathrm{(b)}}=\sum_{0\le i<j\le n-1}\bigl(X_i^{(b)}X_j^{(b)}+Y_i^{(b)}Y_j^{(b)}\bigr)
=\sum_{i\neq j}\sigma_{bi}^{-}\sigma_{bj}^{+}
\)
on the \(n\) qubits of a block.
Then:
\begin{enumerate}[leftmargin=1.5em]
\item \(\mathcal H_1\) is invariant under \(H_{M}^{\mathrm{(b)}}\) and \(U^{\mathrm{(b)}}(\beta):=e^{-i\beta H_{M}^{\mathrm{(b)}}}\).
\item In the encoded qudit picture,
\[
V^\dagger\,H_{M}^{\mathrm{(b)}}\,V
\;=\;
\sum_{i\neq j}\ket{i}\!\bra{j}
\;=\; A(K_n),
\]
the adjacency matrix of the complete graph \(K_n\). Consequently,
\(
U^{\mathrm{(b)}}(\beta)
= V\,e^{-i\beta\,A(K_n)}\,V^\dagger
\)
on \(\mathcal H_1\).
\end{enumerate}
\end{proposition}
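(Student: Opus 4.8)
The plan is to establish the two claims in order: part (1) from the excitation-number (Hamming-weight) conservation of $H_{M}^{\mathrm{(b)}}$, and part (2) from a one-line computation on the single-excitation basis together with a functional-calculus step for the unitary identity.

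For part (1), I would first rewrite the block mixer in hopping form using the identity $X_iX_j+Y_iY_j=2(\sigma_i^{+}\sigma_j^{-}+\sigma_i^{-}\sigma_j^{+})$ from Appendix~\ref{app:two-local}, so that (up to the stated normalization) $H_{M}^{\mathrm{(b)}}=\sum_{i\neq j}\sigma_{bi}^{+}\sigma_{bj}^{-}$. Each term removes an excitation at site $j$ and creates one at site $i$, hence commutes with the block number operator $N^{\mathrm{(b)}}=\sum_{i}\sigma_{bi}^{+}\sigma_{bi}^{-}$; since $\mathcal H_1$ is exactly the eigenspace of $N^{\mathrm{(b)}}$ with eigenvalue $1$ and $[H_{M}^{\mathrm{(b)}},N^{\mathrm{(b)}}]=0$, the subspace $\mathcal H_1$ is $H_{M}^{\mathrm{(b)}}$-invariant. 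Invariance under $U^{\mathrm{(b)}}(\beta)=e^{-i\beta H_{M}^{\mathrm{(b)}}}$ then follows because $U^{\mathrm{(b)}}(\beta)$ is a norm-limit of polynomials in $H_{M}^{\mathrm{(b)}}$, each of which maps $\mathcal H_1$ into itself.

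For part (2), I would evaluate $H_{M}^{\mathrm{(b)}}$ on $\ket{e_j}$: the term $\sigma_{bi}^{+}\sigma_{bj'}^{-}$ sends $\ket{e_j}$ to $\ket{e_i}$ when $j'=j$ (with $i\neq j$) and annihilates it otherwise, so $H_{M}^{\mathrm{(b)}}\ket{e_j}=\sum_{i\neq j}\ket{e_i}$. Applying $V^\dagger$ and using $V\ket{j}=\ket{e_j}$, $V^\dagger\ket{e_i}=\ket{i}$ (from $V^\dagger V=I_n$, Def.~\ref{def:isometry}) gives $V^\dagger H_{M}^{\mathrm{(b)}}V\ket{j}=\sum_{i\neq j}\ket{i}=A(K_n)\ket{j}$, which is the first assertion. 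For the unitary identity, note that part (1) makes $H_{M}^{\mathrm{(b)}}$ block-diagonal with respect to $\mathcal H_1\oplus\mathcal H_1^{\perp}$, and its restriction to $\mathcal H_1$, under the identification $\mathcal H_1\cong\mathbb C^n$ via $V$, is precisely $A(K_n)$; equivalently $\Pi_1 H_{M}^{\mathrm{(b)}}\Pi_1=V A(K_n)V^\dagger$ with $\Pi_1=VV^\dagger$. Since $V^\dagger V=I_n$ implies $\bigl(VA(K_n)V^\dagger\bigr)^{k}=V A(K_n)^{k}V^\dagger$ on $\mathcal H_1$ for all $k\ge 0$, the exponential power series yields $e^{-i\beta H_{M}^{\mathrm{(b)}}}\big|_{\mathcal H_1}=V e^{-i\beta A(K_n)}V^\dagger$, as claimed.

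I do not expect a substantive obstacle here; the only care needed is convention bookkeeping—fixing that $\ket{e_j}$ carries the excitation as a $\ket{1}$ at position $j$ and that $\sigma^{\pm}$ is defined so that the raising/lowering roles in $H_{M}^{\mathrm{(b)}}=\sum_{i\neq j}\sigma_i^{+}\sigma_j^{-}$ reproduce the hopping $\ket{e_j}\mapsto\ket{e_i}$, which is the convention of Appendix~\ref{app:two-local}—and spelling out that ``invariant subspace'' lets one pass the exponential inside $V(\cdot)V^\dagger$. In effect this proposition is the isometry-language restatement of the spectral computation already carried out in Prop.~\ref{prop:spectral-gap}.
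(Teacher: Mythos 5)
Your proposal is correct and follows essentially the same route as the paper: act with the ladder (hopping) terms on the single-excitation basis to obtain invariance of $\mathcal H_1$ and the identification $V^\dagger H_{M}^{\mathrm{(b)}} V = A(K_n)$ via $V\ket{j}=\ket{e_j}$. The only differences are cosmetic—you phrase invariance through commutation with the block number operator (a fact the paper's appendix also records) and you spell out the power-series step $\bigl(VA(K_n)V^\dagger\bigr)^{k}=V A(K_n)^{k}V^\dagger$ behind the exponential identity, which the paper leaves implicit.
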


\begin{proof}
For any single–excitation basis vector \(\ket{e_j}\),
\(
\sigma_i^{-}\sigma_j^{+}\ket{e_j}=\ket{e_i}
\)
and \(\sigma_i^{-}\sigma_j^{+}\ket{e_k}=0\) if \(k\notin\{i,j\}\).
Thus \(H_{M}^{\mathrm{(b)}}\) maps \(\ket{e_j}\) to a superposition of \(\{\ket{e_i}\}\),
so \(\mathcal H_1\) and \(U^{\mathrm{(b)}}(\beta)\) are invariant.
Moreover,
\(
V^\dagger \sigma_i^{-}\sigma_j^{+} V
= \ket{i}\!\bra{j},
\quad(i\neq j),
\)
hence the claimed identification with \(A(K_n)\).
\end{proof}

\subsection{Constraint-Enhanced QAOA Protocol}
\label{sec:CE-QAOA }

The next key ingredient is the cost Hamiltonian \(H_C\) that represents the optimization problem and constraints. This is usually the Ising Hamiltonian corresponding to the problem and thus diagonal in the computational basis. The full circuit with \(p\) layers alternates between the cost and mixer unitaries a set of parameters (respectively) \(\vec{\gamma}=(\gamma_1,\dots,\gamma_p)\) and \(\vec{\beta}=(\beta_1,\dots,\beta_p)\). Algebraically, the circuit is represented by the following:

\begin{equation}
|\psi_p(\vec{\gamma},\vec{\beta})\rangle = \left(\prod_{l=1}^{p} e^{-i\beta_l H_M} e^{-i\gamma_l H_C}\right)|s_{\text{0}}\rangle.
\end{equation}
The Alg. \ref{alg:product} prepares $|s_{\text{0}}\rangle$ and initializes the quantum dynamics in the constrained fixed–Hamming–weight space
\(
\OH \;=\; (\mathcal H_1)^{\otimes m} \). 
Because the cost Hamiltonian \(H_C\) is diagonal in the computational basis, we have \(U_C(\gamma)|x\rangle \in \OH\) for all product basis states \(\lvert x\rangle \in \OH\) and it never mixes amplitudes or leaks outside the one-hot subspace.  The overall depth increase from our construction, if any, is minimal. The following holds.

\begin{corollary}[Block size limits circuit depth]
\label{corr:fixed_depth}
Let the block-structured QAOA ansatz consist of $m$ disjoint blocks, each acting on $n$ qubits. Then, under the assumption of parallel execution across blocks, the circuit depth of the full initial state preparation unitary is $\mathcal{D}_\mathrm{prep} = \mathcal{D}_{\mathrm{block\text{-}prep}}$, and the depth of the full mixer unitary is $\mathcal{D}_\mathrm{mix} = \mathcal{D}_{\mathrm{block\text{-}mix}}$. Hence, the overall depth overhead is independent of the number of blocks $m$.
\end{corollary}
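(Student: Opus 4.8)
The plan is to argue entirely at the level of the circuit's gate–dependency directed acyclic graph (DAG), using the single structural fact that the $m$ blocks act on pairwise disjoint qubit registers. First I would fix the depth model: for a circuit presented as an ordered list of gates, form the DAG whose vertices are the gates and whose edges join two gates whenever they share at least one qubit and one precedes the other; the depth $\mathcal D$ is the number of vertices on a longest directed path, equivalently the minimal number of layers in any valid parallel schedule. This is the standard notion, and it is precisely what ``parallel execution across blocks'' in the hypothesis refers to.

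Next I would record the disjointness input directly from Algorithms~\ref{alg:wblock}, \ref{alg:product} and~\ref{alg:mixer}: block $b$ touches only qubits $\{bn,\dots,(b{+}1)n{-}1\}$, and these index sets are pairwise disjoint for $b=0,\dots,m{-}1$. Hence any gate of block $b$ and any gate of block $b'\neq b$ have disjoint supports; in particular they share no qubit, so there is no edge between them in the dependency DAG (and a fortiori they commute). It follows that the DAG of the full preparation unitary $U_{\mathrm{prep}}=\bigotimes_b U_{\mathrm{block\text{-}prep}}$ is the disjoint union of $m$ identical copies of the single-block DAG, and likewise the DAG of $U_M=\bigotimes_b U_M^{(b)}$ is a disjoint union of $m$ copies of the one-block mixer DAG.

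Then I would invoke the elementary fact that a longest directed path in a disjoint union of DAGs lies entirely in one connected component, so its length equals the maximum over components of the per-component longest-path length. Since all $m$ components are identical copies of the single-block circuit, this maximum equals $\mathcal D_{\mathrm{block\text{-}prep}}$ for the preparation stage and $\mathcal D_{\mathrm{block\text{-}mix}}$ for the mixer stage, with no dependence on $m$. Applying this to both stages gives $\mathcal D_{\mathrm{prep}}=\mathcal D_{\mathrm{block\text{-}prep}}$ and $\mathcal D_{\mathrm{mix}}=\mathcal D_{\mathrm{block\text{-}mix}}$; since these are the only layers CE--QAOA inserts relative to a plain QAOA layer (the diagonal cost unitary $U_C$ is unchanged), the total depth overhead is the $m$-independent constant $\mathcal D_{\mathrm{block\text{-}prep}}+p\,\mathcal D_{\mathrm{block\text{-}mix}}$.

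There is essentially no hard step: the statement is a scheduling observation, and the only points requiring care are stating the depth model precisely and confirming from the pseudocode that no gate in the preparation or mixer stage ever straddles two blocks — true by construction, since every two-qubit gate acts either on an adjacent pair $(k,k{+}1)$ or on a pair $(i,j)$ lying strictly inside one block. I would note explicitly that this is why the corollary is deliberately restricted to $U_{\mathrm{prep}}$ and $U_M$: if depth were measured including $U_C$, distinct blocks could couple through the problem Hamiltonian $H_C$ and the $m$-independence would no longer hold, whereas the preparation and mixer unitaries are block-factorised by design.
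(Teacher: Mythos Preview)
Your proposal is correct. The paper actually states this corollary without proof, treating it as immediate from the block-tensor structure (Algorithms~\ref{alg:product} and~\ref{alg:mixer}) and the parallel-execution assumption. Your DAG formalization is more explicit than anything in the paper: you fix a precise depth model, verify from the pseudocode that no gate straddles two blocks, and reduce the claim to the longest-path-in-a-disjoint-union observation. This is exactly the right level of care for a scheduling statement the authors consider self-evident, and your closing remark about why $U_C$ is excluded is a useful clarification the paper leaves implicit.
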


\noindent
\textbf{Resource savings from co-design.}
Suprisingly, the co-designed restriction to the one-hot manifold can \emph{reduce} quantum resources relative to the standard QAOA formulation\cite{Farhi2014QAOA}. For permutation-constrained problems (e.g., TSP), feasibility imposes a double one-hot structure (row and column constraints) on the permutation matrix. In CE--QAOA, the chosen initial state and block-XY mixer preserve the row-wise one-hot condition by construction, so those constraint terms are redundant and can be dropped from the diagonal phase operator; only the complementary (e.g., column-wise) constraints need to be enforced. This eliminates an entire family of penalty interactions and yields an \(O(n)\) reduction in the depth (and two-qubit gate count) of the phase-separation unitary compared to standard QAOA implementations. Corollary~\ref{corr:fixed_depth} already shows that the additional overhead from the proposed state preparation and mixing does not scale with the number of blocks \(m\) under parallel execution, so the net effect can be a \emph{strict} depth reduction. For example, the \(O(n)\) savings from removing redundant penalties can dominate the \(O(n)\) preparation cost on architectures with native (or efficiently compiled) excitation-preserving rotations leading to an overall circuit depth smaller than that of a generic QAOA.

\medskip
\noindent
Crucially, these resource savings do not come from limiting the reachable set of feasible states. On the contrary, in the next proposition we show that the same block-local, constraint-preserving operations are controllable (and approximately universal) \emph{within each block}.


\begin{proposition}[Controllability and approximate universality on $\mathcal H_1$]
\label{prop:controllability-universality}
Fix a single $n$-qubit one-hot block and the operator set
\[
\mathcal G \;:=\; \bigl\{\,H_{ij}^{(b)}=X_i^{(b)}X_j^{(b)}+Y_i^{(b)}Y_j^{(b)} \,:\, 0\le i<j\le n-1,0\le b\le m-1\,\bigr\}\;\cup\;\bigl\{\,H_Z^{(1)}\,\bigr\},
\]
where $H_Z^{(1)}$ is any diagonal excitation-preserving term whose restriction to $\mathcal H_1$ is \emph{not} proportional to the identity.
Restricted to the one-excitation sector $\mathcal H_1=\mathrm{span}\{\ket{e_0},\dots,\ket{e_{n-1}}\}$, the dynamical Lie algebra generated by $i\mathcal G$ is $\mathfrak{su}(n)$.
Consequently, for any $V\in \mathrm{SU}(n)$ and any $\varepsilon>0$ there exist angles and a finite product $\prod_{\ell=1}^{L} e^{-i\theta_\ell H_{\alpha_\ell}}$ with $H_{\alpha_\ell}\in\mathcal G$ such that
\[
\Bigl\|\, V - \prod_{\ell=1}^{L} e^{-i\theta_\ell H_{\alpha_\ell}} \,\Bigr\| \;\le\; \varepsilon.
\]
\end{proposition}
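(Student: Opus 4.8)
The plan is to prove controllability by showing that the restricted generators $\{H_{ij}\!\upharpoonright_{\mathcal H_1}\}$ together with one diagonal traceless generator close, under iterated commutators, onto all of $\mathfrak{su}(n)$; the approximate universality statement then follows immediately from the standard Lie-algebraic universality theorem (Lloyd / Ramakrishna--Rabitz type), which says that if the dynamical Lie algebra generated by $i\mathcal G$ is the full $\mathfrak{su}(n)$, then finite products of the flows $e^{-i\theta H_\alpha}$ are dense in $\mathrm{SU}(n)$. So the only real content is the Lie-algebra generation claim, and I would state the universality corollary as a black-box citation.

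First I would record, using Proposition~\ref{prop:quditization} and the encoding isometry $V$ of Definition~\ref{def:isometry}, that $V^\dagger H_{ij}^{(b)} V = \ket{i}\!\bra{j} + \ket{j}\!\bra{i}$ on $\mathcal H_1$ — i.e.\ each mixer term restricts to a \emph{real symmetric} generator $S_{ij} := E_{ij}+E_{ji}$ in the qudit picture (here $E_{ij}=\ket{i}\!\bra{j}$). Next I would compute commutators among these: $[S_{ij},S_{jk}] = E_{ik}-E_{ki} + (\text{terms})$; more carefully, $[E_{ij}+E_{ji},\,E_{jk}+E_{kj}] = E_{ik} - E_{ki}$ for distinct $i,j,k$, which is the antisymmetric generator $A_{ik}$. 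Thus from the symmetric off-diagonal generators alone I already get all antisymmetric off-diagonal generators $A_{ik}$ (using $n\ge 3$ so that a third index $j$ exists). Then $[S_{ij},A_{ij}] = [E_{ij}+E_{ji},\,E_{ij}-E_{ji}] = -2(E_{ii}-E_{jj})$, producing the traceless diagonal generator $H_{ij}^{\mathrm{diag}} := E_{ii}-E_{jj}$. The span of $\{S_{ij},A_{ij}: i<j\}\cup\{E_{ii}-E_{jj}\}$ is exactly $\mathfrak{su}(n)$ (dimension $n^2-1$), so the generated Lie algebra is all of $\mathfrak{su}(n)$ — and in fact for $n\ge 3$ we did not even need the extra diagonal term $H_Z^{(1)}$; I would remark that it is only required for $n=2$, where $\{S_{01}\}$ alone generates only the one-dimensional $\mathfrak{u}(1)$ inside $\mathfrak{su}(2)$ and one needs a $Z$-type term to reach all of $\mathfrak{su}(2)$.

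A few bookkeeping points I would address carefully. (i) The operators in $\mathcal G$ as written act on the full $(\mathbb C^2)^{\otimes n}$, but by Proposition~\ref{prop:quditization}(1) they preserve $\mathcal H_1$, so passing to the restriction $\mathcal G\!\upharpoonright_{\mathcal H_1}$ and then conjugating by $V$ is legitimate; the dynamical Lie algebra of the restricted family is the restriction of the dynamical Lie algebra, and $e^{-i\theta H_\alpha}$ restricted to $\mathcal H_1$ equals $V e^{-i\theta (V^\dagger H_\alpha V)}V^\dagger$. (ii) I must confirm the diagonal term $H_Z^{(1)}$, once restricted and made traceless (subtracting its trace part, which only contributes an irrelevant global phase), lies in $\mathfrak{su}(n)$ and is nonzero by the "not proportional to identity on $\mathcal H_1$" hypothesis — so it provides at least one nontrivial diagonal generator, enough to bootstrap the $n=2$ case and harmless otherwise. (iii) The norm in the final inequality is the operator norm on $\mathcal H_1$ (equivalently on the qudit space $\mathbb C^n$); I would state this explicitly.

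The main obstacle — really the only place requiring care rather than routine algebra — is verifying the commutator identities cleanly enough to exhibit a spanning set of $\mathfrak{su}(n)$, and in particular handling the index coincidences: the bracket $[S_{ij},S_{k\ell}]$ vanishes when $\{i,j\}\cap\{k,\ell\}=\emptyset$, equals $\pm A_{i\ell}$-type terms when they share exactly one index, and equals a diagonal term when $\{i,j\}=\{k,\ell\}$. I would organize this as: (a) shared-index brackets generate all $A_{ik}$; (b) equal-index brackets generate all $E_{ii}-E_{jj}$; (c) these plus the original $S_{ij}$ span $\mathfrak{su}(n)$ by dimension count. Everything else — invoking the universality theorem, the $n=2$ caveat, the isometry bookkeeping — is standard and can be cited or stated briefly.
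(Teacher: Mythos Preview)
Your proposal is correct and in fact takes a cleaner route than the paper. The key difference is how the antisymmetric off-diagonal generators $A_{ik}=E_{ik}-E_{ki}$ are first obtained. The paper bootstraps through the diagonal term $D=H_Z^{(1)}\!\upharpoonright_{\mathcal H_1}$ via $[D,S_{ij}]=(d_i-d_j)A_{ij}$, which directly yields $A_{ij}$ only for pairs with $d_i\neq d_j$, and then must propagate to the remaining pairs; the paper's Step~3 assertion that ``commutators of such $\Delta$'s generate the linear span of all traceless diagonals'' is loosely worded (diagonal matrices commute), and what is really needed is an iteration $\Delta\to A\to\Delta$ along index chains using mixed brackets like $[\Delta_{ij},S_{ik}]=A_{ik}$. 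You instead observe that, for $n\ge 3$, the identity $[S_{ij},S_{jk}]=A_{ik}$ already produces all antisymmetric generators from the XY terms alone, so $H_Z^{(1)}$ is not invoked at all in the main case. This is both simpler and establishes a mildly stronger fact: for $n\ge 3$ the block--XY generators by themselves already generate $\mathfrak{su}(n)$ on $\mathcal H_1$, and the diagonal term is genuinely needed only for the $n=2$ edge case, exactly as you remark. The paper's route has the compensating virtue of working uniformly for all $n\ge 2$ without a case split. Your bookkeeping points (restriction via the isometry $V$, handling the trace part of $H_Z^{(1)}$ as a global phase, and the black-box citation of Lie-algebraic universality) match the paper's Step~6 and are handled appropriately.
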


\begin{proof}[Proof sketch(See App \ref{app:proof})]
Work in the one-excitation basis $\{\ket{e_k}\}_{k=0}^{n-1}$ of $\mathcal H_1$. On this sector, each XY coupling acts as a symmetric matrix unit,
$H_{ij}\!\mid_{\mathcal H_1}=E_{ij}+E_{ji}$, while the diagonal excitation-preserving term restricts to a non-scalar diagonal
$D=H_Z^{(1)}\!\mid_{\mathcal H_1}=\sum_k d_k E_{kk}$ with $D\not\propto I$. The key observation is that commutators with $D$ turn symmetric off-diagonals into skew-symmetric ones: whenever $d_i\neq d_j$,
\[
[D,\,E_{ij}+E_{ji}] \;=\; (d_i-d_j)(E_{ij}-E_{ji}),
\]
so the Lie closure contains both $E_{ij}+E_{ji}$ and $E_{ij}-E_{ji}$ for at least one pair. Commuting these then produces traceless diagonals,
\[
[E_{ij}+E_{ji},\,E_{ij}-E_{ji}] \;=\; 2(E_{ii}-E_{jj}),
\]
and iterating standard bracket identities yields the full space of traceless diagonal operators on $\mathcal H_1$.

Finally, using any traceless diagonal that separates $k$ and $\ell$,
\[
[E_{kk}-E_{\ell\ell},\,E_{k\ell}+E_{\ell k}] \;=\; 2(E_{k\ell}-E_{\ell k}),
\]
one obtains skew-symmetric off-diagonals for every $(k,\ell)$. Hence the Lie closure contains the usual spanning set of
$\mathfrak{su}(n)$ given by symmetric off-diagonals, skew-symmetric off-diagonals, and traceless diagonals, so
$\mathrm{Lie}(i\mathcal G)\!\mid_{\mathcal H_1}=\mathfrak{su}(n)$. Standard controllability then implies that products of
$\exp(-i\theta H)$ with $H\in\mathcal G$ can approximate any target $V\in \mathrm{SU}(n)$ to arbitrary precision.
\end{proof}

\begin{figure}[htbp]
  \centering
  \resizebox{\textwidth}{!}{%
  \begin{quantikz}[row sep=0.1cm, column sep=0.05cm]
  \lstick[wires=4]{$\text{Block }0\; \ket{0}^{\otimes n}$}
      & \gate[wires=4,style={rounded corners,fill=blue!8}]{\texttt{OneHotBlock}}
      & \qw
      & \gate[wires=12,style={rounded corners,fill=green!16}]{U_C(\gamma_1)}
      & \gate[wires=4,style={rounded corners,fill=orange!20}]{U_M^{(0)}(\beta_1)}
      & \qw
      & \gate[wires=12,style={rounded corners,fill=green!16}]{U_C(\gamma_2)}
      & \gate[wires=4,style={rounded corners,fill=orange!20}]{U_M^{(0)}(\beta_2)}
      & \qw
      & \gate[wires=12,style={rounded corners,fill=green!16}]{U_C(\gamma_3)}
      & \gate[wires=4,style={rounded corners,fill=orange!20}]{U_M^{(0)}(\beta_3)}
      & \qw
      & \meter{} \\
  & & \qw
    & \qw & \qw & \qw
    & \qw & \qw & \qw
    & \qw & \qw & \qw
    & \meter{} \\
  & & \qw
    & \qw & \qw & \qw
    & \qw & \qw & \qw
    & \qw & \qw & \qw
    & \meter{} \\
  & & \qw
    & \qw & \qw & \qw
    & \qw & \qw & \qw
    & \qw & \qw & \qw
    & \meter{} \\
  \lstick[wires=4]{$\text{Block }1\; \ket{0}^{\otimes n}$}
      & \gate[wires=4,style={rounded corners,fill=blue!8}]{\texttt{OneHotBlock}}
      & \qw
      & \qw
      & \gate[wires=4,style={rounded corners,fill=orange!20}]{U_M^{(1)}(\beta_1)}
      & \qw
      & \qw
      & \gate[wires=4,style={rounded corners,fill=orange!20}]{U_M^{(1)}(\beta_2)}
      & \qw
      & \qw
      & \gate[wires=4,style={rounded corners,fill=orange!20}]{U_M^{(1)}(\beta_3)}
      & \qw
      & \meter{} \\
  & & \qw
    & \qw & \qw & \qw
    & \qw & \qw & \qw
    & \qw & \qw & \qw
    & \meter{} \\
  & & \qw
    & \qw & \qw & \qw
    & \qw & \qw & \qw
    & \qw & \qw & \qw
    & \meter{} \\
  & & \qw
    & \qw & \qw & \qw
    & \qw & \qw & \qw
    & \qw & \qw & \qw
    & \meter{} \\
  \lstick[wires=4]{$\text{Block }2\; \ket{0}^{\otimes n}$}
      & \gate[wires=4,style={rounded corners,fill=blue!8}]{\texttt{OneHotBlock}}
      & \qw
      & \qw
      & \gate[wires=4,style={rounded corners,fill=orange!20}]{U_M^{(2)}(\beta_1)}
      & \qw
      & \qw
      & \gate[wires=4,style={rounded corners,fill=orange!20}]{U_M^{(2)}(\beta_2)}
      & \qw
      & \qw
      & \gate[wires=4,style={rounded corners,fill=orange!20}]{U_M^{(2)}(\beta_3)}
      & \qw
      & \meter{} \\
  & & \qw
    & \qw & \qw & \qw
    & \qw & \qw & \qw
    & \qw & \qw & \qw
    & \meter{} \\
  & & \qw
    & \qw & \qw & \qw
    & \qw & \qw & \qw
    & \qw & \qw & \qw
    & \meter{} \\
  & & \qw
    & \qw & \qw & \qw
    & \qw & \qw & \qw
    & \qw & \qw & \qw
    & \meter{} \\
  \end{quantikz}}
  \caption{Depth-$p=3$ CE-QAOA for $m=3$ blocks of $n=4$ qubits.
  Each layer applies a global cost $U_C(\gamma_\ell)$ over all $mn$ wires,
  followed by parallel block-local XY mixers $U_M^{(j)}(\beta_\ell)$.}
  \label{fig:full-Blockqaoa}
\end{figure}
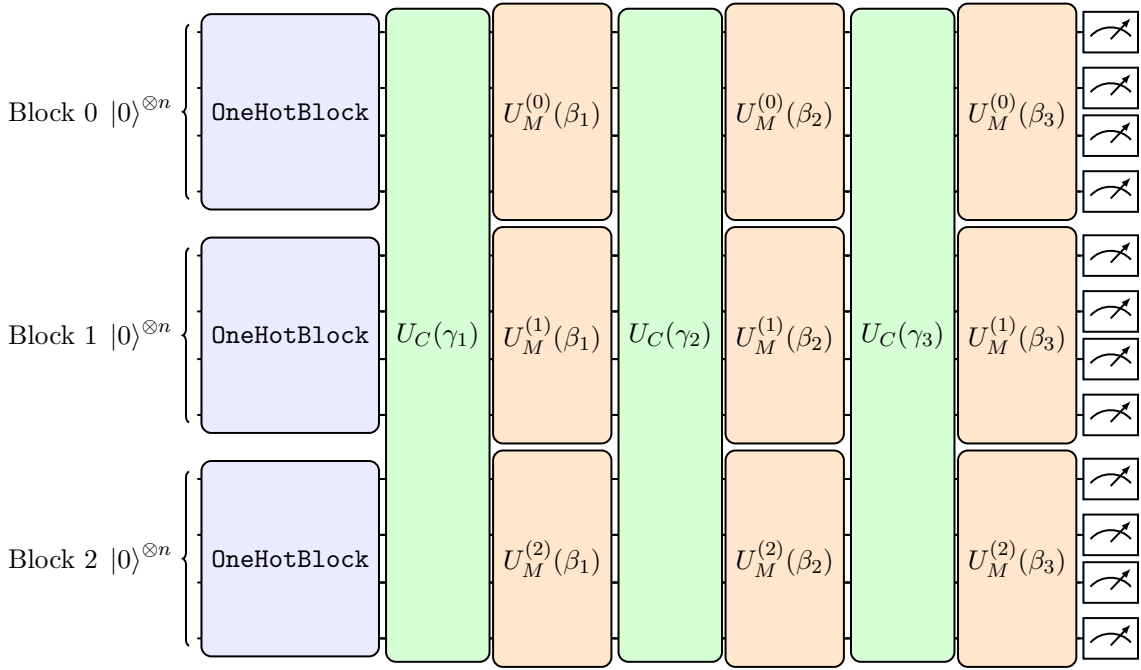

\section{Encoded Unitary Designs}
\label{sec:performance}
\textbf{Roadmap.} In \S2.1 we set up the problem–algorithm co-design and formalized the \emph{CE–QAOA kernel}, fixing the encoded space, symmetry assumptions, mixer normalization, and initial states. In \S3.1 we show that blockwise label permutations induce an \emph{exact unitary 1-design} on the encoded space, yielding a clean \(1/D\) baseline and an averaging-to-existence guarantee for favorable angles (and explaining heavy-output peaks when no twirl is applied). In \S3.2 we lift this to \(\varepsilon\)-approximate \emph{2-design} control via per-block XY controllability combined with diagonal cross-block entanglers provides second-moment (anticoncentration) bounds at poly depth. In \S3.3 we benchmark against classical procedures under two ambient domains, isolating the \(D/S\) vs.\ \(2^{n^2}/S\) baselines and the resulting conditional separation. Finally, \S3.4 packages these ingredients into a \emph{polytime hybrid quantum–classical} (PHQC) solver that couples a coarse \((\gamma,\beta)\) grid with a deterministic checker, yielding Chernoff-style shot guarantees independent of empirical frequency.

\subsection{Exact Unitary \texorpdfstring{$1-$}-design from block permutation twirl}
\label{app:1-design}
We now turn to low order moment analysis~\cite{BrandaoHarrowMixing2016} starting from the emergence of exact unitary $1$ design in CE-QAOA. Our kernel  (Def. \ref{def:kernel-requirement}) \emph{explicitly} requires that the \emph{constraints} be exchangeable under block permutations $S_m$ and symbol relabelings $S_n$. This symmetry is a co-design choice that is enforced at the modeling level and then exploited algorithmically contributing a very rich dynamical structure to our formalism. For example, the block Level permutation–twirling action preserves the one-hot condition on the Hilbrt space $\OH$ (See App. \ref{app:perm} for a global feasibility preserving permutation). To see this, let
\[
\mathcal B \;\equiv\; S_n^{\times m} \;=\; \Bigl\{\;\mathsf P=\bigotimes_{b=0}^{m-1} \mathsf P_b \;:\; \mathsf P_b\in S_n\;\Bigr\}
\;\subset\; U(\OH),
\]
which applies \emph{independent} label permutations in each block. This action is \emph{not} feasibility–preserving in general (it can break column one–hot), but it does implement a first–moment (\(1\)-design) average on the encoded space as an immediate consequence. 

\medskip
\noindent
Consider a coarse uniform grid search with \(n{+}1\) points:
\[
  \mathcal G_{n+1} := \left\{\, j\frac{\pi}{n} : j=0,1,\dots,n \right\}.
\]
and take \(\gamma\in\mathcal G_{n+1}\) and \(\beta\in\mathcal G_{n+1}\). Let \(\widetilde H_{M}^{(b)} \) be the normalized block-XY mixer on the one-excitation subspace. The following existence result holds:  


\begin{theorem}[Existence of Uniform Overlap Lower Bound via Block Permutations]
\label{thm:exist-params}
For any product basis vector  $\ket{x^\star}\in \OH$, and any angles
\((\gamma,\beta)\),
there exists a blockwise permutation \(\mathsf P^\star\) such that
\[
  \bigl|\langle x^{\star}\!\mid
          \mathsf P^{\star\dagger} U_{M}(\beta)
          U_{C}(\gamma)
          s_{0}\rangle
  \bigr|^{2}
  \;\ge\;
  \frac{c}{n^{m}}.
\]
\end{theorem}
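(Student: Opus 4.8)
The plan is to convert the blockwise permutation twirl into an \emph{exact} first-moment identity on the encoded space $\OH$ and then apply a one-line averaging (pigeonhole) argument; no second-moment or anticoncentration input is needed, and the constant may be taken to be $c=1$, uniformly in $n$, $m$, $\gamma$, $\beta$.

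First I would fix $\ket{\psi} := U_{M}(\beta)\,U_{C}(\gamma)\,\ket{s_0}$ and verify that it stays inside $\OH$ with unit norm. This uses two facts already established: $H_C$ is diagonal in the computational basis, so $U_{C}(\gamma)$ acts as a phase on each encoded basis vector and preserves $\OH$; and $U_{M}(\beta)=\bigotimes_b U_{M}^{(b)}(\beta)$ with each $U_{M}^{(b)}$ leaving the one-excitation sector $\mathcal H_1$ of its block invariant (Prop.~\ref{prop:quditization}), so $U_{M}$ preserves $\OH=(\mathcal H_1)^{\otimes m}$. Since $\ket{s_0}\in\OH$ and both unitaries are norm-preserving, $\ket{\psi}\in\OH$ and $\langle\psi|\psi\rangle=1$; equivalently $\sum_{y}|\langle y|\psi\rangle|^2=1$, the sum running over the $n^m$ product basis vectors $\ket{y}$ of $\OH$ (``no leakage'').

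Next I would move the permutation onto the target vector: each $\mathsf P=\bigotimes_b \mathsf P_b\in S_n^{\times m}$ is a real permutation matrix on the encoded basis, so $\mathsf P^{\dagger}=\mathsf P^{-1}$ and $\langle x^{\star}|\mathsf P^{\dagger}=\langle \mathsf P x^{\star}|$; it therefore suffices to exhibit $\mathsf P^{\star}$ with $|\langle \mathsf P^{\star}x^{\star}|\psi\rangle|^2\ge c/n^m$. Averaging uniformly over the group and using (i) that the orbit $\{\mathsf P x^{\star}:\mathsf P\in S_n^{\times m}\}$ of the product basis vector $x^{\star}=(j_0^{\star},\dots,j_{m-1}^{\star})$ is the \emph{entire} encoded basis (each $S_n$ is transitive within its block) and (ii) that every target $\ket{y}$ is hit by the same number $((n-1)!)^m$ of group elements out of $(n!)^m$ total, I obtain
\[
\frac{1}{(n!)^m}\sum_{\mathsf P\in S_n^{\times m}} \bigl|\langle \mathsf P x^{\star}|\psi\rangle\bigr|^2
\;=\;\frac{1}{n^m}\sum_{y}\bigl|\langle y|\psi\rangle\bigr|^2
\;=\;\frac{1}{n^m},
\]
which is precisely the exact unitary $1$-design identity of Lemma~\ref{lem:perm-twirl}. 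Since the maximum of a finite family of nonnegative numbers is at least its mean, some $\mathsf P^{\star}$ satisfies $|\langle \mathsf P^{\star}x^{\star}|\psi\rangle|^2\ge 1/n^m$, which is the claim with $c=1$. Applying this with $x^{\star}$ equal to the global optimum feeds directly into Cor.~\ref{cor:feasible-optimum}.

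\textbf{Main obstacle.} There is no analytic difficulty; the entire content is the bookkeeping behind the two facts above — that the CE--QAOA circuit never leaks out of $\OH$ (so the Parseval sum over the encoded basis is exactly $1$, not something smaller) and that the block-permutation orbit of any product basis vector is the full encoded basis with \emph{uniform} multiplicities (so the group average equals $1/n^m$ exactly, not merely $O(1/n^m)$). I would state the orbit/fiber count carefully and note tightness: when $\ket{\psi}$ is the uniform superposition on $\OH$ every $\mathsf P$ gives exactly $1/n^m$, so $c=1$ is the best this argument yields, whereas the heavy-output concentration of the \emph{un}-twirled $\ket{\psi}$ is what allows a particular $\mathsf P^{\star}$ to do substantially better in practice.
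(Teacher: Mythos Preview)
Your proposal is correct and follows essentially the same route as the paper: compute the uniform average over the blockwise permutation group to obtain exactly $1/n^{m}$ (this is the content of Lemma~\ref{lem:perm-twirl}), then invoke the averaging/pigeonhole principle (Lemma~\ref{lem:existence-avg}) to extract a single $\mathsf P^{\star}$ achieving the bound with $c=1$. The only cosmetic difference is that you carry out the orbit/fiber count and Parseval sum explicitly on the state $\ket{\psi}$, whereas the paper packages the same computation at the operator level as $\mathbb E_{\mathsf P}[\mathsf P\ket{x^{\star}}\!\bra{x^{\star}}\mathsf P^{\dagger}]=I_{D}/D$ and then takes the expectation in $U\ket{s_{0}}$; your emphasis on the ``no leakage'' point (so that the Parseval sum equals $1$ rather than something smaller) makes explicit what the paper leaves implicit in the phrase ``unitary $U$ on $\OH$''.
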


To prove this, we shall need two Lemmas. 
\begin{lemma}[Existence from an average (pigeonhole/averaging principle)]
\label{lem:existence-avg}
Let $\{O_t\}_{t\in T}$ be a finite family of Hermitian operators on any Hilbert space
$\mathcal H$, and let $\rho$ be any density operator (positive semidefinite, trace $1$) on $\mathcal H$.
Define the arithmetic mean operator
\[
  \overline O \;:=\; \frac{1}{|T|}\sum_{t\in T} O_t .
\]
Then there exists an index $t^\star\in T$ such that the expectation value of $O_{t^\star}$ in the
state $\rho$ is at least the expectation value of the mean operator:
\[
  \Tr(\rho\,O_{t^\star}) \;\ge\; \Tr(\rho\,\overline O).
\]
i.e. among the set $\{\,\Tr(\rho\,O_t)\,:\,t\in T\,\}$ at least one is not smaller
than their average.
\end{lemma}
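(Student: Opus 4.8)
The plan is to reduce Lemma~\ref{lem:existence-avg} to the elementary fact that the arithmetic mean of a finite list of real numbers never exceeds its maximum. First I would set $a_t := \Tr(\rho\,O_t)$ for each $t\in T$ and record that every $a_t$ is a \emph{real} number: since $\rho$ and $O_t$ are Hermitian, $\overline{a_t}=\overline{\Tr(\rho O_t)}=\Tr\!\bigl((\rho O_t)^\dagger\bigr)=\Tr(O_t\rho)=\Tr(\rho O_t)=a_t$, using $(\rho O_t)^\dagger=O_t^\dagger\rho^\dagger=O_t\rho$ and cyclicity of the trace. (In the intended application $\rho=\ket{\psi}\!\bra{\psi}$ and $O_t$ is a rank-one projector $\ket{y_t}\!\bra{y_t}$, so $a_t=|\langle y_t|\psi\rangle|^2\in[0,1]$ is manifestly real and nonnegative; Hermiticity alone suffices for the general statement.)

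Next I would use linearity of the trace to rewrite the expectation of the mean operator as the mean of the expectations,
\[
  \Tr(\rho\,\overline O)\;=\;\Tr\!\Bigl(\rho\,\tfrac{1}{|T|}\textstyle\sum_{t\in T}O_t\Bigr)\;=\;\frac{1}{|T|}\sum_{t\in T}\Tr(\rho\,O_t)\;=\;\frac{1}{|T|}\sum_{t\in T}a_t,
\]
and then apply the averaging (pigeonhole) principle: since $T$ is finite and each $a_t\in\mathbb R$, the maximum $\max_{t\in T}a_t$ is attained at some $t^\star\in T$, and an average of finitely many reals is bounded above by their maximum, whence
\[
  \Tr(\rho\,O_{t^\star})\;=\;a_{t^\star}\;=\;\max_{t\in T}a_t\;\ge\;\frac{1}{|T|}\sum_{t\in T}a_t\;=\;\Tr(\rho\,\overline O),
\]
which is exactly the claimed inequality.

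There is essentially no obstacle here: the argument is pure pigeonhole, and the only hypotheses genuinely used are finiteness of $T$ (so that the supremum is a maximum) and reality of the numbers $\Tr(\rho O_t)$ (guaranteed by Hermiticity of $\rho$ and of each $O_t$); positivity and unit trace of $\rho$ are not needed for the inequality itself, only to make the downstream interpretation as overlaps meaningful. The lemma will then be paired with the explicit twirl-average computation (the second lemma) to conclude Theorem~\ref{thm:exist-params}, by taking $\{O_t\}$ to be the family of projectors obtained from the blockwise permutations and $\rho$ the state $U_M(\beta)U_C(\gamma)\ket{s_0}$.
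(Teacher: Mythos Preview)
Your proof is correct and follows essentially the same route as the paper: reduce to the scalar identity $\Tr(\rho\,\overline O)=\tfrac{1}{|T|}\sum_{t}\Tr(\rho\,O_t)$ via linearity of the trace, then invoke the pigeonhole/averaging principle on the finite list of real numbers. The paper phrases the last step as a proof by contradiction rather than explicitly taking the maximum, but the content is identical.
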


\begin{proof}
By linearity of the trace,
\[
  \frac{1}{|T|}\sum_{t\in T} \Tr(\rho\,O_t)
  \;=\; \Tr\!\left(\rho\,\frac{1}{|T|}\sum_{t\in T} O_t\right)
  \;=\; \Tr(\rho\,\overline O).
\]
If every $\Tr(\rho\,O_t)$ were strictly smaller than $\Tr(\rho\,\overline O)$, their average would also be strictly smaller—contradiction. Hence some $t^\star$ attains at least the average.
\end{proof}

\begin{lemma}[Blockwise twirl (\(\mathcal B\)) yields the $1/D$ baseline]
\label{lem:perm-twirl}
Let $\mathcal H_1\cong\mathbb C^n$ be the one–hot subspace on a block $\mathcal H_1$. Let the product one-hot subspace be \(\OH =\mathcal H_1^{\otimes m}\). So $D=\dim \OH  =n^m$. For each block $b\in\{0,\dots,m-1\}$, let $\mathsf P_b$ be a uniformly random permutation matrix acting on $\{\ket{e_0},\ldots,\ket{e_{n-1}}\}$ and set $\mathsf P:=\bigotimes_{b=0}^{m-1} \mathsf P_b$. Then, for any fixed unitary $U$ on $\OH$, any product basis vector $\ket{x^\star}$, and any state $\ket{\phi}\in\OH$ (all independent of $\mathsf P$),
\begin{equation}
\label{eq:perm-scalar}
\mathbb{E}_{\mathsf P}\!\left[\,\bigl|\langle x^\star \mid \mathsf P^\dagger U \mid \phi \rangle\bigr|^2\,\right]
\;=\;\frac{1}{D}.
\end{equation}
Equivalently, at the operator level,
\begin{equation}
\label{eq:perm-operator}
\mathbb{E}_{\mathsf P}\!\left[\,U^\dagger\,\mathsf P\,\ket{x^\star}\!\bra{x^\star}\,\mathsf P^\dagger\,U\,\right]
\;=\;\frac{I_D}{D}.
\end{equation}
where $I_D$ is the identity on $\OH$.
\end{lemma}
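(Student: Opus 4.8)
The plan is to evaluate the first–moment operator $\mathbb{E}_{\mathsf P}\!\bigl[\mathsf P\,\ket{x^\star}\!\bra{x^\star}\,\mathsf P^\dagger\bigr]$ by a direct computation that exploits the fact that \emph{both} the vector $\ket{x^\star}$ and the twirling ensemble $\mathcal B=S_n^{\times m}$ factorize over the $m$ blocks, and then to conjugate the result by the fixed unitary $U$. No representation–theoretic machinery (Schur--Weyl) is needed here: blockwise permutation symmetry alone pins the average to the maximally mixed state on $\OH$.

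Concretely, I would proceed as follows. Write $\ket{x^\star}=\bigotimes_{b=0}^{m-1}\ket{e_{j_b^\star}}$ for block indices $(j_0^\star,\dots,j_{m-1}^\star)\in[n]^m$, and $\mathsf P=\bigotimes_{b=0}^{m-1}\mathsf P_b$, where $\mathsf P_b$ is the permutation matrix of a uniform $\sigma_b\in S_n$, the $\sigma_b$ mutually independent. Then $\mathsf P\ket{x^\star}=\bigotimes_b\ket{e_{\sigma_b(j_b^\star)}}$, so the rank–one projector factorizes, $\mathsf P\ket{x^\star}\!\bra{x^\star}\mathsf P^\dagger=\bigotimes_b\ket{e_{\sigma_b(j_b^\star)}}\!\bra{e_{\sigma_b(j_b^\star)}}$, and by independence the expectation passes through $\bigotimes_b$. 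The per–block ingredient is that a uniform $\sigma_b\in S_n$ sends the fixed symbol $j_b^\star$ to a uniformly distributed symbol in $\{0,\dots,n-1\}$ (transitivity of $S_n$: each fiber of $\sigma\mapsto\sigma(j_b^\star)$ has the same cardinality $(n-1)!$), whence $\mathbb{E}_{\sigma_b}\!\bigl[\ket{e_{\sigma_b(j_b^\star)}}\!\bra{e_{\sigma_b(j_b^\star)}}\bigr]=\tfrac1n\sum_{k=0}^{n-1}\ket{e_k}\!\bra{e_k}=I_n/n$ on $\mathcal H_1$. Tensoring over the $m$ blocks gives $\mathbb{E}_{\mathsf P}\!\bigl[\mathsf P\ket{x^\star}\!\bra{x^\star}\mathsf P^\dagger\bigr]=(I_n/n)^{\otimes m}=I_D/D$ with $D=n^m$. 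Conjugating by $U$ and using $U^\dagger U=I_D$ yields the operator identity \eqref{eq:perm-operator}. For the scalar identity, I would take the expectation of $\bigl|\langle x^\star\mid\mathsf P^\dagger U\mid\phi\rangle\bigr|^2=\langle\phi\mid U^\dagger\mathsf P\mid x^\star\rangle\langle x^\star\mid\mathsf P^\dagger U\mid\phi\rangle$ and substitute \eqref{eq:perm-operator} together with $\langle\phi\mid\phi\rangle=1$, which gives \eqref{eq:perm-scalar}.

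There is no genuine obstacle; the only points demanding care are bookkeeping. (i) $\ket{x^\star}$ must be a \emph{product} basis vector for the projector to factorize blockwise---for an entangled $\ket{x^\star}$ the twirl $\mathcal B$ alone does not reach $I_D/D$. (ii) The per–block permutations must be mutually independent so that $\mathbb{E}_{\mathsf P}$ commutes with $\bigotimes_b$. (iii) $U$ must be exactly unitary (not merely a contraction) for conjugation to preserve $I_D/D$, and $\ket{\phi}$ is assumed normalized (otherwise the right–hand side reads $\|\phi\|^2/D$). I would close with the remark that this computation is equivalent to the statement that the orbit $\{\mathsf P\ket{x^\star}:\mathsf P\in\mathcal B\}$ is the entire encoded computational basis, each element attained with equal multiplicity $((n-1)!)^m$; this is precisely the transitivity that lets $\mathcal B$ act as an exact unitary $1$-design on $\OH$ even though it does not preserve feasibility.
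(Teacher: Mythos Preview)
Your proposal is correct and follows essentially the same route as the paper's proof: factorize $\ket{x^\star}$ and $\mathsf P$ over blocks, use independence to push the expectation through the tensor product, compute the per-block average $\mathbb{E}_{\mathsf P_b}[\mathsf P_b\ket{e_{j_b^\star}}\!\bra{e_{j_b^\star}}\mathsf P_b^\dagger]=I_n/n$, tensor up to $I_D/D$, conjugate by $U$, and then sandwich with $\ket{\phi}$ for the scalar identity. Your version is in fact slightly more explicit than the paper's (spelling out the transitivity argument and the $(n-1)!$ fiber count, and flagging the normalization/product-state caveats), but there is no substantive difference in method.
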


\begin{proof}
Write $x^\star=(j_0^\star,\ldots,j_{m-1}^\star)$ with $\ket{x^\star}=\bigotimes_{b=0}^{m-1} \ket{e_{j_b^\star}}$. For a single block $b$, uniform conjugation by $\mathsf P_b$ averages a rank-one projector over all basis labels:
\[
\mathbb{E}_{\mathsf P_b}\!\left[\mathsf P_b\,\ket{e_{j_b^\star}}\!\bra{e_{j_b^\star}}\,\mathsf P_b^\dagger\right]
=\frac{1}{n}\sum_{j=0}^{n-1} \ket{e_j}\!\bra{e_j}
=\frac{I_n}{n}.
\]
Independence across blocks gives
\[
\mathbb{E}_{\mathsf P}\!\left[\mathsf P\,\ket{x^\star}\!\bra{x^\star}\,\mathsf P^\dagger\right]
=\bigotimes_{b=0}^{m-1} \frac{I_n}{n}
=\frac{I_D}{D}.
\]
Conjugating by the fixed unitary $U$ yields \eqref{eq:perm-operator}. Taking the expectation value of \eqref{eq:perm-operator} in the state $\ket{\phi}$ gives the scalar identity \eqref{eq:perm-scalar}:
\[
\mathbb{E}_{\mathsf P}\!\left[\,\bigl|\langle x^\star \mid \mathsf P^\dagger U \mid \phi \rangle\bigr|^2\,\right]
=\operatorname{Tr}\!\left[U\ket{\phi}\!\bra{\phi}U^\dagger\,\frac{I_D}{D}\right]
=\frac{1}{D}.
\]
\end{proof}


\begin{proof}[Proof of Thm.~\ref{thm:exist-params}]
Fix any angles $(\gamma,\beta)$ and write
\(
U:=U_M(\beta)U_C(\gamma).
\)
Let $\rho:=\ket{s_0}\!\bra{s_0}$ and, for each blockwise permutation
$\mathsf P=\bigotimes_{b=0}^{m-1} \mathsf P_b$, define the observable
\(
O_{\mathsf P}:=U^\dagger\,\mathsf P\,\ket{x^\star}\!\bra{x^\star}\,\mathsf P^\dagger\,U.
\)
By Lemma~\ref{lem:perm-twirl},
\[
\mathbb{E}_{\mathsf P}\!\left[\,O_{\mathsf P}\,\right]
=\frac{I_D}{D}\,,
\qquad D=n^m,
\]
and therefore
\[
\mathbb{E}_{\mathsf P}\!\left[\,\Tr\!\bigl(\rho\,O_{\mathsf P}\bigr)\,\right]
=\Tr\!\Bigl(\rho\,\frac{I_D}{D}\Bigr)
=\frac{1}{D}.
\]
Applying Lemma~\ref{lem:existence-avg} (Existence via averaging) to the finite set
$T=\{\mathsf P\}$ with observables $\{O_{\mathsf P}\}_{\mathsf P\in T}$ and state $\rho$
yields the existence of some blockwise permutation $\mathsf P^\star$ (possibly depending on
$\gamma,\beta,x^\star$) such that
\[
\left|\,\langle x^\star \mid \mathsf P^{\star\dagger} U \mid s_0 \rangle\,\right|^{2}
= \operatorname{Tr}\!\left(\rho\, O_{\mathsf P^\star}\right)
\;\ge\; \frac{1}{D}
= \frac{1}{n^{m}}.
\]

\end{proof}

\begin{corollary}[Feasible–optimum specialization of Thm.~\ref{thm:exist-params}]
\label{cor:feasible-optimum}
Fix any grid angles $(\gamma,\beta)\in\mathcal G_{n+1}\times\mathcal G_{n+1}$ and set
$U=\UM(\beta)\UC(\gamma)$ on $\OH$ with $D=\dim\OH=n^m$.
Let $x^\star$ be any \emph{feasible optimal} label.
Then there exists a blockwise permutation $\mathsf P^\star$ such that
\[
  \bigl|\langle x^\star \mid \mathsf P^{\star\dagger} U \mid s_0 \rangle\bigr|^2
  \;\ge\; \frac{1}{D} \;=\; \frac{1}{n^m}.
\]
In particular, the constant in Theorem~\ref{thm:exist-params} can be taken as $c=1$.
\end{corollary}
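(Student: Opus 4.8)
The plan is to treat this as an essentially immediate corollary of Theorem~\ref{thm:exist-params}, with the only genuine work being to confirm that the constant can be sharpened to $c=1$. First I would note that, by Definition~\ref{def:feasible-basis}, a feasible optimal label $x^\star$ has a block decomposition $x^\star=(j_0^\star,\dots,j_{m-1}^\star)\in[n]^m$, so $\ket{x^\star}=\bigotimes_{b=0}^{m-1}\ket{e_{j_b^\star}}$ is a \emph{product basis vector} of $\OH$. Feasibility ($H_{\mathrm{pen}}(x^\star)=0$) and optimality play no role in the quantitative bound itself; they only single out which basis vector is of interest. Hence $x^\star$ satisfies the hypothesis of Theorem~\ref{thm:exist-params} verbatim.

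Next I would apply Theorem~\ref{thm:exist-params} with this $x^\star$ and with the given grid angles $(\gamma,\beta)\in\mathcal G_{n+1}\times\mathcal G_{n+1}$, which are admissible since that theorem holds for \emph{all} $(\gamma,\beta)$. Writing $U=\UM(\beta)\UC(\gamma)$, this produces a blockwise permutation $\mathsf P^\star$ (depending on $\gamma,\beta,x^\star$) with $\bigl|\langle x^\star\mid \mathsf P^{\star\dagger} U\mid s_0\rangle\bigr|^2 \ge c/n^m$.

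To pin down $c=1$, I would unwind the proof of Theorem~\ref{thm:exist-params}. Lemma~\ref{lem:perm-twirl} gives the \emph{exact} first-moment identity $\mathbb E_{\mathsf P}[\Tr(\rho\,O_{\mathsf P})]=1/D$ with $D=n^m$ and no slack, where $\rho=\ket{s_0}\!\bra{s_0}$ and $O_{\mathsf P}=U^\dagger\mathsf P\ket{x^\star}\!\bra{x^\star}\mathsf P^\dagger U$; and Lemma~\ref{lem:existence-avg} extracts some $\mathsf P^\star$ with $\Tr(\rho\,O_{\mathsf P^\star})$ at least that average. Chaining the two yields $\bigl|\langle x^\star\mid \mathsf P^{\star\dagger} U\mid s_0\rangle\bigr|^2 \ge 1/D = 1/n^m$, i.e. the constant of Theorem~\ref{thm:exist-params} is attained with $c=1$ exactly.

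I do not expect a real obstacle here; the statement is a specialization, and the "main step" is simply verifying that the hypotheses of Lemma~\ref{lem:perm-twirl} are met — namely that the twirled object is a \emph{product basis vector} (true, since $\ket{x^\star}\in\OH$ is block-one-hot) and that $U=\UM(\beta)\UC(\gamma)$ and $\ket{s_0}=\ket{s_{\mathrm b}}^{\otimes m}$ are fixed independently of $\mathsf P$ (true, as both are chosen before the twirl). As a closing remark I would note the argument is stable under the $k$-hot generalization of Definition~\ref{def:kernel-requirement}: the same independent blockwise twirl averages a rank-one projector to $I_{\binom{n}{k}}/\binom{n}{k}$ on each block and hence to $I_D/D$ on $(\mathcal H_k)^{\otimes m}$, so the $1/D$ existence bound with $c=1$ carries over unchanged.
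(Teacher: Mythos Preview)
Your proposal is correct and follows essentially the same approach as the paper: both observe that a feasible optimal $x^\star$ is in particular a product basis vector of $\OH$, so Theorem~\ref{thm:exist-params} applies verbatim, and the constant $c=1$ is already what the proof of that theorem (via Lemmas~\ref{lem:perm-twirl} and~\ref{lem:existence-avg}) actually delivers. Your version is more explicit about verifying the hypotheses and unwinding the $c=1$ sharpening, and your closing remark on the $k$-hot extension is a nice bonus not present in the paper's two-line proof.
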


\begin{proof}
Theorem~\ref{thm:exist-params} holds for \emph{any} product basis vector $x^\star\in\OH$. It also holds for any grid angles $(\gamma,\beta)\in\mathcal G_{n+1}\times\mathcal G_{n+1}$. Finally, choosing $x^\star$ to be any feasible optimum preserves the bound. Hence the inequality holds with $c=1$.
\end{proof}

\medskip \noindent
Lemmas~\ref{lem:existence-avg}–\ref{lem:perm-twirl} establish a permutation–twirl control: for any fixed $U$ on $\OH$ and any label $|x\rangle$, the averaged success probability equals the $1/D$ baseline. This identity certifies that random relabeling erases instance structure but offers a robust theoretical floor. For fixed angles $U:=U_M(\beta)U_C(\gamma)$ and target $\ket{x^\star}$, define
\[
p_{\gamma,\beta}^{x^\star}(\mathsf P)
\;:=\;
\bigl|\langle x^\star \mid \mathsf P^\dagger U \mid s_0\rangle\bigr|^2 .
\]
The blockwise permutation twirl gives the control identity
$\mathbb E_{\mathsf P}\!\big[p_{\gamma,\beta}^{x^\star}(\mathsf P)\big]=1/D$.
Hence the native labeling $\mathsf P=\mathrm{id}$ realizes one point in this distribution and, by averaging, there must exist labelings with $p_{\gamma,\beta}^{x^\star}(\mathsf P)\!>\!1/D$
and others with $<\!1/D$. Running the circuit \emph{without} random per-shot relabelings
(i.e., without implementing the twirl) samples $p_{\gamma,\beta}^{x^\star}(\mathrm{id})$,
which can be \emph{strictly larger} than $1/D$ whenever the angles and the instance structure produce constructive interference for the native labels. This arises when energetically favored blocks capture probability mass above $\frac{1}{n}$ average. Empirically this appears as peaks for the optimum/near-optimum bars that lie well above the dashed $1/D$ line (cf.\  Figure~\ref{fig:grid-search}).

\medskip
\noindent
By contrast, applying random blockwise permutations per shot \emph{flattens} the histogram toward
the $1/D$ control and typically reduces such peaks. Thus non-twirled sampling is the appropriate
mode for optimization (preserves useful bias), while twirling serves as a null/control for analytical bounds rather
than a performance booster. Consequently, we optimize in the native labeling (no twirl) and interpret any deviation $p_{\mathrm{id}}(x)-1/D$ as certified problem-induced bias. The practical value of such bias is quantified by shot complexity: if the optimal label satisfies $p_\star=\alpha/D$ with $\alpha>1$, the shots required to observe it with high probability shrink by the same factor. The following corollary holds. 

\begin{corollary}[Shot complexity gain from heavy outputs]
\label{cor:shots}
Let $p_\star:=p_{\mathrm{id}}(x^\star)$ for the optimal label $x^\star$ at some grid point $(\gamma,\beta)$.
With $S$ shots, the failure probability to observe $x^\star$ at least once is
$(1-p_\star)^S\le e^{-p_\star S}$. To achieve success probability $\ge 1-\delta$ it suffices that
\[
S\ \ge\ \frac{\ln(1/\delta)}{p_\star}.
\]
At the permutation twirling baseline $p_\star=1/D$, this is $S=\Theta(D\log(1/\delta))$.
Any heavy-output bias $p_\star=c/D$ with $c>1$ yields an $c$-fold reduction
in shots. If $p_\star\ge n^{-k}$ for some fixed $k$ while $D=n^m$ (with $m=n$ in TSP encodings), then $S=\tilde O(n^{k})$ versus $\tilde\Omega(n^{m})$ at baseline.  See Lemma \ref{lem:chernoff}, Thm. \ref{thm:sample_complexity} and Alg. \ref{alg:PHQC1}. The two flavors of computational advantages arising from our proposals are discussed in Sec. \ref{sec:classical-competitiveness} and \ref{sec:discussion}. An illustration of the effects of heavy outputs on success probability is shown in Fig. \ref{fig:heavy}\cite{Onahempdata}. \footnote{Data availability: Minimal Python implementation and integration into the standard QAOA wrapper in Qiskit is made available \url{https://doi.org/10.5281/zenodo.15725265}.} 

\end{corollary}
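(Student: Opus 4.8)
The plan is to treat the $S$ measurement shots as independent Bernoulli trials and run a one-line Chernoff/union-bound argument, then substitute the three regimes for $p_\star$. First I would fix the grid point $(\gamma,\beta)$ and set $\ket{\psi}:=U_M(\beta)U_C(\gamma)\ket{s_0}\in\OH$, so that by the Born rule each shot returns the computational-basis outcome $x^\star$ with probability $p_\star=p_{\mathrm{id}}(x^\star)=\lvert\langle x^\star\mid\psi\rangle\rvert^2$, and these events are independent across shots because the preparations and projective measurements are independent. The event ``$x^\star$ is never observed in $S$ shots'' is then an intersection of $S$ independent events of probability $1-p_\star$, so it has probability exactly $(1-p_\star)^S$; applying the elementary bound $1-t\le e^{-t}$ with $t=p_\star$ gives $(1-p_\star)^S\le e^{-p_\star S}$.

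Next I would impose $e^{-p_\star S}\le\delta$, which holds as soon as $p_\star S\ge\ln(1/\delta)$, i.e. $S\ge\ln(1/\delta)/p_\star$ — the claimed sufficient shot count. The three consequences are then immediate substitutions: (i) at the twirl baseline $p_\star=1/D$ this reads $S\ge D\ln(1/\delta)$; (ii) a heavy-output bias $p_\star=c/D$ with $c>1$ rescales this to $S\ge (D/c)\ln(1/\delta)$, a factor-$c$ reduction in the sufficient budget; and (iii) if $p_\star\ge n^{-k}$ then $S=\lceil n^{k}\ln(1/\delta)\rceil=\tilde O(n^{k})$ suffices, whereas the $p_\star=1/D$ baseline with $D=n^{m}$ (and $m=n$ in the TSP encoding) requires $S=\tilde\Omega(n^{m})$, giving the advertised gap.

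The argument is essentially routine; the only place needing a little care is the lower-bound half of the ``$\Theta$'' and ``$\tilde\Omega$'' claims, since the inequality above only produces an upper bound on the sufficient number of shots. To make those symbols legitimate I would add the matching converse: by a union bound over shots, $\Pr[\text{$x^\star$ seen in $S$ shots}]\le S\,p_\star$, so any sampler with per-shot probability $p_\star$ needs $S=\Omega(1/p_\star)$ shots to reach constant success probability; at $p_\star=1/D$ this forces $S=\Omega(D)$, matching the $O(D\log(1/\delta))$ upper bound and pinning down $\Theta(D\log(1/\delta))$, and at $p_\star=1/D=n^{-m}$ it gives the $\tilde\Omega(n^{m})$ baseline. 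I expect this converse to be the main (mild) obstacle — everything else is $1-t\le e^{-t}$ and arithmetic.
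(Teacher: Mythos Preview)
Your proposal is correct and matches the paper's approach: the corollary has no separate proof environment in the paper, and the referenced Lemma~\ref{lem:chernoff} is proved exactly as you describe (independent Bernoulli indicators, $(1-p)^S\le e^{-pS}$, then invert). You are in fact slightly more careful than the paper, which does not explicitly supply the union-bound converse needed to justify the $\Theta$ and $\tilde\Omega$ symbols; your observation that $\Pr[\text{hit in }S\text{ shots}]\le S\,p_\star$ forces $S=\Omega(1/p_\star)$ is the right way to close that gap.
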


\usetikzlibrary{arrows.meta,intersections}
\usepgfplotslibrary{fillbetween}
\pgfplotsset{compat=1.17}

\def\yminval{1e-12}

\begin{figure}[ht]
  \centering
\begin{tikzpicture}
  \begin{axis}[
    width=13cm, height=7.2cm,
    xlabel={Problem size $n$}, ylabel={Single-shot success $p$ (log scale)},
    ymode=log, ymin=\yminval, ymax=1,
    xmin=6, xmax=60,
    grid=both, minor grid style={gray!15}, major grid style={gray!30},
    legend style={draw=none, fill=none, font=\small, at={(0.02,0.02)}, anchor=south west},
    tick align=outside, label style={font=\small}, ticklabel style={font=\small},
    samples=400, domain=6:60,
  ]

  \def\kexp{3}

  \addplot[name path=top, draw=none, forget plot]{1};
  \addplot[name path=bottom, draw=none, forget plot]{\yminval};

  \addplot[name path=df_path, draw=none, forget plot]{x^(-\kexp)};

  \addplot[name path=base_path, draw=none, forget plot]{exp(ln(10)*(-0.010*x^2 - 0.20*x))};

  \addplot[red!25, fill opacity=0.25, forget plot]     fill between[of=bottom and base_path];
  \addplot[yellow!45, fill opacity=0.25, forget plot]  fill between[of=base_path and df_path];
  \addplot[green!35, fill opacity=0.22, forget plot]   fill between[of=top and df_path];

  \addplot[ultra thick, blue, forget plot]
    {(0.85 + 1.2*(1 - exp(-0.06*(x-9)))) * x^(-\kexp)};

  \addplot[very thick, green!45!black, dotted, forget plot]{x^(-\kexp)};

  \addplot[very thick, red, dotted, forget plot]{exp(ln(10)*(-0.010*x^2 - 0.20*x))};

  \addlegendimage{ultra thick, blue}
  \addlegendentry{\textcolor{blue}{CE--QAOA heavy outputs ($p =c(n)/D\approx n^{-k}$)}}

  \addlegendimage{very thick, green!45!black, dotted}
  \addlegendentry{\textcolor{green!45!black}{Dimension-free boundary $p=n^{-k}$}}

  \addlegendimage{very thick, red, dotted}
  \addlegendentry{\textcolor{red}{Uniform baseline ($1/D$)}}

  \node[anchor=west, align=left, font=\footnotesize, text=green!40!black]
    at (axis cs: 10, {3.5*(9)^(-\kexp)})
    {\bfseries Quantum Advantage Region\\[-1pt]
     ($p \ge n^{-k}$ $\Rightarrow$ poly shots)};

  \addplot[only marks, mark=*, mark size=1.8pt, blue, forget plot]
    coordinates {(12.25,{12.25^(-\kexp)})};

  \draw[->, thick, blue!90!black]
    (axis cs: 52, {exp(ln(10)*(-0.010*32^2 - 0.20*32))})
    -- (axis cs: 47, {1.15 * 27^(-\kexp)});
  \node[anchor=west, align=left, font=\footnotesize, text=blue!70!black]
    at (axis cs: 27.3, {4.28 * 27^(-\kexp)})
    {\textbf{Heavy outputs lead to $c(n)/D$ $\approx$ $n^{-k}$}};

  \end{axis}
\end{tikzpicture}
  \caption{Heavy outputs vs. baselines. The red dotted line is the twirling baseline of  \(\mathbb{E}\,|\langle x|U|\phi\rangle|^2=1/D\). The light yellow region denotes  \(\mathbb{E}\,|\langle x|U|\phi\rangle|^2>c/D\) existence results from Thm. \ref{thm:exist-params} and Cor. \ref{cor:feasible-optimum}. The green region indicate \(\mathbb{E}\,|\langle x|U|\phi\rangle|^2 \approx n^{-k}\) for a fixed constant $k$. Here we write the constant numerator in Thm. \ref{thm:exist-params} $c$ as $c(n)$ to emphasize that it is instance dependent. It becomes large when constructive interference arises in absence of the permutation twirling. These higher values overlap with $n^{-k}$ region shaded in green. If the probability of the optimum approaches $n^{-k}$, then to achieve success probability $\ge 1-\delta$ it suffices that
\(S\ \ge\ \frac{\ln(1/\delta)}{p_\star}\) as outline in Cor. \ref{cor:shots}. Our simulation results with \(\ln(1/\delta)=10\) is reported in Fig. \ref{fig:heavy-outputs} where the instance dependent heavy outputs  lead to overlap probabilities in the shaded green region. }
  \label{fig:heavy}
\end{figure}
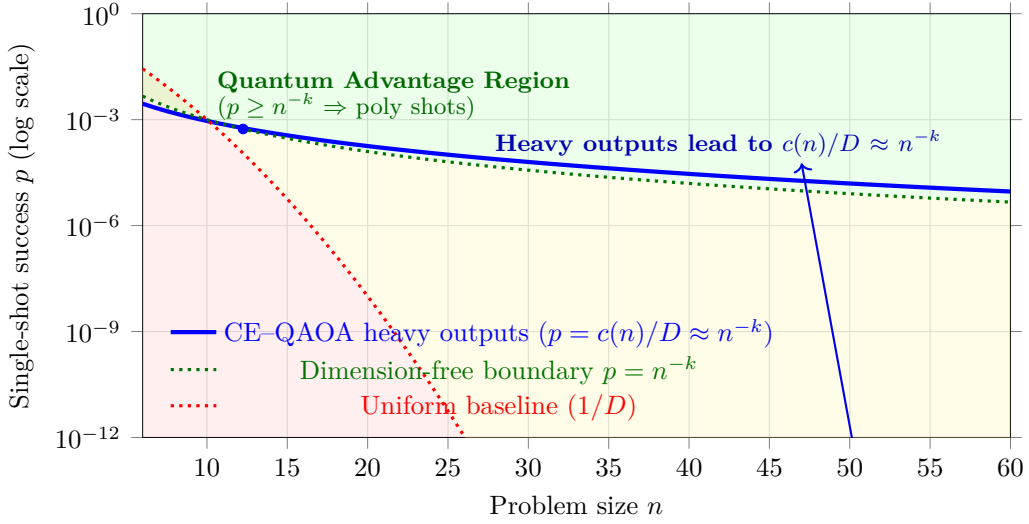

\begin{figure}[ht]
  \centering

  \subfloat[\texttt{wi6}, $(\gamma,\beta)=(\gamma^\star,\beta^\star)$]{
    \includegraphics[width=.45\linewidth]{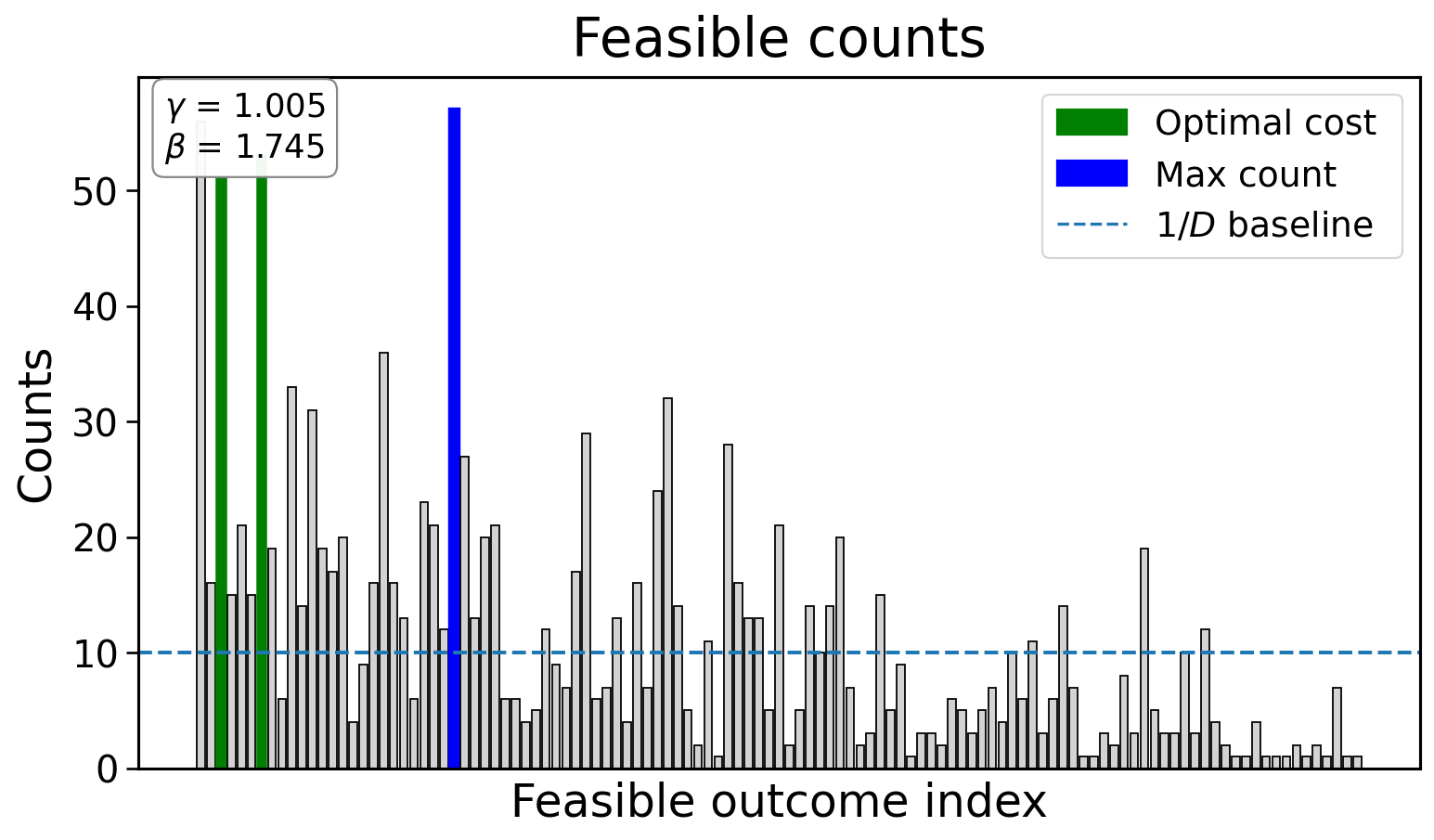}}\hfill
  \subfloat[\texttt{wi6}, $(\gamma,\beta)=(\tilde\gamma,\tilde\beta)$]{
    \includegraphics[width=.45\linewidth]{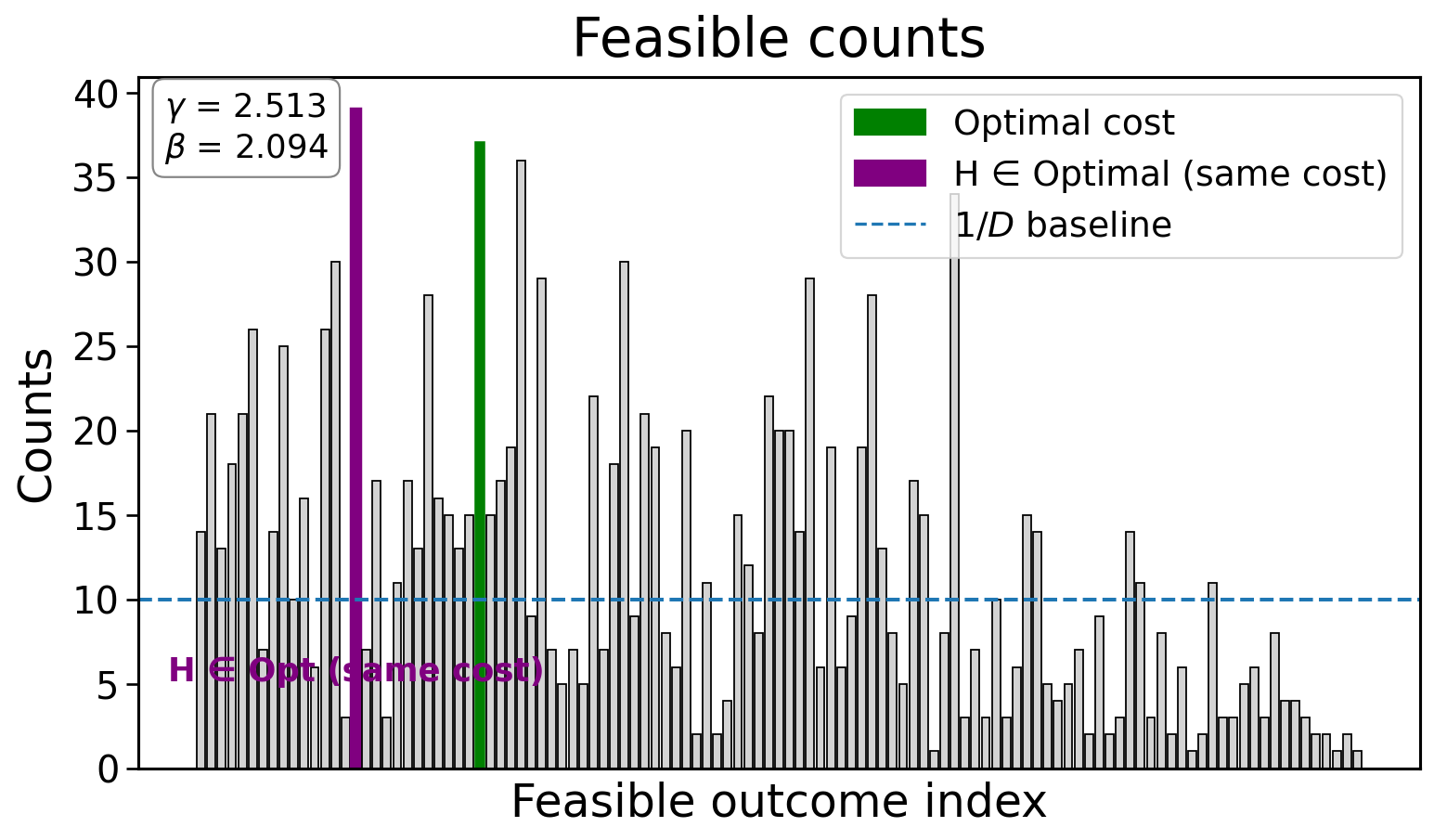}}\\[-4pt]

  \subfloat[\texttt{wi5}, $(\gamma,\beta)=(1.35,0.70)$]{
    \includegraphics[width=.46\linewidth]{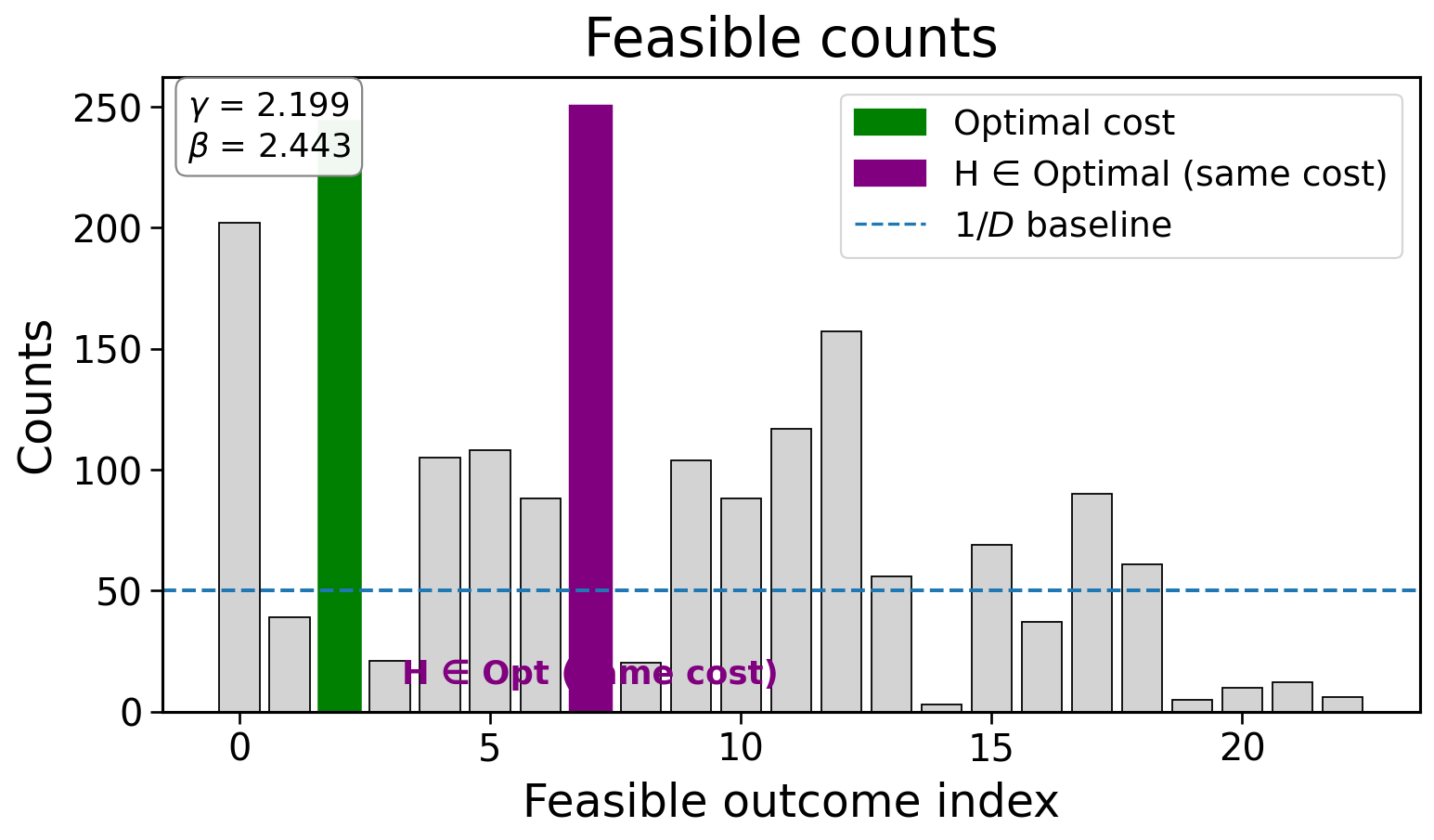}}\hfill
  \subfloat[\texttt{wi5}, $(\gamma,\beta)=(1.35,4.54)$]{
    \includegraphics[width=.46\linewidth]{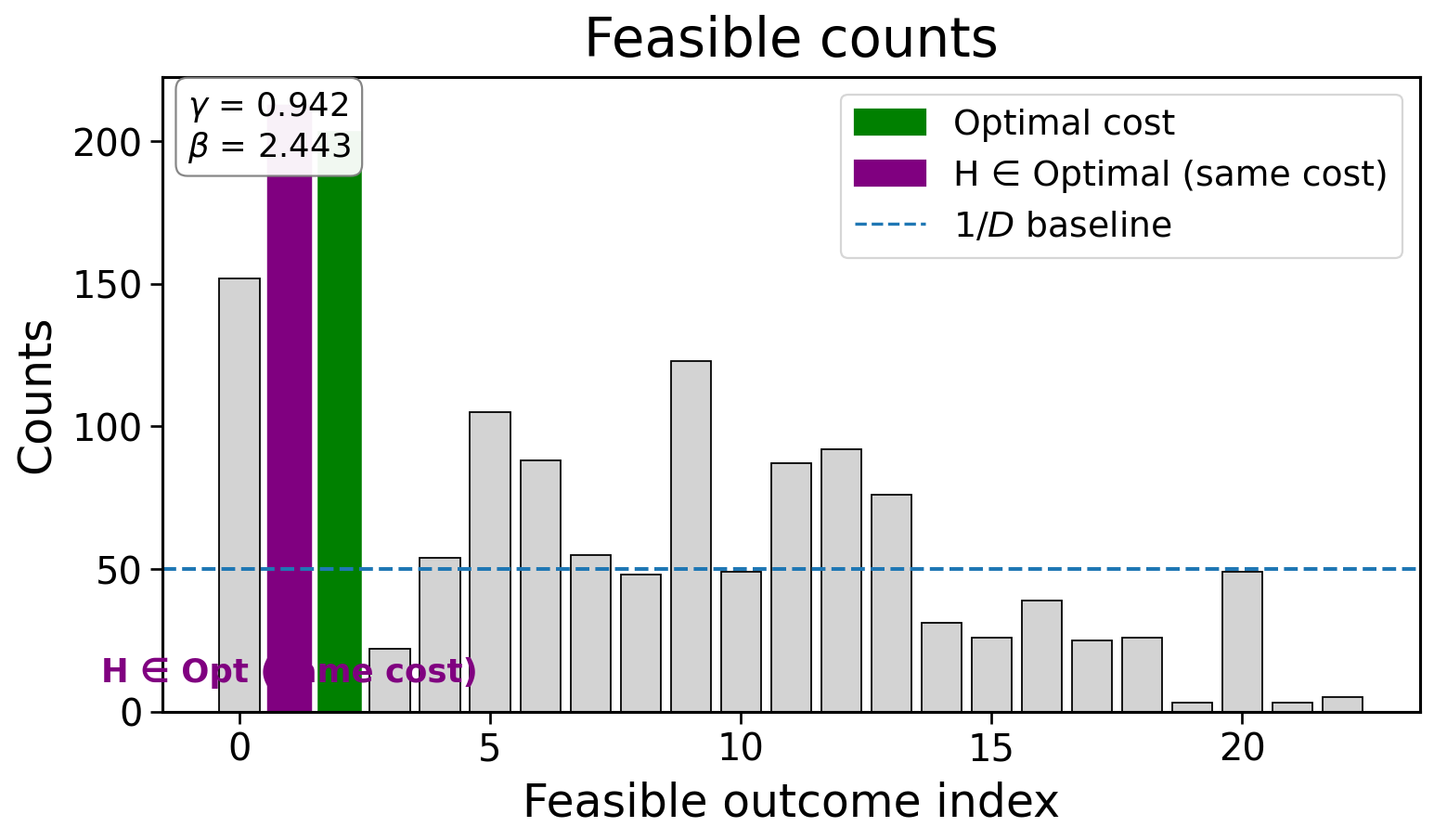}}\\[-4pt]

  \caption{\textbf{Single Layer CE-QAOA circuits on noiseless \textsc{Aer} simulator \cite{Qiskit2023} on two TSP instances ( \texttt{wi5} (25~qubits) and \texttt{wi6} (36 qubits)) taken from QOptLib benchmark set~\cite{Osaba2024Qoptlib}.}  Histograms show the counts of \emph{feasible} bit-strings (permutation indices) out of $50\times n^{4}$ shots for a problem of size $n$ locations. The green bars mark two degenerate best tours. Several grid points already identify the optimum with nontrivial probability well above the lower bound. The dashed line marks the \(1/D\) baseline with \(D=n^n\). The purple bars show when the best bitsring is also the most frequent.}
  \label{fig:grid-search}
\end{figure}

\paragraph{Remarks on the $[0,\pi]^2$ window.}
We use a $\pi$-sized search box in Thm. \ref{thm:exist-params} and the accompanying twirling analyses purely for \emph{coverage}. The $t\!\in\!\{1,2\}$ moment guarantees (perm-twirl and shallow anticoncentration) only require angles drawn from a constant-size set and do not strengthen by widening the range. Empirically, $[0,\pi]^2$ balances exploration of distinct phase patterns with modest grid sizes.  Noiseless simulation results on Qiskit \textsc{Aer} are reported in Figure~\ref{fig:grid-search} to demonstrate this.

\medskip
\noindent
To conclude on the emergence of exact unitary 1-design in our formalism, recall that an ensemble \(\{p_i,U_i\}\subset U(d)\) is a unitary $1$-design if its first-moment (twirling) channel
\[
  \Phi_{\mathrm{Ist}}(X)
  \;:=\;\sum_i p_i\,(U_i)\,X\,(U_i^\dagger)
\]
agrees with the Haar first moment \(\Phi_{\mathrm{Haar}}^{(1)}(X):=\int_{\mathrm{Haar}}(U)\,X\,(U^\dagger)\,dU\)
for all \(X\) on \(\mathbb{C}^d\). It is an \(\varepsilon\)-\emph{approximate} $1$-design if
\(\|\Phi_{\mathrm{Ist}}-\Phi_{\mathrm{Haar}}^{(1)}\|\le\varepsilon\) in diamond (or operator) norm. Thus, the identity proved in Lemma \ref{lem:perm-twirl} is an exact 1-design identity with $\varepsilon =0$. This is a consequence of the block–XY mixer acting as coherent \emph{hops} between labels inside the \(\mathcal H_1\) subspace. In the encoded \(n\)-level (qudit) picture, these hops are realized coherently by the unitary operator, \(e^{-i\beta A(K_n)}\), where \(A(K_n)\) is the adjacency of the complete graph on \(n\) nodes (See Def. \ref{def:isometry} and Prop. \ref{prop:quditization}) and any blockwise permutation operator respects this symmetry. Further t-unitary design details are available for example in Ref. \cite{BrandaoHarrowMixing2016}.

\subsection{\texorpdfstring{$\varepsilon$}{ε}-approximate unitary \texorpdfstring{$2$} {2}- designs}

\medskip
\noindent
Similarly to the 1-design notion, an ensemble \(\{p_i,U_i\}\subset U(d)\) is a unitary $2$-design if its second-moment (twirling) channel
\[
  \Phi_{\mathrm{2nd}}(X)
  \;:=\;\sum_i p_i\,(U_i\otimes U_i)\,X\,(U_i^\dagger\otimes U_i^\dagger)
\]
agrees with the Haar second moment \(\Phi_{\mathrm{Haar}}^{(2)}(X):=\int_{\mathrm{Haar}}(U\otimes U)\,X\,(U^\dagger\otimes U^\dagger)\,dU\)
for all \(X\) on \(\mathbb{C}^d\otimes\mathbb{C}^d\).
It is an \(\varepsilon\)-\emph{approximate} $2$-design if
\(\|\Phi_{\mathrm{2nd}}-\Phi_{\mathrm{Haar}}^{(2)}\|\le\varepsilon\) in diamond (or operator) norm. In our formalism, if on each block’s one–excitation space \(\mathcal H_1\cong\mathbb C^n\), we form the local ensemble by applying
random XY edge pulses, i.e., layers that choose pairs \((i,j)\) and angles \(\theta\) and apply
\(\exp\!\bigl(-i\theta (X_iX_j+Y_iY_j)\bigr)\). Varying the edges and angles yields a
(controllable) gate set whose dynamical Lie algebra is \(\mathfrak{su}(n)\) (Prop. \ref{prop:controllability-universality}). So by local random–circuit design results\cite{BrandaoHarrowMixing2016}, a circuit of length \(L=\mathrm{poly}(n)\cdot\mathrm{polylog}(1/\varepsilon)\)
forms an \(\varepsilon\)-approximate unitary \(2\)-design on \(U(n)\). We capture this result in the following proposition.

\begin{proposition}[Per–block \(\varepsilon\)-approximate $2$–design]
\label{prop:block-2design}
Recall controllability and approximate universality on $\mathcal H_1$ (Prop.\,\ref{prop:controllability-universality}). Let \(\mathcal{U}\) be generated by XY pulses \(\exp(-i\theta (X_iX_j+Y_iY_j))\) and $Z$ phases \(\exp(-i\phi Z_k)\), with \((\theta,\phi)\) drawn from bounded continuous densities.
On \(\mathcal{H}_1\), a random circuit of length \(\mathrm{poly}(n)\cdot \mathrm{polylog}(1/\varepsilon)\) sampled from \(\mathcal{U}\) forms an \(\varepsilon\)-approximate unitary $2$–design \cite{BrandaoHarrowMixing2016}.
\end{proposition}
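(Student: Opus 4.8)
The plan is to combine the exact Lie-algebraic universality of Proposition~\ref{prop:controllability-universality} with the spectral-gap analysis for local random circuits of Brand\~ao--Harrow--Horodecki~\cite{BrandaoHarrowMixing2016}, after passing to the encoded qudit picture. Restricting each elementary gate of $\mathcal U$ to the one-excitation sector via the isometry of Definition~\ref{def:isometry} and Proposition~\ref{prop:quditization}, an XY pulse $\exp(-i\theta(X_iX_j+Y_iY_j))$ becomes a real Givens rotation in the coordinate plane $\mathrm{span}\{\ket i,\ket j\}\subset\mathbb C^n$, and a phase $\exp(-i\phi Z_k)$ becomes a diagonal unitary $e^{-i\phi}\,\mathrm{diag}(1,\dots,e^{2i\phi},\dots,1)$. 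Thus the single-step law $\mu$ of a length-$L$ random circuit is a probability measure on $\mathrm U(n)$ supported on planar rotations and diagonal phases, and the circuit ensemble is the convolution power $\mu^{*L}$. By Euler-angle composition of one planar XY rotation with its two adjacent $Z$-phases, $\mathrm{supp}(\mu)$ already contains a full $\mathrm{SU}(2)$ acting on each coordinate plane; together with Proposition~\ref{prop:controllability-universality} this shows the closed subgroup generated by $\mathrm{supp}(\mu)$ is all of $\mathrm U(n)$ (equivalently all of $\mathrm{PU}(n)$ once global phases are quotiented).

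The next step translates \emph{universal support} into \emph{correct dominant eigenspace} for the second-moment operator $M_\mu:=\E_{U\sim\mu}\bigl[U\otimes U\otimes \bar U\otimes \bar U\bigr]$ on $(\mathbb C^n)^{\otimes 4}$, which is exactly the superoperator governing the $2$-design channel $\Phi_{\mathrm{2nd}}$; let $M_{\mathrm{Haar}}$ be its Haar analogue, an orthogonal projector onto a small $O(1)$-dimensional commutant spanned by $I$ and $\mathrm{SWAP}$. Any $M_\mu$-fixed operator is invariant under every element of $\mathrm{supp}(\mu)$, hence under the generated closed group $\mathrm U(n)$, hence lies in that commutant, so the fixed space of $M_\mu$ coincides with $\mathrm{range}(M_{\mathrm{Haar}})$; writing $M_\mu=M_{\mathrm{Haar}}+R$ with $R$ supported on the orthogonal complement, the convergence rate is governed by the gap $g(\mu):=1-\|R\|\in(0,1]$. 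Because $\theta,\phi$ are drawn from bounded continuous densities, a constant number of convolutions $\mu^{*k_0}$ admits a bounded density near the identity within each planar chart (a standard smoothing argument), so one may compare $\mu$ to the cleaner \emph{Haar-$\mathrm{SU}(2)$ on a uniformly random coordinate plane} walk at the cost of only a constant factor in $g$.

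The quantitative bound $g(\mu)\ge 1/\poly{n}$ is the crux, and here the Brand\~ao--Harrow--Horodecki local-gap/martingale framework applies because the comparison walk is a single $n$-level instance of an all-to-all $2$-local random circuit. I would proceed either by (i) invoking directly the theorem that a length-$\poly{n}\,\mathrm{polylog}(1/\varepsilon)$ local random circuit with all-to-all $2$-local couplings is an $\varepsilon$-approximate $2$-design, after verifying that Givens-rotation gate sets meet the \emph{locally universal on each pair} hypothesis; or by (ii) reproving the gap via the Nachtergaele/detectability-lemma route: the single-plane moment operator has an $\Omega(1)$ gap on its two-dimensional block, and since the coordinate planes of $\mathbb C^n$ realise the complete graph $K_n$, the frustration-free parent operator $\binom{n}{2}^{-1}\sum_{i<j}\bigl(I-P_{ij}\bigr)$ inherits a gap $\Omega(1/n)$, giving $g(\mu)=\Omega(1/\poly n)$.

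Assembling, $\|M_\mu^{\,L}-M_{\mathrm{Haar}}\|\le (1-g(\mu))^{L}\le e^{-g(\mu)L}$, and converting operator norm to diamond norm for a $2$-design channel costs only a $\poly{n}$ factor, so $\|\Phi_{\mathrm{2nd}}^{(L)}-\Phi_{\mathrm{Haar}}^{(2)}\|_{\diamond}\le\varepsilon$ as soon as $L\ge g(\mu)^{-1}\bigl(\log(1/\varepsilon)+O(\log n)\bigr)=\poly{n}\cdot\mathrm{polylog}(1/\varepsilon)$, which is the claimed length. The main obstacle is precisely the gap bound: BHH-type theorems are phrased for qubit registers with (approximately) Haar two-qubit gates, whereas here the elementary moves are $\mathrm{SU}(2)$ Givens rotations on coordinate planes of a single $n$-level system, so one must either find a citation whose hypotheses genuinely cover this gate set or carefully port the martingale/detectability argument---in particular checking that the \emph{random edge of $K_n$} averaging produces a $1/\poly n$ gap rather than an exponentially small one.
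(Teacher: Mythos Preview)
Your proposal is substantially more detailed than the paper's treatment: the paper does not give a proof of this proposition at all. It simply states, in the paragraph preceding the proposition, that controllability (Prop.~\ref{prop:controllability-universality}) yields $\mathfrak{su}(n)$ as the dynamical Lie algebra and then asserts that ``by local random-circuit design results~\cite{BrandaoHarrowMixing2016}'' the claimed length suffices, packaging this as the proposition ``for future reference.'' So there is no argument to compare against; you have supplied one where the paper supplies only a citation.

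Your outline is correct and, importantly, more candid than the paper about the main technical issue, which you flag explicitly: the Brand\~ao--Harrow--Horodecki theorems are stated for tensor-product registers with (approximately) Haar two-local gates, not for a single $n$-level system acted on by Givens rotations on coordinate planes. Your moment-operator setup, the identification of the $M_\mu$-fixed space with the Haar commutant via density of the generated subgroup, and the exponential convergence in $L$ from the gap are all standard and sound. For closing the gap bound itself, note that the paper, in the proof of Thm.~\ref{thm:global-2design}, invokes the Bourgain--Gamburd spectral-gap theorem~\cite{BourgainGamburd2011} precisely to handle ``the qudit case''; that result gives a nonzero gap once the support generates a dense subgroup with suitable regularity, and is the more natural citation here than BHH proper. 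The residual subtlety you identify---whether the gap is $1/\mathrm{poly}(n)$ rather than merely positive---is genuine and is not addressed by the paper either; your route (ii) via a detectability/Nachtergaele argument on the $K_n$ coupling of coordinate planes is the right idea for making the $n$-dependence explicit.
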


Interleaving such per–block layers with the diagonal, instance–dependent phase \(U_C(\gamma)\) which is entangling across blocks is needed to  promote the product of per–block designs to a global
\(\varepsilon_{\mathrm{glob}}\)-approximate unitary \(2\)-design on the encoded space in depth
\(\mathrm{poly}(m,n,\log(1/\varepsilon_{\mathrm{glob}}))\).

\medskip
\noindent
In the CE-QAOA  kernel (Def.\,\ref{def:kernel-requirement}), each phase layer is generated by a \emph{diagonal} Hamiltonian in the computational basis which is \emph{entangling across blocks} due to the global pattern symmetries. With block labels \(z=(j_0,\dots,j_{m-1})\), \(U_C(\gamma)\) factorizes over blocks (and thus  non-entangling across blocks ) iff
\begin{equation}
\label{eq:additive-C}
  C(j_0,\dots,j_{m-1})
  \;=\; \sum_{b=0}^{m-1} c_b(j_b)
  \quad\text{for some}\quad c_b:[n]\to\mathbb{R}.
\end{equation}

\begin{proposition}[Diagonal entangler criterion]
\label{prop:diag-entangler}
For two blocks \(b\neq b'\), restrict \(C\) to \(C_{b,b'}(j,k)\) with other labels fixed and consider
\[
  U_{b,b'}(\gamma)
  \;=\;\sum_{j,k=0}^{n-1} e^{-i\gamma C_{b,b'}(j,k)}\,|j\rangle\!\langle j|\otimes|k\rangle\!\langle k|.
\]
The operator–Schmidt rank of \(U_{b,b'}(\gamma)\) exceeds \(1\) (hence the gate is entangling) unless there exist functions \(\alpha(j),\beta(k)\) and a phase \(\theta\) with
\(e^{-i\gamma C_{b,b'}(j,k)}=e^{-i\theta}\,a(j)b(k)\) for all \(j,k\), i.e.\ unless
\(C_{b,b'}(j,k)=\alpha(j)+\beta(k)\) modulo \(2\pi/\gamma\). Equivalently, the diagonal two–qudit gate has operator–Schmidt rank \(>1\) iff the matrix
\(M_{jk}=e^{-i\gamma C_{b,b'}(j,k)}\) has rank \(>1\).
\end{proposition}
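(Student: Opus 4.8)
The plan is to reduce the whole proposition to the elementary linear algebra of the coefficient matrix $M$ with entries $M_{jk}=e^{-i\gamma C_{b,b'}(j,k)}$, $0\le j,k\le n-1$, using the fact that the operator-Schmidt decomposition of a \emph{diagonal} bipartite operator is literally the singular value decomposition of its coefficient matrix. First I would note that, writing $P_j:=|j\rangle\!\langle j|$ and $Q_k:=|k\rangle\!\langle k|$, the families $\{P_j\}_j$ and $\{Q_k\}_k$ are \emph{orthonormal} in the Hilbert--Schmidt inner product ($\operatorname{Tr}(P_jP_{j'})=\delta_{jj'}$, and likewise for $Q$). Since $U_{b,b'}(\gamma)=\sum_{j,k}M_{jk}\,P_j\otimes Q_k$, feeding an SVD $M=\sum_{\ell}\sigma_\ell\,u^{(\ell)}(v^{(\ell)})^{\dagger}$ into this expression gives $U_{b,b'}(\gamma)=\sum_{\ell}\sigma_\ell\,A_\ell\otimes B_\ell$ with $A_\ell=\sum_j u^{(\ell)}_j P_j$ and $B_\ell=\sum_k \overline{v^{(\ell)}_k}\,Q_k$, and orthonormality of the columns of $u,v$ transfers to Hilbert--Schmidt orthonormality of $\{A_\ell\}$ and $\{B_\ell\}$. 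Hence this is the operator-Schmidt decomposition of $U_{b,b'}(\gamma)$, and its operator-Schmidt rank equals $\operatorname{rank}(M)$.

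Second, I would invoke the standard characterization (already implicit in the statement): a bipartite unitary $W$ has operator-Schmidt rank $1$ iff $W=A\otimes B$, and when $W$ is unitary this forces $A,B$ to be unitary up to scalars, so $W$ preserves product states and is non-entangling, whereas operator-Schmidt rank $\ge 2$ makes $W$ a genuine entangler. Combined with the first step this immediately yields the last sentence of the proposition: $U_{b,b'}(\gamma)$ is entangling $\iff \operatorname{rank}(M)>1$.

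Third, I would unpack the rank-$1$ case to recover the explicit affine form of $C$. We have $\operatorname{rank}(M)=1$ exactly when $M_{jk}=a(j)\,b(k)$ for some $a,b:[n]\to\mathbb{C}$; since $|M_{jk}|=1$ we may normalize so that $|a(j)|=|b(k)|=1$, hence $a(j)=e^{-i\gamma\alpha(j)}$ and $b(k)=e^{-i\theta}e^{-i\gamma\beta(k)}$ for real functions $\alpha,\beta$ and a constant phase $\theta$ (the $\mathrm{U}(1)$ phase freedom in the singular vectors of $M$ is absorbed into $\theta$). Then $e^{-i\gamma C_{b,b'}(j,k)}=e^{-i\theta}e^{-i\gamma(\alpha(j)+\beta(k))}$ for all $j,k$, i.e.\ $C_{b,b'}(j,k)\equiv\alpha(j)+\beta(k)\pmod{2\pi/\gamma}$, which is exactly the obstruction stated; taking the contrapositive gives the first half. (The degenerate case $\gamma=0$ has $M\equiv\mathbf 1$, rank $1$, consistent with $U_{b,b'}(0)=I$ being non-entangling.)

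The hard part will be the bookkeeping in this last step rather than any deep idea: one must match a \emph{multiplicative} rank-$1$ factorization of $M$ to an \emph{affine} ``$\alpha(j)+\beta(k)$'' decomposition of $C$ taken modulo $2\pi/\gamma$, tracking the free global phase $\theta$ and the $\mathrm{U}(1)$ ambiguity of the SVD factors so that the equivalence is exact and not merely up to unstated constants. The linear-algebra skeleton (Steps 1--2) is routine once the Hilbert--Schmidt orthonormality of $\{P_j\}$ and $\{Q_k\}$ is recorded.
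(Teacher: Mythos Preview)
Your proposal is correct and follows essentially the same approach as the paper: the paper's proof is the one-line observation that for a diagonal two-qudit operator the operator-Schmidt rank equals $\operatorname{rank}(M)$, and rank~$1$ holds iff $M_{jk}=a(j)b(k)$. You have simply filled in the details behind that sentence (the SVD-to-operator-Schmidt correspondence via the Hilbert--Schmidt orthonormality of $\{P_j\},\{Q_k\}$, and the unimodularity bookkeeping that converts the multiplicative factorization into the additive form of $C$ modulo $2\pi/\gamma$).
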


\begin{proof}
For a diagonal two–qudit operator, the operator–Schmidt rank equals the rank of the matrix
\(M_{jk}=e^{-i\gamma C_{b,b'}(j,k)}\). Rank \(1\) holds iff \(M_{jk}=a(j)b(k)\).
\end{proof}

Global constraints in typical COPs (TSP/VRP, assignment, capacity/balance) introduce cross–block couplings, so \(U_C(\gamma)\) is \emph{generically} entangling on \((\mathbb{C}^n)^{\otimes m}\). We can therefore combine it with short, per–block $2$–designs to obtain strong second–moment mixing on the encoded space. To that end, we partition the diagonal generator as
\[
H_C \;=\; H_{\mathrm{obj}} + H_{\mathrm{pen}},
\]
with the following roles.
(i) \(H_{\mathrm{obj}}\) is diagonal and, upon restriction to a single block, provides sufficiently rich on–block phases so that, together with XY hops inside the block, the Lie algebra on \(\mathcal H_1\cong\mathbb C^n\) is \(\mathfrak{u}(n)\). (These on–block phases may be present explicitly or synthesized via standard refocusing.)
(ii) \(H_{\mathrm{pen}}\) is diagonal and \emph{cross–block}. It contains at least one non–additive term across some pair \(b\neq b'\), so that for fixed spectators the induced table \(C_{b,b'}(j,k)\) fails \(C_{b,b'}(j,k)=\alpha(j)+\beta(k)\).
By Prop.\,\ref{prop:diag-entangler}, \(e^{-i\gamma H_{\mathrm{pen}}}\) is then entangling across blocks.
(Equivalently, the matrix \(M_{jk}=e^{-i\gamma C_{b,b'}(j,k)}\) has rank \(>1\).)

\medskip
\noindent
If each block ensemble is an \(\varepsilon\)-approximate unitary $2$–design on \(U(n)\), then the product ensemble on \(U(n)^{\otimes m}\) reproduces the Haar second moments of degree–\(\le2\) polynomials with error \(O(m\varepsilon)\).

\begin{theorem}[Global \(\varepsilon_{\mathrm{glob}}\)-approximate unitary $2$–design (mild randomness)]
\label{thm:global-2design}
Assume the block–interaction graph induced by \(H_{\mathrm{pen}}\) is connected and that either
(i) the phases \(\gamma_\ell\) are i.i.d.\ from a bounded continuous density, or
(ii) the set of coupled block pairs in \(H_{\mathrm{pen}}\) is chosen i.i.d.\ each layer from a distribution with full support on the edges of the graph.
Then there exists
\[
  L \;=\; \mathrm{poly}\!\bigl(m,n,\log(1/\varepsilon_{\mathrm{glob}})\bigr)
\]
s.t. the distribution of \(U_L=\mathcal{L}_L\cdots\mathcal{L}_1\) is an \(\varepsilon_{\mathrm{glob}}\)-approximate unitary $2$–design on \(U(n^{m})\).
\end{theorem}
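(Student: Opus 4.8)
The plan is to pass to the two-copy (second-moment) picture and reduce the statement to a one-layer spectral-gap estimate. For a random unitary ensemble $\mathcal W$ on $U(d)$, $d=n^m$, write $M(\mathcal W)$ for its $2$-copy moment superoperator, $M(\mathcal W)(X)=\E_{\mathcal W}\!\bigl[(W\otimes W)\,X\,(W^\dagger\otimes W^\dagger)\bigr]$ on operators $X$ over $(\mathbb C^{d})^{\otimes 2}$; by Schur--Weyl $M_{\mathrm{Haar}}$ is the orthogonal projector onto the $S_2$-commutant $\mathrm{span}\{I,\mathrm{SWAP}\}$, hence its only invariant subspace, and an operator-norm bound $\|M_L-M_{\mathrm{Haar}}\|_\infty\le\varepsilon_{\mathrm{glob}}$ already certifies an $\varepsilon_{\mathrm{glob}}$-approximate $2$-design in the sense used here (the diamond-norm variant costs only a $\poly{d}$ prefactor, absorbed into $\log(1/\varepsilon_{\mathrm{glob}})$). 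Under hypothesis (i) or (ii) the layers are i.i.d., so $M_L=M(\mathcal L)^{L}$ with $M(\mathcal L)=M_{\mathrm{blocks}}\,M_{\mathrm{ent}}$, where $M_{\mathrm{blocks}}=\bigotimes_{b=0}^{m-1}M_b$ collects the per-block moment operators (each an $\varepsilon$-approximate $U(n)$ $2$-design by Prop.~\ref{prop:block-2design}; the deterministic block-local mixer $\UM(\beta)$ is absorbed harmlessly) and $M_{\mathrm{ent}}=\E_{\gamma}[\,\text{conjugation by }\UC(\gamma)\otimes\UC(\gamma)\,]$. It then suffices to show $\|M(\mathcal L)-M_{\mathrm{Haar}}\|_\infty\le 1-g$ with $g=1/\poly{m,n}$, since then $\|M_L-M_{\mathrm{Haar}}\|_\infty\le(1-g)^{L}\le\varepsilon_{\mathrm{glob}}$ once $L=\poly{m,n,\log(1/\varepsilon_{\mathrm{glob}})}$.

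For the single-layer gap I would argue as follows. Each $M_b$ satisfies $\|M_b-P_b\|_\infty\le\varepsilon$, where $P_b$ projects onto $\mathrm{span}\{I_b,\mathrm{SWAP}_b\}$ and the Haar block gap is $\Theta(1)$; a telescoping estimate gives $\|M_{\mathrm{blocks}}-P\|_\infty=O(m\varepsilon)$ with $P=\bigotimes_bP_b$ the projector onto the $2^m$-dimensional \emph{configuration space} $\mathcal C=\mathrm{span}\{S_\sigma:\sigma\in\{I,\mathrm{SWAP}\}^m\}$, $S_\sigma:=\bigotimes_b\sigma_b$, and $M_{\mathrm{blocks}}$ contracts $\mathrm{ran}(I-P)$ to norm $O(m\varepsilon)$. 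The averaged entangler $M_{\mathrm{ent}}$ is a mean of unitary conjugations, hence $\|M_{\mathrm{ent}}\|_\infty\le1$; it fixes the two global configurations $S_{I^m}=I$ and $S_{\mathrm{SWAP}^m}=\mathrm{SWAP}$ (which span precisely $\mathrm{ran}(M_{\mathrm{Haar}})$), while on any mixed configuration it strictly contracts: swapping the two copies on a strict nonempty block-subset $A$ shifts the cost phase $C(\cdot)+C(\cdot)$ on an $\ge 1/\poly{n}$ fraction of computational-basis pairs, because connectedness of the block-interaction graph of $H_{\mathrm{pen}}$ forces some coupled pair $(b,b')$ to straddle $A$, and non-additivity of that coupling (rank $>1$ of $M_{jk}=e^{-i\gamma C_{b,b'}(j,k)}$, Prop.~\ref{prop:diag-entangler}) makes the induced phase non-constant so that the $\gamma$-average suppresses it; the same connectedness shows $\mathrm{span}\{I,\mathrm{SWAP}\}$ is the only $M_{\mathrm{ent}}$-invariant subspace of $\mathcal C$. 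Combining the constant contraction off $\mathcal C$ (from $M_{\mathrm{blocks}}$), the $1/\poly{m,n}$ contraction on $\mathcal C\ominus\mathrm{span}\{I,\mathrm{SWAP}\}$ (from $M_{\mathrm{ent}}$), and the fact that the two factors' common fixed space is exactly $\mathrm{ran}(M_{\mathrm{Haar}})$, a Friedrichs-angle / alternating-projection estimate yields $\|M(\mathcal L)-M_{\mathrm{Haar}}\|_\infty\le 1-g$ with $g=1/\poly{m,n}$, provided $\varepsilon\le c\,g/m$ so the per-block slack cannot close the gap.

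The main obstacle is the quantitative heart of the previous paragraph: upgrading the qualitative ``$\UC(\gamma)$ is entangling'' (rank $>1$) statement into a uniform lower bound on the contraction factor of $M_{\mathrm{ent}}$ on every mixed configuration --- valid for all instances in the kernel and robust to the $\gamma$-average --- and then threading the per-block $\varepsilon$ errors through $\bigotimes_bM_b$ without eroding the polynomial gap. I would obtain the phase-spreading bound from an explicit character-sum estimate on $\sum_{a,b}\exp\!\bigl(i\gamma\,\Delta_A(a,b)\bigr)$, where $\Delta_A(a,b)$ is the change in $C(a)+C(b)$ when $a$ and $b$ are swapped on the blocks of $A$, using that $H_{\mathrm{pen}}$ has integer, $\poly{n}$-bounded coefficients (Def.~\ref{def:kernel-requirement}(a)) so that $\Delta_A$ takes $\poly{n}$ values and the phase-cancelling fraction stays bounded away from $1$; the $\varepsilon$-propagation follows from $\|\bigotimes_bM_b-\bigotimes_bP_b\|_\infty\le\sum_b\|M_b-P_b\|_\infty$ together with $\|M_b\|_\infty\le1$; and the final assembly uses a standard bound for products of contractions with prescribed fixed spaces. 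A secondary point, and the only place where connectedness is essential, is checking that the relevant subspace angle is bounded away from $0$ --- i.e.\ that no mixed configuration is simultaneously near-fixed by both factors.
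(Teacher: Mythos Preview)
Your proposal is correct and follows essentially the same route as the paper's proof: both pass to the second-moment superoperator, use the per-block approximate $2$-designs to project onto the $2^m$-dimensional configuration space $\mathrm{span}\{S_\sigma:\sigma\in\{I,\mathrm{SWAP}\}^m\}$, and then argue that the averaged diagonal entangler mixes these configurations with an inverse-polynomial gap whenever the block-interaction graph is connected and the cross-block phase table is non-additive. The paper phrases the reduced dynamics as a Markov chain on $\{I,\mathrm{SWAP}\}^m$ and defers the gap to the Brand\~ao--Harrow--Horodecki framework together with the qudit gap of Bourgain--Gamburd, whereas you stay in operator-norm language and use a Friedrichs-angle/alternating-projection estimate; these are two standard presentations of the same mechanism. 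Your write-up is in fact more explicit than the paper's on the two points that carry the quantitative content---the $O(m\varepsilon)$ telescoping control of $\|\bigotimes_b M_b-\bigotimes_b P_b\|_\infty$ and the character-sum route to a uniform phase-spreading lower bound---both of which the paper leaves implicit in its citation of the random-circuit literature.
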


\begin{proof}

\emph{Setup:} Consider alternating layers on \((\mathbb{C}^n)^{\otimes m}\)
\begin{equation}
\label{eq:optionB-layer}
  \mathcal{L}_\ell
  \;=\;
  \left(\bigotimes_{b=0}^{m-1} R_b^{(\ell)}\right)\; e^{-i\gamma_\ell H_{\mathrm{pen}}},
  \qquad \ell=1,\dots,L,
\end{equation}
where each \(R_b^{(\ell)}\) is drawn independently from a per–block \(\varepsilon\)-approximate $2$–design and \(e^{-i\gamma_\ell H_{\mathrm{pen}}}\) is an entangling diagonal gate across blocks (Prop.\,\ref{prop:diag-entangler}). Assume a mild randomness on the penalty layer and fix one layer $\mathcal L_\ell=\bigotimes_b R_b^{(\ell)}\;e^{-i\gamma_\ell H_{\mathrm{pen}}}$.
For $t=2$ moments, analyze the superoperator
$\mathbb E\!\left[U^{\otimes 2}\otimes\overline U^{\otimes 2}\right]$ of the depth–$L$ circuit. \newline
(1) \emph{Local twirl to $\{I,\mathrm{SWAP}\}$.}
Each $R_b^{(\ell)}$ is drawn from a per–block $\varepsilon$–approximate 2–design, so its second–moment twirl
projects (up to $O(\varepsilon)$) any operator on the two–copy space of block $b$
onto the commutant of $U(n)$, i.e.\ the span of $\{I_b,\mathrm{SWAP}_b\}$.
Thus, after averaging over the $R_b^{(\ell)}$, the state space relevant to second moments
reduces to a classical spin system on $m$ sites with local alphabet $\{I,\mathrm{SWAP}\}$. \newline
(2) \emph{Effect of the entangling diagonal.}
Consider an edge $(b,b')$ coupled by a term in $H_{\mathrm{pen}}$.
Conjugation by $e^{-i\gamma_\ell H_{\mathrm{pen}}}$, restricted to this edge and then followed by the local twirls from step (1),
induces a stochastic update on the pair $(\{I,\mathrm{SWAP}\}_b,\{I,\mathrm{SWAP}\}_{b'})$ that is \emph{nontrivial}
whenever the two–block phase table $C_{b,b'}(j,k)$ is non–additive (Prop.~\ref{prop:diag-entangler}).
Hence each coupled edge implements a mixing step on the reduced alphabet. \newline
(3) \emph{Ergodicity and spectral gap (Cf. Prop. \ref{prop:xy-ergodicity}).}
Assume the block–interaction graph of $H_{\mathrm{pen}}$ is connected and that, per layer,
either the coupled edge or the phase parameter is drawn from a distribution with full support.
Then the induced Markov chain on $\{I,\mathrm{SWAP}\}^m$ is irreducible and aperiodic,
with a spectral gap bounded below by $\Omega(1/\mathrm{poly}(m))$.
Consequently the product of $L=\mathrm{poly}(m,n)\,\mathrm{polylog}(1/\varepsilon_{\mathrm{glob}})$ such layers
converges to the unique fixed point, which coincides with the global Haar second moment on $(\mathbb C^n)^{\otimes m}$
(up to the usual $O(\varepsilon)$ error from the local designs). \newline
(4) \emph{In conclusion,} this is exactly the $t{=}2$ instance of the Brand\~ao--Harrow--Horodecki random–circuit mixing framework\cite{BrandaoHarrowMixing2016},
with the qudit case handled by the same moment–operator gap argument\cite{BourgainGamburd2011}.
\end{proof}


\begin{corollary}[Constant-measure good angles under a $2$-design]
\label{cor:pz-constant}
Let $\mathsf U$ be an $\varepsilon$-approximate unitary $2$-design on the encoded space
(e.g., as in Thm.~\ref{thm:global-2design}). For fixed $x^\star$ and input $|\phi\rangle$,
with $X=|\langle x^\star|U|\phi\rangle|^2$ where $U\sim\mathsf U$,
\[
\mathbb E[X]=\tfrac{1}{D}\pm O(\varepsilon),\quad
\mathbb E[X^2]=\tfrac{2}{D(D{+}1)}\pm O(\varepsilon),
\]
and Paley--Zygmund gives $\Pr\!\big[X\ge \tfrac{1}{2D}\big]\ge (D{+}1)/(8D)-O(\varepsilon)=\Omega(1)$. Thus,  Paley–Zygmund gives, for sufficiently small \(\varepsilon_{\mathrm{glob}}\),
\begin{equation}
  \Pr\!\Bigl[X \ge \tfrac{1}{2D}\Bigr]
  \;\ge\;
  \frac{\bigl(\mathbb{E}[X]-\tfrac{1}{2D}\bigr)^2}{\mathbb{E}[X^2]}
  \;\gtrsim\; \frac{(1/2D)^2}{2/(D(D{+}1))}
  \;=\; \frac{D{+}1}{8D}
  \;=\; \Omega(1).
  \label{eq:ce-SoverD}
\end{equation}
\end{corollary}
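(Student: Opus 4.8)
The plan is to reduce the statement to the first two Haar moments of $X=\lvert\langle x^\star\rvert U\lvert\phi\rangle\rvert^2$ on the encoded space of dimension $D=\dim\OH$, transfer those moments to the approximate $2$-design up to an additive $O(\varepsilon)$, and then invoke the Paley--Zygmund inequality with $\theta=\tfrac12$.

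\emph{Step 1 (Haar moments).} First I would note that for Haar-random $U\in U(D)$ the vector $U\lvert\phi\rangle$ is uniform on the unit sphere of $\mathbb C^D$, so $X$ is distributed as $\lvert\psi_1\rvert^2$ for a Haar-random unit vector $\psi$, i.e.\ as a $\mathrm{Beta}(1,D-1)$ variable; hence $\mathbb E_{\mathrm{Haar}}[X]=1/D$ and $\mathbb E_{\mathrm{Haar}}[X^2]=2/(D(D+1))$. Equivalently one may read these off from the first- and second-moment twirl identities $\int U A U^\dagger\,dU=\tfrac{\mathrm{Tr}A}{D}\,I$ and $\int (U\otimes U)A(U^\dagger\otimes U^\dagger)\,dU=\alpha I+\beta\,\mathrm{SWAP}$ applied to $A=\lvert\phi\rangle\!\langle\phi\rvert$ and $A=(\lvert\phi\rangle\!\langle\phi\rvert)^{\otimes2}$, fixing $\alpha,\beta$ from $\mathrm{Tr}(A)$ and $\mathrm{Tr}(A\,\mathrm{SWAP})$.

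\emph{Step 2 (transfer to the design).} Next I would write $\mathbb E[X]$ as the evaluation of the first-moment channel and $\mathbb E[X^2]$ as the evaluation of the second-moment channel on operators of $O(1)$ trace/operator norm (namely $\lvert\phi\rangle\!\langle\phi\rvert$, $\lvert x^\star\rangle\!\langle x^\star\rvert$ and their second tensor powers). Since an $\varepsilon$-approximate unitary $2$-design is in particular an $\varepsilon$-approximate $1$-design (restrict/partial-trace the second-moment channel), $\lVert\Phi_{\mathrm{2nd}}-\Phi^{(2)}_{\mathrm{Haar}}\rVert\le\varepsilon$ in diamond (or operator) norm yields $\mathbb E_{\mathsf U}[X]=\tfrac1D\pm O(\varepsilon)$ and $\mathbb E_{\mathsf U}[X^2]=\tfrac{2}{D(D+1)}\pm O(\varepsilon)$, which are the first two displayed identities. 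Then I would apply Paley--Zygmund, $\Pr[X\ge\theta\,\mathbb E X]\ge(1-\theta)^2(\mathbb E X)^2/\mathbb E[X^2]$, with $\theta=\tfrac12$: since $\tfrac12\mathbb E X=\tfrac1{2D}\pm O(\varepsilon)\ge\tfrac1{2D}-O(\varepsilon)$ and $t\mapsto\Pr[X\ge t]$ is nonincreasing, $\Pr[X\ge\tfrac1{2D}]\ge\tfrac14\cdot\frac{(1/D)^2}{2/(D(D+1))}-O(\varepsilon)=\frac{D+1}{8D}-O(\varepsilon)=\Omega(1)$.

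\emph{Main obstacle.} The delicate point is the bookkeeping of the $O(\varepsilon)$ errors at the $1/D$ scale: the moment deviations are genuinely \emph{additive} $O(\varepsilon)$ (because $0\le X\le1$), so for the clean stated form one needs $\varepsilon=\varepsilon_{\mathrm{glob}}$ small compared with $1/D$ — this is exactly the ``sufficiently small $\varepsilon_{\mathrm{glob}}$'' hypothesis — to keep both the threshold comparison $\tfrac1{2D}\le\mathbb E X$ and the ratio $(\mathbb E X)^2/\mathbb E[X^2]$ under control; one should either make this quantitative (e.g.\ $\varepsilon=o(1/D)$) or carry explicit $\varepsilon$-dependent constants. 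Everything else (the Beta-distribution moment computation and the Paley--Zygmund step) is routine.
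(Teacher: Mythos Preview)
Your proposal is correct and follows essentially the same approach as the paper: compute the Haar first and second moments of $X$, transfer them to the $\varepsilon$-approximate $2$-design via the channel-norm definition, and apply Paley--Zygmund at threshold $1/(2D)$. In fact the paper gives no separate proof block---the argument is the displayed inequality in the corollary statement itself---so your write-up is strictly more detailed (the Beta$(1,D-1)$ identification, the explicit channel evaluation for the moment transfer, and the careful remark that the additive $O(\varepsilon)$ error must be small against $1/D$) than what the paper provides; the only cosmetic difference is that the paper writes Paley--Zygmund in the form $\Pr[X\ge t]\ge(\mathbb E X-t)^2/\mathbb E[X^2]$ rather than the $(1-\theta)^2(\mathbb E X)^2/\mathbb E[X^2]$ version you use, which are equivalent here.
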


\noindent

\paragraph{Design bounds as controls: Unitary $1$-designs vs Unitary $2$-designs.}
Unitary $1$-designs give \emph{means} while unitary $2$-designs add \emph{dispersion} and \emph{correlation} control and calibrate pairwise covariances of quadratic observables. In combinatorial
optimization this means we can quantify anticoncentration \emph{within} the encoded space; and separate genuine instance–parameter structure (like persistent peaks above $1/D$) from labeling artifacts. In variational training, $2$-design control on second moments stabilizes the statistics of loss estimates (and common gradient/finite-difference estimators that depend on degree-$\le2$ unitary moments), yielding instance–size predictable variance and thus shot requirements. For our constraint-enhanced QAOA, the $1$-design perspective explains \emph{why} favorable parameters must exist (and provides a clean null/control), whereas the $2$-design refinement explains \emph{how reliably} we can \emph{find and certify} them. 

\medskip
\noindent
Finally, a caveat. Unitary $t$–design bounds (including our $1$– and per–block $2$–design controls) are
\emph{conservative}: they certify correct means/second moments under structured randomization, but they do
not translate into tight, instance–specific shot complexity. In particular, the $1$–design baseline $1/D$
is an analytic \emph{null}—useful for existence and for flattening controls via twirls—but it grossly
overestimates the shots needed when constructive interference lifts the success probability well above $1/D$.
Likewise, second–moment control ($O(m\varepsilon)$ product moments) stabilizes estimators and separates signal
from labeling artifacts, yet it is not a substitute for the binomial tails of the \emph{actual} success rate
$p(\gamma,\beta)$ attained by CE–QAOA at good angles. For shot complexity we therefore appeal to the
realized Bernoulli scaling $N=\tilde O(1/p_\star)$ (via Chernoff (Lemma \ref{lem:chernoff})), where in our setting
$p_\star$ empirically behaves like $n^{-k}$ once only $k\ll m$ blocks remain unaligned. This yields polynomial
shot budgets, in stark contrast to the exponential $\Theta(D)$ implicit in the $1$–design null (Cf. Classical competitveness in Sec \ref{sec:classical-competitiveness} vs. Sec \ref{sec:discussion}).

\subsection{Classical Competitiveness I: The Unitary t-Design Baseline}
\label{sec:classical-competitiveness}

The comparison with classical procedures hinges on \emph{what domain the classical sampler can efficiently prepare}. Our CE--QAOA acts \emph{inside} the encoded one–hot product space \(\OH=[n]^m\) of size \(D=n^m\). A classical algorithm, however, may or may not have an efficient generator for \(\OH\). We therefore separate two information models and stay strictly within first/second moment reasoning where permutation twirls achieve overlap at least \(1/D\). Throughout, let \(F\subseteq \OH\) be the feasible set, \(|F|=:S\), and let \(U\) be the depth-\(p\) CE--QAOA circuit restricted to the encoded space.  We now make the “ambient domain” explicit.

\paragraph{Model A (structured domain).}
Both quantum and classical procedures can sample arbitrarily from the encoded domain
\(\OH=[n]^m\) (size \(D=n^m\)), and both have an oracle answering feasibility membership in \(F\).

\paragraph{Model B (raw bitstrings only).}
The classical procedure sees only raw bitstrings of length \(N\) (for TSP/assignment, \(N=n^2\))
and has access to a feasibility predicate \(\mathsf{feas}(z)\in\{0,1\}\) that recognizes whether
a bitstring encodes a feasible element of \(F\subseteq \OH\). No efficient generator for \(\OH\)
is assumed. The quantum procedure, by construction, prepares superpositions \emph{inside} \(\OH\).

\begin{proposition}[Model A: matching \(\Theta(D/S)\) baselines]
\label{prop:modelA}
Suppose both sides can sample uniformly from \(\OH\) and query \(\mathsf{feas}\).
Then any classical algorithm that draws i.i.d.\ proposals from \(\OH\) has success probability \(S/D\) per trial, so the expected number of trials to hit \(F\) is \(D/S\).
For CE--QAOA with a random twirl, Lemma~\ref{lem:perm-twirl} implies the same expected success \(S/D\) for ``hit any feasible'', hence the same \(D/S\) expected trials.
\end{proposition}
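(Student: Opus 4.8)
The plan is to observe that both the classical and the quantum bounds reduce to the same two ingredients: (i) identify the per-trial probability $p$ of producing a feasible sample, and (ii) invoke the elementary fact that an i.i.d.\ Bernoulli$(p)$ sequence yields its first success after $1/p$ trials in expectation. The feasibility predicate $\mathsf{feas}$ plays no role in the probability computation — it only certifies a sample post hoc — so it can be ignored throughout. The classical half is essentially immediate; the quantum half is a one-line corollary of Lemma~\ref{lem:perm-twirl} together with linearity of expectation.

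For the classical side I would argue as follows. A single uniform draw from $\OH$ lands in $F$ with probability $|F|/|\OH| = S/D$. Successive draws are independent, so the index of the first feasible draw is geometrically distributed with parameter $S/D$; hence its expectation is $\sum_{k\ge 1} k\,(1-S/D)^{k-1}(S/D) = D/S$. This is the $\Theta(D/S)$ classical baseline.

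For the quantum side I would fix grid angles $(\gamma,\beta)$, set $U := U_M(\beta)U_C(\gamma)$ and $\ket{\phi} := \ket{s_0}$, and model each shot as: draw an independent uniformly random blockwise permutation $\mathsf P = \bigotimes_{b} \mathsf P_b$, prepare $U\ket{s_0}$, and measure after conjugating by $\mathsf P^\dagger$ (equivalently, relabel the recorded bitstring). For a fixed twirl $\mathsf P$, the probability that the shot returns some feasible $x\in F$ is $p(\mathsf P) = \sum_{x\in F}|\langle x|\mathsf P^\dagger U|s_0\rangle|^2$. Lemma~\ref{lem:perm-twirl} gives $\mathbb E_{\mathsf P}[\,|\langle x|\mathsf P^\dagger U|s_0\rangle|^2\,] = 1/D$ for every product basis vector $x$; summing over the $S$ feasible labels and using linearity of expectation yields $\mathbb E_{\mathsf P}[p(\mathsf P)] = S/D$. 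Since each shot uses a fresh independent twirl, the feasibility indicators across shots are i.i.d.\ Bernoulli$(S/D)$, so the first feasible shot again arrives after $D/S$ shots in expectation, matching the classical baseline.

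The only step that needs a moment's care — what I would flag as the main obstacle, though it is minor — is the distinction between refreshing the twirl each shot versus fixing one twirl for the whole run. With a single fixed $\mathsf P$ the expected shot count is $1/p(\mathsf P)$, and Jensen's inequality only gives $\mathbb E_{\mathsf P}[1/p(\mathsf P)] \ge 1/\mathbb E_{\mathsf P}[p(\mathsf P)] = D/S$, so the clean equality $D/S$ relies on the per-shot randomization that makes the outcome sequence genuinely i.i.d. Since the twirl here is explicitly an analytic control (cf.\ the discussion following Lemma~\ref{lem:perm-twirl}) rather than something one would deploy in practice, recording ``fresh twirl per shot'' as the operating hypothesis is all that is required, and the rest is just the geometric-distribution identity plus linearity of expectation.
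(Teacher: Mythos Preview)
Your proposal is correct and aligns with the paper's own treatment, which in fact provides no separate proof environment for this proposition: the justification is embedded in the statement itself (uniform sampling for the classical side; Lemma~\ref{lem:perm-twirl} plus linearity for the quantum side). Your write-up simply makes explicit the geometric-distribution computation and the summation over $x\in F$ that the paper leaves implicit, and your flagged distinction between per-shot versus fixed twirling is a useful clarification the paper does not spell out.
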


\begin{proposition}[Model B: raw-bitstring baseline]
\label{prop:modelB}
Let the classical ambient space be \(\{0,1\}^{N}\) with \(N=n^2\) (for \(n{=}m\)), of size \(2^{N}\),
and assume only the feasibility predicate \(\mathsf{feas}\) is available.
Under the minimax distribution in which \(F\) is uniform among all \(S\)–subsets of \(\{0,1\}^{N}\),
the expected number of feasibility queries until the first hit is
\[
  \mathbb{E}[\#\text{queries}]
  \;=\; \frac{2^{N}+1}{S+1}
  \;=\;\Theta\!\left(\frac{2^{n^2}}{S}\right).
\]
In contrast, CE--QAOA prepares states \emph{inside} \(\OH\)  and,
with only the first–moment guarantee, has expected trials \(D/S=\Theta(n^n/S)\).
Thus the ratio of classical (Model B) to quantum expected trials scales like
\[
  \frac{2^{n^2}/S}{\,n^n/S\,}
  \;=\; \frac{2^{n^2}}{n^n}
  \;=\; \Bigl(\frac{2^n}{n}\Bigr)^{\!n}
  \;=\; \exp\!\bigl(\Theta(n^2)\bigr).
\]
\end{proposition}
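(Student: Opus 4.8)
The plan is to establish the classical lower bound by an averaging (Yao-style) argument together with a short combinatorial identity, and then to pair it with the quantum upper bound already available from Lemma~\ref{lem:perm-twirl} and Proposition~\ref{prop:modelA} to read off the separation.

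\textbf{Step 1 (reduce to a fixed query sequence).} First I would observe that to lower bound the worst-case number of $\mathsf{feas}$-queries of \emph{any} classical algorithm it suffices to lower bound its expected cost against the stated prior, in which $F$ is uniform over the $\binom{2^N}{S}$ subsets of $\{0,1\}^N$ of size $S$ (a randomized algorithm is a distribution over deterministic ones, so the worst case is at least this average). The key structural point is that before the first feasible hit every call to $\mathsf{feas}$ returns $0$; an adaptive algorithm therefore receives no information that distinguishes its next probe, and re-querying a previously seen string is never beneficial. Hence, without loss of generality, the algorithm probes a fixed sequence of distinct strings $z_1,z_2,\dots$, and the quantity to analyze is $Q$, the index of the first $z_k\in F$.

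\textbf{Step 2 (tail sum and hockey-stick identity).} Next I would compute the tail $\Pr[Q>k]$, the probability that none of $z_1,\dots,z_k$ lies in $F$. Since $F$ is a uniform $S$-subset, this equals $\binom{2^N-k}{S}\big/\binom{2^N}{S}$ for $0\le k\le 2^N-S$ (and $0$ otherwise). Summing the tail and substituting $j=2^N-k$,
\[
\E[Q]=\sum_{k\ge 0}\Pr[Q>k]=\frac{1}{\binom{2^N}{S}}\sum_{j=S}^{2^N}\binom{j}{S}=\frac{\binom{2^N+1}{S+1}}{\binom{2^N}{S}}=\frac{2^N+1}{S+1},
\]
where I use the hockey-stick identity $\sum_{j=S}^{2^N}\binom{j}{S}=\binom{2^N+1}{S+1}$ and then cancel factorials. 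With $N=n^2$ and $1\le S\ll 2^{n^2}$ this is $\Theta(2^{n^2}/S)$; being the value against a fixed input distribution, it lower bounds the worst-case query complexity of every (adaptive, randomized) classical algorithm.

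\textbf{Step 3 (assemble the separation).} On the quantum side I would simply reuse Lemma~\ref{lem:perm-twirl} exactly as in Proposition~\ref{prop:modelA}: CE--QAOA prepares amplitudes supported inside $\OH$ with $D=\dim\OH=n^m$, a blockwise-twirled shot hits $F$ with probability $S/D$, so the expected number of shots to hit $F$ is $D/S$. Specializing to the TSP/assignment encoding $m=n$ gives $D=n^n$ and expected trials $n^n/S$. Dividing the classical and quantum bounds,
\[
\frac{2^{n^2}/S}{\,n^n/S\,}=\frac{2^{n^2}}{n^n}=\Bigl(\frac{2^n}{n}\Bigr)^{\!n}=\exp\!\bigl(n(n\ln 2-\ln n)\bigr)=\exp\!\bigl(\Theta(n^2)\bigr),
\]
which is the claimed minimax separation. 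The only step requiring genuine care is Step~1 — the standard Yao/minimax reduction together with the ``no information before the first hit'' observation that collapses adaptivity — since the identity in Step~2 and the arithmetic in Step~3 are routine.
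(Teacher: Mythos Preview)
Your proof is correct. The paper does not actually supply a proof for this proposition---it simply states the formula $(2^N+1)/(S+1)$ and the ratio as facts---so your argument fills in details the authors left implicit. Your Yao-style reduction in Step~1 (no information before the first hit collapses adaptivity to a fixed probe order) and the hockey-stick computation in Step~2 are exactly the right way to justify the exact expectation; Step~3 is just the arithmetic the paper performs inline. If anything, your treatment is more careful than the paper's, which takes the minimax baseline as folklore.
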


\noindent
Thus, CE--QAOA already achieves a conditional quantum-classical separation from design only bounds. However, we as we have already discussed, this is a mere theoretical floor independent of constructive interference present when the twirls are not implemented. As we saw in Fig \ref{fig:full-Blockqaoa}, when the twirling is omitted, heavy outputs appear due to instance dependent constructive interference. As illustrated in Fig \ref{fig:heavy}, these instance dependent heavy outputs pushes the probability of sampling the optimum from the design baseline into the finite probability regime. Additional contributions from parameter optimization are discussed in Sec. \ref{sec:discussion}.

\subsection{Polytime Hybrid Quantum–Classical Solver (PHQC)}
\label{sec:PHQC}

We now extend the algorithmic pipeline to include a \emph{coarse grid} search over  $(\gamma,\beta)$ parameters as used in Sec. \ref{app:1-design} and an \emph{exact classical checker} that returns the lowest energy feasible bitstring regardless of its empirical frequency.  Let \(\beta\in[0,\pi]\), \(\gamma\in[0,\pi]\). For a chosen depth \(p\), we build a rectangular grid
\[
\mathcal G\;=\;\{(\beta_i,\gamma_j):\ \beta_i = i\,\Delta_\beta,\ \gamma_j = j\,\Delta_\gamma,\ 
0\le i\le N_\beta,\ 0\le j\le N_\gamma\},
\]
with spacings \(\Delta_\beta,\Delta_\gamma \in[0,\pi]^{\,n+1}\) as in Theorem \ref{thm:exist-params}. Given a multiset of measured bitstrings \(\mathcal S\), the checker evaluates \emph{exact} objective values \(E(b)=\langle b|H_{\mathrm{obj}}|b\rangle\) for all \(b\in\mathcal S\) and returns the minimal element. Consequently, \emph{it is irrelevant whether the optimal bitstring is the most frequent sample}. Any single appearance of an optimal \(b^\star\) suffices for the checker to output the optimum. This strengthens the overall protocol and minimizes the shot budget. The deterministic post-processing performs scoring \(b\mapsto \langle b|H_C|b\rangle\) in \(O(n^{2})\) with total classical work \(O(S\,n^2)\).

\begin{lemma}[Chernoff bound for optimum hits]\label{lem:chernoff}
Let \(N_{\star}\) be the number of \emph{optimal} bit-strings observed
in \(S\) independent samples, each occurring with probability
\(p_{\min}(\epsilon)\).  Then
\[
    \Pr\!\bigl[N_{\star}=0\bigr]
    \;\le\;
    \exp\!\bigl(-p_{\min}(\epsilon)\,S\bigr),
    \qquad
    S\;\ge\;\Bigl\lceil
        \tfrac{\ln(1/\delta)}{\,p_{\min}(\epsilon)}
    \Bigr\rceil
    \;\Longrightarrow\;
    \Pr[N_{\star}\ge1]\;\ge\;1-\delta .
\]
\end{lemma}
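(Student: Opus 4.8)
The plan is to treat this as the elementary lower-tail estimate for a binomial random variable. First I would observe that, since the $S$ draws are independent and each is ``optimal'' with (at least) probability $p_{\min}(\epsilon)$, the count $N_\star$ stochastically dominates a $\mathrm{Binomial}(S,p_{\min}(\epsilon))$ variable; in particular the event $\{N_\star=0\}$ is the event that every one of the $S$ draws misses the optimal set, which has probability exactly $\bigl(1-p_{\min}(\epsilon)\bigr)^{S}$ (or at most this, if $p_{\min}$ is only a lower bound on the per-shot hit probability). Then I would apply the standard inequality $1-x\le e^{-x}$, valid for all real $x$, with $x=p_{\min}(\epsilon)$, to conclude
\[
\Pr\!\bigl[N_\star=0\bigr]\;=\;\bigl(1-p_{\min}(\epsilon)\bigr)^{S}\;\le\;e^{-p_{\min}(\epsilon)\,S},
\]
which is the first claim.

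For the second claim I would simply invert this bound: requiring $e^{-p_{\min}(\epsilon)\,S}\le\delta$ is equivalent to $p_{\min}(\epsilon)\,S\ge\ln(1/\delta)$, i.e.\ $S\ge \ln(1/\delta)/p_{\min}(\epsilon)$, and since $S$ is an integer it suffices to take $S\ge\lceil \ln(1/\delta)/p_{\min}(\epsilon)\rceil$. Under this choice, $\Pr[N_\star\ge 1]=1-\Pr[N_\star=0]\ge 1-e^{-p_{\min}(\epsilon)\,S}\ge 1-\delta$, as stated. One can equivalently phrase the first step as a union bound over the $S$ shots of the per-shot miss probability, but the exact product form is cleaner here.

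There is no real obstacle: the only points worth a sentence of care are (i) making explicit that independence of the shots is what licenses the product $\bigl(1-p_{\min}\bigr)^{S}$ rather than a mere union bound, and (ii) noting the monotonicity that lets us replace the true per-shot success probability $p(\gamma,\beta)$ by any lower bound $p_{\min}(\epsilon)$ without affecting the direction of the inequality — this is what makes the lemma usable downstream, where $p_{\min}(\epsilon)$ will be instantiated by the $1/D$ (or heavy-output $c/D$, resp.\ $n^{-k}$) bounds from Theorem~\ref{thm:exist-params} and Corollary~\ref{cor:shots}.
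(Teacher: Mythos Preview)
Your proposal is correct and follows essentially the same approach as the paper's own proof: both compute $\Pr[N_\star=0]=(1-p_{\min})^S$ from independence, bound it via $1-x\le e^{-x}$, and then invert to obtain the shot requirement. Your additional remarks on stochastic domination and monotonicity in $p_{\min}$ are sound elaborations but not needed for the lemma as stated.
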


\begin{proof}
Let \(X_{i}\sim\mathrm{Bernoulli}\!\bigl(p_{\min}(\epsilon)\bigr)\) be the
indicator for an optimum hit in shot \(i\).  Then
\(N_{\star}=\sum_{i=1}^{S} X_{i}\) with
\(\E[N_{\star}]=S\,p_{\min}(\epsilon)\).
Using \(\Pr[N_{\star}=0]=(1-p_{\min})^{S}\le e^{-p_{\min}S}\) proves the
first inequality; the shot bound follows by inversion.
\end{proof}

\begin{theorem}[Sample complexity for perfect recovery]
\label{thm:sample_complexity}
Let \(p_{\min}>0\) be the total probability mass on the optimal
bit-strings after \(p\) Constraint-Enhanced QAOA layers.
With
\(
    S\;\ge\;\bigl\lceil\ln(\delta^{-1})/p_{\min}\bigr\rceil
\)
shots, the PHQC post-processor returns a \emph{globally optimal}
solution with probability at least \(1-\delta\).
\end{theorem}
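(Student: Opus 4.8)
The plan is to obtain this as an immediate consequence of the Chernoff estimate in Lemma~\ref{lem:chernoff} together with the (obvious but essential) correctness of the deterministic checker inside PHQC. By hypothesis we are handed a circuit configuration — depth $p$ and angles $(\gamma,\beta)$ — whose output distribution assigns total probability mass $p_{\min}>0$ to the set $B^\star$ of globally optimal feasible bitstrings. First I would record that the $S$ shots are i.i.d.\ draws from this fixed output distribution, so the indicator $X_i$ of the event ``shot $i$ returns some $b^\star\in B^\star$'' is $\mathrm{Bernoulli}(q)$ with $q\ge p_{\min}$, and $N_\star:=\sum_{i=1}^S X_i$ counts the number of optimum hits in the sampled multiset $\mathcal S$. (The existence of at least one configuration with $p_{\min}>0$ is not needed for this theorem but is supplied by Theorem~\ref{thm:exist-params} and Corollary~\ref{cor:feasible-optimum}, which furnish $p_{\min}\ge c/D$ at a grid point of $\mathcal G_{n+1}$; the empirically larger heavy-output values are discussed around Corollary~\ref{cor:shots}.)

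Next I would invoke Lemma~\ref{lem:chernoff}, using only monotonicity in the hit probability: $\Pr[N_\star=0]=(1-q)^S\le(1-p_{\min})^S\le e^{-p_{\min}S}$. Taking $S\ge\bigl\lceil\ln(\delta^{-1})/p_{\min}\bigr\rceil$ forces $e^{-p_{\min}S}\le\delta$, hence $\Pr[N_\star\ge1]\ge1-\delta$. This is the entire probabilistic content of the statement.

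Finally I would close the argument by appealing to the structure of PHQC (Algorithm~\ref{alg:PHQC1}): the classical post-processor evaluates the \emph{exact} objective $E(b)=\langle b|H_{\mathrm{obj}}|b\rangle$ on every sampled bitstring $b\in\mathcal S$ and returns an element of $\arg\min_{b\in\mathcal S}E(b)$. On the event $\{N_\star\ge1\}$ at least one globally optimal $b^\star$ lies in $\mathcal S$; since $E(b^\star)$ is by definition the global minimum of the objective over all feasible points, it is in particular $\le E(b)$ for every $b\in\mathcal S$, so the checker outputs a globally optimal solution (degeneracy among optima only enlarges the success event). Combining with the previous paragraph gives success probability $\ge1-\delta$. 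If one additionally wants the guarantee stated over the full coarse grid $\mathcal G$ rather than at a single favorable $(\gamma,\beta)$, one observes that aggregating shots across all $|\mathcal G|$ grid points only enlarges $\mathcal S$ and hence the event that an optimum is present, so the bound holds a fortiori, with deterministic post-processing cost $O(|\mathcal G|\,S\,n^2)=\poly{n}\cdot S$.

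The only real subtlety — and it is one the theorem deliberately pushes outside its own scope by \emph{assuming} $p_{\min}>0$ — is quantitative: to turn the bound $S=\lceil\ln(\delta^{-1})/p_{\min}\rceil$ into a \emph{polynomial} shot budget one needs a lower bound on $p_{\min}$ that does not decay super-polynomially, which is exactly what the $1$-design existence result (Theorem~\ref{thm:exist-params}) and the heavy-output phenomenon (Corollary~\ref{cor:shots}, Figure~\ref{fig:heavy}) are invoked to provide. Given that input, the proof above is essentially a one-line Chernoff inversion plus the remark that an exact checker never discards an optimal sample once it has appeared.
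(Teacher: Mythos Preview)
Your proof is correct and follows the same approach as the paper: invoke Lemma~\ref{lem:chernoff} with $p_{\min}(\epsilon)=p_{\min}$ to get $\Pr[N_\star\ge1]\ge1-\delta$, then note that the exact checker in Algorithm~\ref{alg:PHQC1} returns a global optimum whenever one appears in $\mathcal S$. The paper's proof is a single-line sketch (``Apply Lemma~\ref{lem:chernoff} with $p_{\min}(\epsilon)=p_{\min}$''); you have simply unpacked the checker-correctness step and the grid-aggregation remark explicitly, which is fine but not a different argument.
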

\begin{proof}[Sketch]
Apply Lemma \ref{lem:chernoff} with \(p_{\min}(\epsilon)=p_{\min}\).
\end{proof}

\begin{algorithm}[H]
\caption{\textbf{PHQC} — constant-depth CE--QAOA + deterministic checker}
\label{alg:PHQC1}
\begin{algorithmic}[1]
\Require size $n$; depth $p$; grid $\mathcal G$ over $(\beta,\gamma)$; $H_C=\Hpen+\Hobj$; mixer $H_M$;
         shots $S=\lceil n^{k}\ln(1/\delta)\rceil$.
\Ensure optimal feasible $(b^\star,c^\star)$.
\State $\ket{s_0}\gets \ket{s_{blk}}^{\otimes m}$.
\State $\mathcal S\gets\emptyset$.
\For{each $(\beta,\gamma)\in\mathcal G$}
  \State prepare $\ket{s_0}$; apply $\UC(\gamma)$; apply $\UM(\beta)$; measure $S$ shots into $\mathcal S$.
\EndFor
\State $(b^\star,c^\star)\gets(\text{null},+\infty)$.
\For{each $b\in\mathcal S$}
  \If{\textsc{Feasible}$(b)$}
    \State $c\gets \langle b|\Hobj|b\rangle$ \Comment{or full $H_C$ if you score penalties}
    \If{$c<c^\star$} \State $(b^\star,c^\star)\gets(b,c)$ \EndIf
  \EndIf
\EndFor
\State \Return $(b^\star,c^\star)$.
\end{algorithmic}
\end{algorithm}



\section{Circuit Simulation Results}
\subsection{Traveling Salesman Problem}
\label{subsec:tsp-encoding}

The Traveling Salesman Problem (TSP) is defined on a complete weighted graph
\(G=(V,E)\), \(|V|=n\), with distance matrix \(C_{ab}\in\mathbb{R}_{\ge 0}^{n\times n}\). A tour is a Hamiltonian cycle visiting each city exactly once. We represent a tour by an \(n\times n\) binary assignment matrix \(X=(x_{i,a})\) where
row \(i\) indicates the city visited at \emph{position} \(i\) in the tour and column \(a\)
indicates \emph{which} city. Each position is assigned exactly one city and each city appears exactly once,
\begin{equation}
\sum_{a=1}^{n} x_{i,a}=1\quad(\forall i),\qquad
\sum_{i=1}^{n} x_{i,a}=1\quad(\forall a),\qquad x_{i,a}\in\{0,1\}.
\end{equation}
This is the double one-hot encoding. It matches our block structure defined in Def. \ref{def:kernel-requirement} with each of the \(m=n\) rows defined as a one-hot block of size \(n\). 

\noindent
On the computational basis \(x\mapsto X=(x_{i,a})\)\footnote{We write a computational-basis string $x\in\{0,1\}^{n^{2}}$ in
matrix form $X=(x_{i,a})$ only to make the block structure explicit. One may simply treat $x$ itself as the $n\times n$ assignment matrix with entries $x_{i,a}$. The reshape $x\mapsto X$ merely clarifies the
row/column one–hot constraints and the cyclic pairwise objective
$H_{\mathrm{obj}}(x)=\sum_{i,a,b} C_{ab}\,x_{i,a}x_{i+1,b}$, which naturally
refer to this matrix structure.}
 and the \emph{objective} is the cyclic pairwise cost between consecutive positions:
\begin{equation}
\label{eq:tsp-obj}
H_{\mathrm{obj}}(x)\;=\;\sum_{i=1}^{n}\ \sum_{a,b=1}^{n} C_{ab}\, x_{i,a}\, x_{i{+}1,b}
\qquad (i{+}1\equiv 1).
\end{equation}
Feasibility is enforced by quadratic \emph{penalties} for row/column sums,
\begin{equation}
\label{eq:tsp-pen}
H_{\mathrm{pen}}(x)\;=\;\lambda_{\mathrm{row}}\sum_{i=1}^{n}\!\Bigl(\sum_{a=1}^{n}x_{i,a}-1\Bigr)^{2}
\ +\ \lambda_{\mathrm{col}}\sum_{a=1}^{n}\!\Bigl(\sum_{i=1}^{n}x_{i,a}-1\Bigr)^{2},
\end{equation}
optionally augmented by linear forbids (e.g., disallowing self-loops or precluded edges). The \emph{double one-hot} constraints enforce a permutation of the visited locations. The first term in Eq~\ref{eq:tsp-pen} is now redundant and can be dropped.  The \emph{cost Hamiltonian} is \(H_C=H_{\mathrm{pen}}+H_{\mathrm{obj}}\), which is diagonal in the computational basis and fits our kernel requirements in Def \ref{def:kernel-requirement}. See \cite{Lucas2014Ising} for detailed QUBO derivations and discussions.  

\subsection{Numerical validation on approximate circuit simulators}
\label{subsec:numerics-hw}

We benchmark the CE-QAOA on QOPTLib instances\cite{Osaba2024Qoptlib} by setting a fixed start city (``anchored'' reduction), which yields $(n{-}1)^2$ logical qubits (blocks $=n{-}1$, block size $=n{-}1$) for a problem defined on $n$ locations. We work with a single layer composed as $U(\gamma,\beta)=U_M(\beta)\,U_C(\gamma)$. We sweep a grid $\mathcal G \subset [0,\pi]^2$ with $|\mathcal G|=(n{+}1)^2$ angle pairs, and use a shot budget of \(S\;=\;10\,n^{k}\) with $k$ ranging from $3-5$ which comfortably absorbs finite-precision and rounding effects of the grid search  and finite sampling since the number of qubits scales as $O(n^2)$. For each grid point we measure in the computational basis, classically filter feasibile solutions, and evaluate cyclic tour costs relative to the fixed start. We report (i) the best measured feasible solution, (ii) the fraction of feasible outcomes, and (iii) the minimum tour cost. In Table \ref{tab:design-bound}, the PHQC recovered the optimum in all cases. The fact that Algorithm \ref{alg:PHQC1} guarantees any sampled optimal bitstring to be identified regardless of its frequency, greatly reduces the shot burden in our implementation. \footnote{For the larger system sizes (dj8-dj10) the per grid shot budget had to be raised because the circuit approximation struggled to capture the full dynamics. Similarly, for our smaller instances where full circuit simulation was possible, a lower $10n^2$ shots was found sufficient without increasing layer or grid size. }

\begin{table}[H]
\centering
\caption{Single–layer CE--QAOA sampling with design–based angles and shot budget \(S=10\,n^{3-5}\) used to recover the global optimum. For a fair comparison, we assume a classical sampler with access to the block one-hot space so that the classical shot cost formula \(S_{\mathrm{cl}} \;=\; \Theta\!\big(n^{\,m-1}\log(1/\delta)\big)
\)  holds. We set $\log(1/\delta)=10$ for both quantum and classical models and $n^r$ refers to the ratio in Eq. \ref{eq:adva}.  } 
\label{tab:design-bound}

\adjustbox{max width=\linewidth}{%
\begin{tabular}{@{}lrrrrll@{}}
\toprule
Instance & Cities \(n\) & Shots \(S\) & Advantage $(n^{\,r})$ & Tour cost & Optimal & $(\gamma,\beta)$ \\
\midrule
\texttt{wi4}  & 4  & \(160\)              & $n^1$    & 6{,}700 & Yes      & \((1.57,2.36)\) \\
\texttt{wi5}  & 5  & \(250\)          & $n^2$   & 6{,}786 & Yes      & \((1.57,2.36)\) \\
\texttt{wi6}  & 6  & \(360\)          & $n^3$   & 9{,}815 & Yes      & \((2.62,2.09)\) \\
\texttt{wi7}  & 7  & \(733\)          & $>n^3$   & 7{,}245 & Yes      & \((3.14, 2.62)\) \\
\texttt{dj8}  & 8  & \(2.56{\times}10^4\) & $>n^3$& 2{,}762 & Yes      & \((1.35,2.24)\) \\
\texttt{dj9}  & 9  & \(3.65{\times}10^4\) & $>n^4$& 2{,}134 & Yes      & \((0.30,1.60)\) \\
\texttt{dj10} & 10 & \(1.00{\times}10^6\)  & $ > n^4$& 2{,}822 & Yes      & \((1.01, 1.57)\)
\\
\bottomrule
\end{tabular}}
\end{table}

 \footnote{A note on classically sampling from feasible permutations.
Even if a classical sampler draws \emph{directly} and uniformly from feasible bitstrings, the success probability for a unique optimum is
\[
p_{\mathrm{succ}}=\frac{1}{n!}
\quad\Rightarrow\quad
S_{\mathrm{cl}}=\Theta\!\big(n!\,\log(1/\delta)\big).
\]
Given the rapid growth of \(n!\), the advantage persists, with tweaks in prefactors and orders of magnitude.
}

\subsection{Classical Competitiveness II: \texorpdfstring{$\Theta(n^{r})$}{Theta(n\string^r)} Sampling Advantage}

\label{sec:discussion}
The $\varepsilon$-approximate $2$-design on the \emph{encoded} manifold gives Haar-like second-moment anticoncentration at scale $1/D$ with $D=n^{m}$ (Thm.~\ref{thm:global-2design}), which is a \emph{conservative} worst-case sampling floor and a predictor of typical-case overlap statistics (Cor.~\ref{cor:pz-constant}). However, there are two empirical observations to be made at depth $p{=}1$. (i) Polynomial-shot recovery across instances with a basic coarse grid search over the $(\gamma,\beta)$ parameter space. (ii) Systematic \emph{peaks above $1/D$} when the block-permutation twirl is \emph{omitted} (Fig.~\ref{fig:grid-search}) suggesting that the native labeling preserves instance structure and constructive interference. This is expected to remain the case at higher layers with $p>1$. These \emph{empirical} observations therefore \emph{motivate} a targeted search over angle pairs and shallow depths to find constructive-interference settings that \emph{achieve} per-shot success at the heuristic scale
\begin{equation}
\label{eq:cascade}
p_{\min}\ \gtrsim\ \frac{1}{n^{k}}\quad\text{for }k < m.\ 
\end{equation}

\medskip
\noindent
To see how this might arise under finer angle  and layer optimization, we can think of each fixed block in $m$ as systematically reducing the effective degrees of freedom from $m$ blocks down to $k$. We start in a uniform superposition of block one-hot basis vectors and assume that at least one block can be fixed at the beginning of the optimization process. This is no different from fixing the starting position in the TSP. The parameter optimization process fixes further block(s) to minimize the objective function. Thanks to the global constraint, each layer can be further assumed to slightly favor configurations that agree with a target pattern on a few blocks with a suitable angle choice. These blocks are then assumed \emph{locked}. Meaning that finer parameter optimization at current layer or next can yield a new set of parameters that improves the number of locks but not degrade it. We can think of a block as \emph{locked} once the target symbol in that block gains a fixed margin over the uniform $1/n$ baseline and $k$ is interpreted  as residual ``undecided'' blocks.  Because of the global constraint, subsequent layers leverage already–locked blocks to bias additional ones, producing a cascade that leaves only $k$ blocks unfixed.

\medskip
\noindent
At that point the success probability depends only on these $k$ degrees of freedom and is therefore bounded
below by $1/n^{\,k}$ independent of the ambient dimension $D=n^m$. The success probability in Eq. \ref{eq:cascade} follows. PHQC \emph{then} converts this into \emph{polynomial} shot complexity,
\(
S\ \ge\ \lceil \log(1/\delta)/p_{\min}\rceil
= O\!\big(n^{k}\log(1/\delta)\big)
\)
(Thm.~\ref{thm:sample_complexity}). With the classical shot cost given as  \(S_{\mathrm{cl}} \;=\; \Theta\!\big(n^{\,m-1}\log(1/\delta)\big)
\) shots, the ratio is

\begin{equation}
\label{eq:adva}
\frac{S_{\mathrm{cl}}}{S_{\mathrm{q}}} \;=\; \Theta\!\big(n^{\,r}\big),
\end{equation}
Where \(r=m-1-k\). Taking into account the clear upward trend in Table \ref{tab:design-bound}; with  $r = 6$ for $n=15$ problem,  the quantitative advantage exceeds $10^7$. It goes without saying that better circuit approximations would improve accuracy and amplify the advantage.

\subsubsection{Near-Optimality of Grid Parameters}
\noindent
It is also worth noting that our coarse parameter grid is far from optimal. We use it here as a sufficient condition for the existence of parameters that can yield the design based results and subsequent large reduction in shot cost expected when the twirling averaging is bypassed. The approximate 2-design results suggest inherent trainability and anticoncentration in the proposed construction. Empirically, increasing the the grid points yields more frequent near-optimal interference within the same shot budget. For example, at $n=8$, using $20\times 20$ grids for $(\gamma,\beta)$ (instead of $(n{+}1)\times(n{+}1)=9\times 9$) and the same shot budget of $10n^3=5120$ produced $12$ angle pairs attaining the optimum, and $40$ angle pairs within $1\%$ of the optimum and $229$ points within $5\%$ of the optimum. Whereas the coarser grid of size $n+1$ found only $2$ pairs of grid points attaining the optimum. Similar phenomenon was observed for problems with $n = 9 $ and $10$ respectively. This supports our choice to prioritize \emph{density over range} and treat $[0,\pi]^2$ as a sufficient compact tile for practical scans, while leaving systematic parameter optimization for future work.

\paragraph{Approximate circuit simulation as probes.}
All circuit evaluations are performed using a \emph{matrix–product–state} (MPS) simulator in the sense of Vidal’s tensor–network formulation~\cite{Vidal2003Efficient}. Specifically, we used the \texttt{matrix\_product\_state} backend of the IBM Qiskit \textsc{Aer} simulator\cite{Qiskit2023}. For reproducibility, we configured the backend in a deliberately \emph{aggressive truncation} regime with maximum bond dimension of $128$, truncation and validation thresholds of $10^{-3}$, and an amplitude--chopping threshold of $10^{-3}$. Even under these approximations, the CE--QAOA circuit consistently recovers optimal tours on instances of size $4$ to $10$ nodes. This strongly suggests that the diagonal entangler (Prop.~\ref{prop:diag-entangler}) and constant-gap block–XY mixer (Prop.~\ref{prop:spectral-gap}) induce useful interference patterns already at modest entanglement, aligning with the encoded $2$-design baseline (Thm.~\ref{thm:global-2design}) as a conservative control. Because truncation damps constructive interference, we expect the non-twirled peaks to sharpen under higher-fidelity simulation, increasing the single-shot success $p_{\min}$ and directly reducing the required shots at fixed confidence. Larger instances ($15$–$25$ nodes) were probed only at zero angles due to the prohibitive wall–time for full MPS simulation on commodity hardware. 

\medskip
\noindent
The truncations we employ motivate near-term \emph{hardware probes} on devices that admit the circuit width/depth of these instances or their compressed variants. Our grid over $(\gamma,\beta)$, define a calibration surface for \emph{algorithm--hardware co-design}. Compilers can target mixer locality; error-mitigation can prioritize preserving diagonal-phase coherence; and hardware can be tuned to maintain the short-depth interference that creates the observed peaks. The encoded $2$-design baseline (Thm.~\ref{thm:global-2design}) supplies a Haar-like \emph{null model} at scale $1/D$ against which hardware-induced deviations can be quantitatively assessed. We therefore propose our CE-QAOA kernel with Tables \ref{tab:design-bound} as references for \emph{probing} future hardwares. 

\paragraph{PHQC vs.\ classical exact/heuristic solvers.}
High-quality heuristics (like LKH implementation) routinely find near-optimal tours quickly~\cite{LinKernighan1973}, and exact methods (e.g., branch-and-cut as in \textsc{Concorde}) provide provably optimal solutions in practice on large benchmarks~\cite{Applegate2007}. Dynamic programming à la Bellman–Held–Karp offers a systematic exact approach with worst-case complexity $T_{\mathrm{HK}}=\Theta(n^{2}2^{n})$ time and $M_{\mathrm{HK}}=\Theta(n2^{n})$ memory~\cite{Bellman1962}. By contrast, our PHQC protocol gives a \emph{probabilistic} recovery guarantee where a per-shot optimal-hit probability $p_{\min}$ at selected shallow angles, $S\ge \lceil \ln(1/\delta)/p_{\min}\rceil$ shots suffice to obtain the optimal tour with probability at least $1-\delta$ (Thm.~\ref{thm:sample_complexity}). This is incomparable to the worst-case exponential bound of dynamic programming. While PHQC provides randomized \emph{anytime} certificates (via repeats to boost $1-\delta$), DP provides deterministic exactness with exponential resources. We leave a rigorous head-to-head comparison between exact solvers and PHQC to future work. However, we collect the first indication of the ability of the instance dependent heavy outputs to keep the per-shot success probability in the finite shot region in the Table \ref{tab:emp-verify} and Fig. \ref{fig:heavy-outputs}.

\begin{figure}[t]
  \centering
  \includegraphics[width=0.64\linewidth]{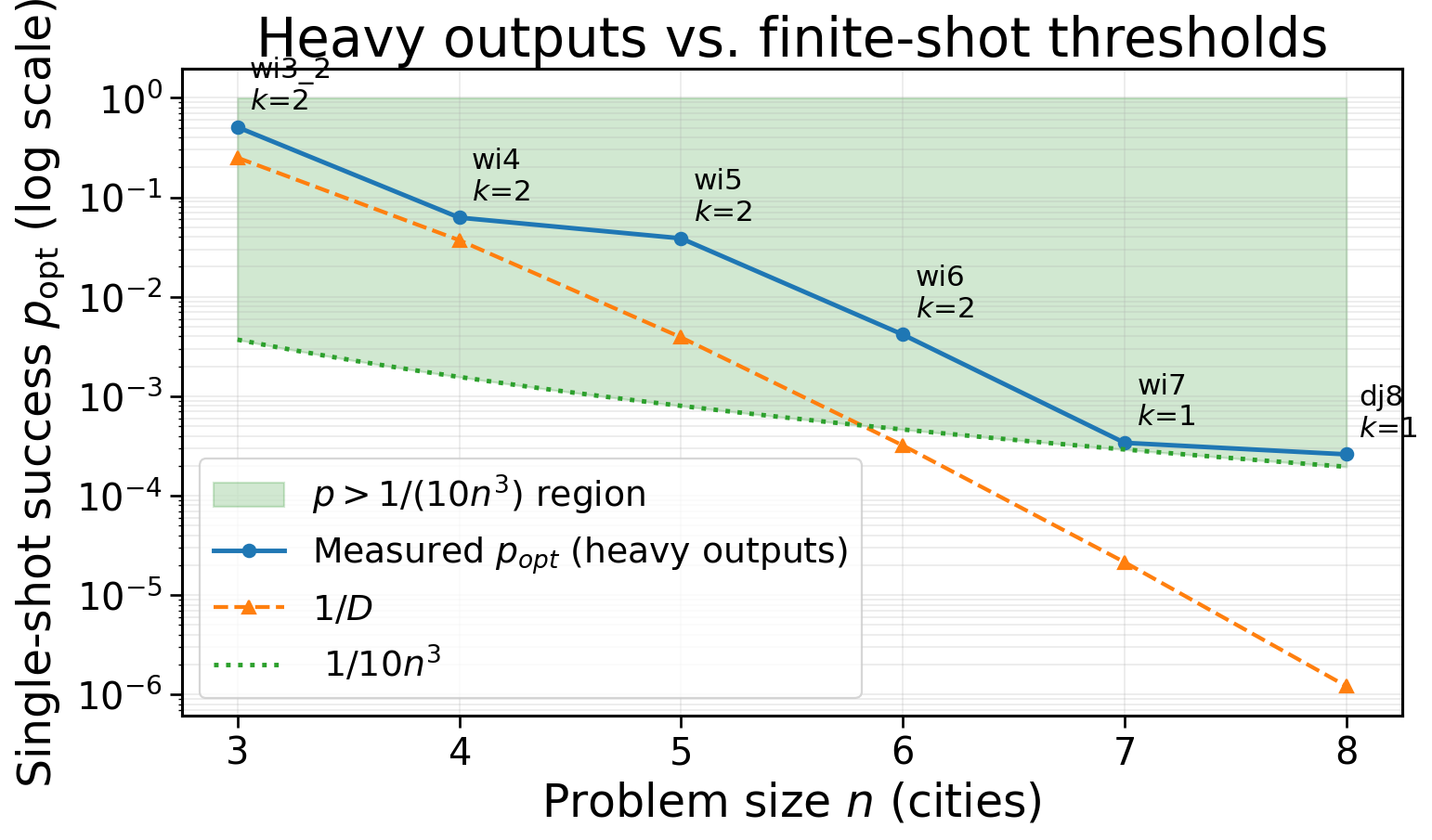}
  \caption{\textbf{Heavy outputs vs.\ finite-shot thresholds.}
    Shaded green shows the finite-shot region \(p \ge \tfrac{1}{10n^3}\).
    Theorem-level guarantees (yellow boundary in schematic figures) place us above this curve,
    while instance-dependent heavy outputs (blue points/curve) push into the finite-shot regime Cf. \ref{fig:heavy}.}
  \label{fig:heavy-outputs}
\end{figure}

\begin{table}[t]
  \centering
  \footnotesize
  \setlength{\tabcolsep}{3pt}
  \renewcommand{\arraystretch}{1.08}
  \begin{tabular}{@{}lrrrrlrlr@{}}
    \toprule
    Inst. & $n$ & $n_{\mathrm{red}}$ & \multicolumn{1}{r}{Shots} &
    \multicolumn{1}{r}{$k$} & $(\gamma,\beta)$ &
    \multicolumn{1}{r}{Opt} & Ver. & \multicolumn{1}{r}{$p_{\mathrm{opt}}$} \\
    \midrule
    \texttt{wi3\_2} & 3 & 2 & 540  & 2 & (0.57, 0.36) &
    5456 & PASS & \(5.1\times10^{-1}\) \\

    \texttt{wi4}    & 4 & 3 & 960  & 2 & (1.57, 2.36) &
    6700 & PASS & \(6.3\times10^{-2}\) \\

    \texttt{wi5}    & 5 & 4 & 1500 & 2 & (2.20, 2.44) &
    6786 & PASS & \(3.9\times10^{-2}\) \\

    \texttt{wi6}    & 6 & 5 & 2160 & 2 & (1.01, 1.75) &
    9815 & PASS & \(4.2\times10^{-3}\) \\

    \texttt{wi7}    & 7 & 6 & 2940 & 1 & (1.35, 2.69) &
    7245 & PASS & \(3.4\times10^{-4}\) \\

    \texttt{dj8}    & 8 & 7 & 3840 & 1 & (3.14, 1.35) &
    2762 & PASS & \(2.6\times10^{-4}\) \\
    \bottomrule
  \end{tabular}
  \caption{Empirical verification of polynomial single shot success probability from non-twirled circuit outputs. $k$ is the degeneracy observed in the sampled bitstrings while $n_{red}$ is teh number of blocks left after fixing the starting position. The twirling baseline \(=1/D\). The choice of \(\ln(1/\delta)=10\) corresponds to \(\delta=e^{-10}\approx4.5\times10^{-5}\), a very stringent failure probability (\(\approx 0.005\%\)). \(p_{\mathrm{opt}}\) sums probabilities over all degenerate optimal bitstrings. }
  \label{tab:emp-verify}
\end{table}

\section{Conclusion}
\label{sec:conclusion}

\medskip \noindent
We introduced CE--QAOA, a constraint–aware kernel that (i) prepares and preserves a block one–hot manifold with an ancilla–free, depth–optimal encoder and a constant–gap two–local block–XY mixer; (ii) establishes low–order unitary moments—an exact $1$–design via block–permutation twirling and an approximate $2$–design from short per–block XY evolutions combined with a diagonal inter–block entangler; (iii) yields a Paley–Zygmund anticoncentration and typical–case overlap guarantees at the $1/D$ scale for encoded dimension $D=n^m$; (iv) couples naturally to a deterministic classical post–processor (PHQC) that converts overlap into an end–to–end solver with explicit shot complexity; and (v) frames a quantum–classical separation based on access models that make explicit what the classical sampler can generate efficiently. Algorithmically, PHQC identifies the best feasible sample in $O(Sn^2)$ classical time. This implies that obtaining a shot complexity independent of the encoded dimension $D$ is the path to an unconditional quantum advantage with a polynomial time quantum-classical solver. We show empirically that in absence of permutation twirling, instance dependent interference leads to heavy ouputs in the probability profile far above the design baseline into the polynomial shot regime.

\medskip \noindent
 We subsequently develop an inductive explanation to suggest that this arises in CE-QAOA when only a fraction of blocks remain unaligned on the target solution during parameter optimization. Our observation suggests that when CE–QAOA is seeded with a fixed starting block, it concentrates weight so that success depends on the number $k$ of \emph{unaligned} blocks at the end of state preparation. Typically $k\ll m$, and the effective scale shifts from $1/n^{m}$ to about $1/n^{k}$, yielding orders–of–magnitude gains. Thus, we obtain a \(\Theta(n^r)\) shot–complexity advantage whenever CE–QAOA locks \(r\!\ge\!1\) blocks. This advantage would persist even if a classical sampler drew uniformly from the feasible space itself because classical sampling offers no mechanism to overcome the exponential suppression of success probability. We leave a rigorous proof of this phenomenon to future work.

\medskip \noindent
However, when a classical competitor only samples raw $N=n^2$ bitstrings and relies on a feasibility predicate, while CE--QAOA prepares states directly in $\OH$, the expected trial counts differ by an $\exp[\Theta(n^2)]$ factor in a minimax sense. We view these results as a step toward symmetry–inspired, shallow quantum heuristics with analyzable typical–case behavior and clean interfaces to efficient classical checking. Finally, our paper illustrates the practical impact of encoded designs in optimization problems. We observe that due to the typical behaviour results that emerge from this method, the bounds always tend to be conservative. We hope that future works are able to go beyond unitary designs to prove a worst case shot complexity independent of the encoded dimension $D$, paving the way for unconditional quantum advantage.


\begin{appendix}

\section{CE-QAOA Circuit Diagrams and Block-Constraint Example}
\label{app:blocks}

In a typical QAOA formulation with $N$ qubits, the full Hilbert space is 
\[
\mathcal{H}^{\otimes N} \quad \text{with dimension} \quad 2^N.
\]
For the case where $N=n^2$ (i.e., $n^2$ qubits), the full space has dimension 
\[
\dim(\mathcal{H}_{\text{full}}) = 2^{n^2}.
\]

Suppose we partition the $n^2$ qubits into $n$ blocks of $n$ qubits each. We require that each block is restricted to states of the form
\[
|0\cdots 1_i \cdots 0\rangle,\quad i=0,\ldots,n-1.
\]
The projector onto the one-hot subspace for block $j$ is defined as
\[
P_n^{(j)} = \sum_{i=0}^{n-1} \bigl|0\cdots 1_i \cdots 0\bigr\rangle_j\bigl\langle 0\cdots 1_i \cdots 0\bigr|_j.
\]
The overall projector on the full system is then
\[
P = \bigotimes_{j=0}^{n-1} P_n^{(j)},
\]
which restricts the Hilbert space to the tensor product of one-hot subspaces. Since each block now has dimension $n$, the dimension of the one-hot subspace is
\[
\dim(\mathcal{H}_{\text{one-hot}}) = n^n.
\]



\subsection{Examples of One-Hot Block Structures}

Here we illustrate the Kernel structure of Definition \ref{def:kernel-requirement} for the Traveling Salesman Problem\cite{Lucas2014Ising}. Suppose we have a \(3\)-city TSP-like problem requiring a \(3\times 3\) binary matrix encoding. In the block-wise viewpoint we need 9 qubits qrouped into 3 blocks, representing the rows of the permutation matrix (each block represents which city is in the 1\textsuperscript{st}, 2\textsuperscript{nd}, or 3\textsuperscript{rd} position). Each block has 3 qubits (exactly one of the 3 qubits is \(\ket{1}\), the others \(\ket{0}\)). A valid bitstring in the computational basis then takes the form
\[
\bigl(\,b_0^0\;b_0^1\;b_0^2\bigr)\;\bigl|\,\bigr.\;\bigl(\,b_1^0\;b_1^1\;b_1^2\bigr)\;\bigl|\,\bigr.\;\bigl(\,b_2^0\;b_2^1\;b_2^2\bigr),
\]
where each parenthetical group has exactly one \(\ket{1}\). For example:
\begin{align*}
&\underbrace{100}_{\text{Block 0}}\;\underbrace{010}_{\text{Block 1}}\;\underbrace{001}_{\text{Block 2}},\\
&\underbrace{010}_{\text{Block 0}}\;\underbrace{100}_{\text{Block 1}}\;\underbrace{001}_{\text{Block 2}},\\
\end{align*}
etc. Each triple of bits has exactly one \(\ket{1}\) and so on.

Suppose we have a \(4\)-city TSP-like problem requiring a \(4\times 4\) binary matrix encoding. The block structure in the computational basis then takes the form
\[
\bigl(b_0^0\;b_0^1\;b_0^2\;b_0^3\bigr)\;\bigl|\;\bigl(b_1^0\;b_1^1\;b_1^2\;b_1^3\bigr)\;\bigl|\;\bigl(b_2^0\;b_2^1\;b_2^2\;b_2^3\bigr)\;\bigl|\;\bigl(b_3^0\;b_3^1\;b_3^2\;b_3^3\bigr),
\]
where each parenthetical group has exactly one \(\ket{...1...}\). It is worth emphasizing again that not all the bitstrings satisfying this blockwise constraint are valid tours, however, all valid tours satisfy this structure. So for a TSP, the hard constraint is only partially encoded.

\subsection{Permutation Group Actions}
\label{app:perm}


\medskip
\noindent
\textbf{Global feasibility–preserving action}.
Let
\(
\mathcal A \;\equiv\; S_m\times S_n \;\subset\; U(\OH),
\)
act by
\begin{equation}
\label{eq:A-action}
\mathsf P_{\sigma,\tau}\,
\bigl(\ket{e_{j_0}}\!\otimes\!\cdots\!\otimes\!\ket{e_{j_{m-1}}}\bigr)
\;=\;
\ket{e_{\tau(j_{\sigma^{-1}(0)})}}\!\otimes\!\cdots\!\otimes\!\ket{e_{\tau(j_{\sigma^{-1}(m-1)})}},
\end{equation}
i.e., a row permutation $\sigma\in S_m$ and the \emph{same} column relabeling
$\tau\in S_n$ applied to every block. When $H_{\mathrm{pen}}$ is
$S_m\times S_n$–invariant (Def.~\ref{def:kernel-requirement}), the feasible set
\[
\mathcal X \;:=\; L_0(H_{\mathrm{pen}})=\{\,x:\langle x|H_{\mathrm{pen}}|x\rangle=0\,\}
\]
is preserved setwise by $\mathcal A$.

\medskip
\noindent
By a \emph{permutation twirl} we mean the feasibility–preserving group action
\[
\mathcal A \;=\; S_m^{\text{(rows)}} \times S_n^{\text{(global cols)}},
\]
which (i) permutes the \emph{rows/blocks} uniformly via a single $\pi\in S_m$ and (ii) applies the \emph{same} column relabeling $\tau\in S_n$ to \emph{every} row. This action preserves the set of permutation matrices (feasible states). 

\medskip
\noindent
Recall. If we encode an \(m\times n\) assignment (rows \(b\in[m]\), columns \(j\in[n]\)) by an \(m\)-block one–hot pattern with \(n\) qubits per block. A feasible bitstring is
a \emph{permutation matrix} \(X\in\{0,1\}^{m\times n}\) with exactly one ``1'' in each row and each column. Let \(S_m\) act on rows (blocks) and \(S_n\) act on columns (symbols) with action on \(X\) given by
\begin{align}
(R_\pi X)_{b,j} &= X_{\pi^{-1}(b),\,j}, \qquad \pi\in S_m,
\\
(C_\tau X)_{b,j} &= X_{b,\,\tau^{-1}(j)}, \qquad \tau\in S_n,
\end{align}
and combined action \(X\mapsto R_\pi C_\tau X\).

\begin{proposition}[Global Feasibility is preserved by \(\mathcal A\)]
\label{prop:feasibility-preserved}
If \(X\) is a permutation matrix and \((\pi,\tau)\in\mathcal A\), then \(X' := R_\pi C_\tau X\) is also a permutation matrix.
\end{proposition}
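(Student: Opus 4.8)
The plan is to verify directly that both the row‑sum and column‑sum conditions characterizing a permutation matrix are preserved under each of the two actions separately, and then combine. First I would recall the characterization: $X\in\{0,1\}^{n\times n}$ is a permutation matrix iff $\sum_j X_{b,j}=1$ for every row $b$ and $\sum_b X_{b,j}=1$ for every column $j$ (equivalently, $X$ has exactly one $1$ in each row and each column). Equivalently, one may represent the action in matrix form: $R_\pi$ is left multiplication by the permutation matrix $P_\pi$ (with $(P_\pi)_{b,c}=\delta_{b,\pi(c)}$) and $C_\tau$ is right multiplication by $P_\tau^{-1}$, so that $X'=P_\pi\,X\,P_\tau^{-1}$. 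Since the set of permutation matrices is exactly the image of the group homomorphism $S_n\to GL_n$ and is closed under products and inverses, $P_\pi X P_\tau^{-1}$ is a product of three permutation matrices, hence a permutation matrix; this gives the cleanest one‑line argument.

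If instead I argue at the level of entries (which is closer to the index conventions already fixed in the excerpt), I would proceed in two steps. \emph{Step 1 (rows).} For $Y:=C_\tau X$, we have $Y_{b,j}=X_{b,\tau^{-1}(j)}$, so the multiset of entries in row $b$ of $Y$ is just a reindexing of the multiset of entries in row $b$ of $X$; hence each row of $Y$ still contains exactly one $1$. Moreover the column containing the $1$ in row $b$ moves from column $c_b$ (where $X_{b,c_b}=1$) to column $\tau(c_b)$, and since $b\mapsto c_b$ is a bijection on $[n]$ and $\tau$ is a bijection, $b\mapsto \tau(c_b)$ is a bijection, so each column of $Y$ also contains exactly one $1$. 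Thus $Y$ is a permutation matrix. \emph{Step 2 (blocks/rows).} For $X':=R_\pi Y$ we have $X'_{b,j}=Y_{\pi^{-1}(b),j}$, i.e.\ $X'$ is obtained from $Y$ by permuting whole rows via $\pi$; row sums are permuted (still all $1$) and column sums are unchanged (still all $1$). Hence $X'$ is a permutation matrix, which is the claim.

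I do not expect a genuine obstacle here: the statement is closure of the $S_n\times S_n$ action on the set of $n\times n$ permutation matrices, which is pure bookkeeping. The only point requiring a sentence of care is that the combined action $X\mapsto R_\pi C_\tau X$ is well defined and that $R_\pi$ and $C_\tau$ commute (row and column relabelings act on disjoint indices), so the order in which one applies them is immaterial; both orders land on the same $X'=P_\pi X P_\tau^{-1}$. It is also worth remarking, for use in Lemma~\ref{lem:perm-twirl} and the surrounding discussion, that this is precisely why $\mathcal A$ — unlike the blockwise group $\mathcal B=S_n^{\times m}$ — preserves the feasible set $\mathcal X=L_0(H_{\mathrm{pen}})$ setwise, since feasibility is exactly the permutation‑matrix condition together with the $S_m\times S_n$‑invariant penalty vanishing.
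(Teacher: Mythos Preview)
Your proposal is correct and essentially the same as the paper's proof: the paper also works at the entry level, writing $X_{b,j}=1\iff j=\sigma(b)$ for the encoded permutation $\sigma$ and then computing that $(R_\pi C_\tau X)_{b,j}=1\iff j=\tau(\sigma(\pi^{-1}(b)))$, so $X'$ encodes $\tau\circ\sigma\circ\pi^{-1}$. Your matrix formulation $X'=P_\pi X P_\tau^{-1}$ and the group-closure observation is the same argument in slightly more abstract clothing, with the paper's version having the minor bonus of naming the resulting permutation explicitly.
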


\begin{proof}
Any feasible \(X\) encodes a permutation \(\sigma:[m]\to[n]\) via
\(X_{b,j}=1 \iff j=\sigma(b)\). Under the group action,
\[
(R_\pi C_\tau X)_{b,j}=1
\iff
X_{\pi^{-1}(b),\,\tau^{-1}(j)}=1
\iff
\tau^{-1}(j)=\sigma(\pi^{-1}(b))
\iff
j=\tau\!\bigl(\sigma(\pi^{-1}(b))\bigr).
\]
Thus \(R_\pi C_\tau X\) encodes the permutation \(\tau\circ\sigma\circ\pi^{-1}\).
It remains one-hot in each row and each column, i.e.\ feasible.
\end{proof}




For a qubit-level description, label qubits as \((b,j)\) with \(b\in[m]\) (block/row) and \(j\in[n]\) (column). Then
\begin{align}
(b,j) &\xmapsto{R_\pi} (\pi(b),\,j),
\\
(b,j) &\xmapsto{C_\tau} (b,\,\tau(j)),
\end{align}
and the unitary implementing \(R_\pi C_\tau\) is a permutation matrix over the full
computational basis. Because \(H_{\mathrm{pen}}\) is invariant under \(\mathcal A\), its zero-level set  is stable under these unitaries. This is
the precise sense in which ``the permutation preserves feasibility'' in our kernel.
\paragraph{ \(3\times 3\) (9-qubit) example}
Take \(m=n=3\), so a feasible state is a \(3\times 3\) permutation matrix. Start from
\(\sigma=\mathrm{id}\),
\[
X \;=\;
\begin{pmatrix}
1&0&0\\
0&1&0\\
0&0&1
\end{pmatrix}.
\]
Apply a row cycle \(\pi=(1\,2\,3)\): \(R_\pi X\) is
\(
\begin{psmallmatrix}
0&1&0\\
0&0&1\\
1&0&0
\end{psmallmatrix}
\),
still feasible.
Apply a \emph{global} column swap \(\tau=(1\,3)\) to all rows:
\(
C_\tau X=
\begin{psmallmatrix}
0&0&1\\
0&1&0\\
1&0&0
\end{psmallmatrix}
\),
still feasible.
The combined action \(R_\pi C_\tau X\) remains a permutation matrix, encoding
\(\tau\circ\sigma\circ\pi^{-1}\).

\subsection{A Quick Schmidt–Rank Primer}

\begin{definition}[Schmidt decomposition and Schmidt rank]
Let $\ket{\psi}\in\mathcal H_L\otimes\mathcal H_R$ be a pure state. There exist orthonormal sets
$\{ \ket{u_k} \}_{k=1}^r \subset \mathcal H_L$, $\{ \ket{v_k} \}_{k=1}^r \subset \mathcal H_R$,
and strictly positive numbers $\lambda_1,\dots,\lambda_r$ such that
\[
  \ket{\psi} \;=\; \sum_{k=1}^{r} \lambda_k\, \ket{u_k}_L \otimes \ket{v_k}_R.
\]
The integer $r$ is the \emph{Schmidt rank}, denoted $\mathrm{SR}(\ket{\psi})$ (with respect to the bipartition $L|R$).
\end{definition}

\paragraph{Matrix-reshaping.}
Fix product bases on $L$ and $R$. If one reshapes the amplitude tensor of $\ket{\psi}$ into a matrix
$M \in \mathbb C^{(\dim \mathcal H_L)\times (\dim \mathcal H_R)}$ by grouping all ``left'' indices as rows
and all ``right'' indices as columns, then
\[
  \mathrm{SR}(\ket{\psi}) \;=\; \operatorname{rank}(M).
\]
Thus, the Schmidt rank of a state equals the usual linear-algebra rank of its reshaped coefficient matrix.

\paragraph{Example: $\ket{0}^{\otimes n}$ has Schmidt rank $1$ across any single-edge cut.}
For any cut $\{0,\dots,i-1\}\,|\,\{i,\dots,n-1\}$ one has the factorization
\[
  \ket{0}^{\otimes n} \;=\; \Bigl(\ket{0}^{\otimes i}\Bigr) \otimes \Bigl(\ket{0}^{\otimes (n-i)}\Bigr),
\]
i.e.\ a \emph{single} product term. Hence $\mathrm{SR}=1$. In the matrix view, the reshaped matrix has one nonzero entry
(at the all-zero row/column), so its rank is $1$.

\paragraph{Example: $\ket{W_n}$ has Schmidt rank $2$ across any single-edge cut.}
Recall
\[
  \ket{W_n} \;=\; \frac1{\sqrt n}\sum_{j=0}^{n-1} \ket{0\cdots 1_j\cdots 0}.
\]
Across the cut $\{0,\dots,i-1\}\,|\,\{i,\dots,n\}$ split the excitation ``on the left'' vs.\ ``on the right'':
\[
  \ket{W_n} \;=\; \sqrt{\frac{i}{n}}\;\ket{L_1}\otimes\ket{0\cdots 0}_R
  \;+\;
  \sqrt{\frac{n-i}{n}}\;\ket{0\cdots 0}_L\otimes\ket{R_1},
\]
\[
  \ket{L_1}=\frac{1}{\sqrt i}\sum_{j=0}^{i-1} \ket{0\cdots 1_j\cdots 0}_L,
  \qquad
  \ket{R_1}=\frac{1}{\sqrt{n-i}}\sum_{j=i}^{n-1} \ket{0\cdots 1_j\cdots 0}_R
\]
are orthonormal to the respective all-zero vectors. The two product terms are orthogonal and both coefficients
are nonzero for $1\le i\le n-1$, hence $\mathrm{SR}(\ket{W_n})=2$. In the matrix view, the reshaped matrix has two
nonzero singular values $\sqrt{i/n}$ and $\sqrt{(n-i)/n}$.

\subsection{Two–local and number–conserving structure}
\label{app:two-local}
On a given block $b$ with sites $i\in\{0,\dots,n-1\}$, define the Pauli ladder operators
\[
  \sigma_i^{+} \;=\; \ket{1}\!\bra{0}_i \;=\; \tfrac{1}{2}\bigl(X_i - i Y_i\bigr),
  \qquad
  \sigma_i^{-} \;=\; \ket{0}\!\bra{1}_i \;=\; \tfrac{1}{2}\bigl(X_i + i Y_i\bigr).
\]
A straightforward calculation gives
\begin{equation}
\label{eq:XY-sigmapm}
  X_i X_j + Y_i Y_j
  \;=\;
  2\bigl(\sigma_i^{+}\sigma_j^{-} + \sigma_i^{-}\sigma_j^{+}\bigr),
  \qquad (i\neq j).
\end{equation}
Hence the block mixer can be written (up to an overall factor $2$ absorbable into the angle)
\begin{align}
\label{eq:HM-block}
  H_M^{(b)}
  \;:=\; \sum_{0\le i<j\le n-1} \bigl(X_{i}^{(b)}X_{j}^{(b)}+Y_{i}^{(b)}Y_{j}^{(b)}\bigr)
  \;=\; 2 \sum_{0\le i<j\le n-1}\bigl(\sigma_{bi}^{+}\sigma_{bj}^{-}
                         + \sigma_{bi}^{-}\sigma_{bj}^{+}\bigr).
\end{align}

\paragraph{Action on the one–excitation basis.}
Let $\mathcal H_1=\mathrm{span}\{\ket{e_0},\dots,\ket{e_{n-1}}\}$ denote the one–hot subspace on the block,
where $\ket{e_k}$ has a single `1' at position $k$. For $i\neq j$,
\[
  \sigma_j^{+}\ket{e_i}=\ket{\dots 1_j,1_i,\dots},\qquad
  \sigma_i^{-}\sigma_j^{+}\ket{e_i}=\ket{e_j},\qquad
  \sigma_i^{+}\sigma_j^{-}\ket{e_i}=0.
\]
Using \eqref{eq:XY-sigmapm},
\begin{equation}
\label{eq:XY-hop}
  (X_iX_j+Y_iY_j)\ket{e_i}
  \;=\; 2\,\sigma_i^{-}\sigma_j^{+}\ket{e_i}
  \;=\; 2\ket{e_j}.
\end{equation}
By symmetry,
\(
  (X_iX_j+Y_iY_j)\ket{e_j}=2\ket{e_i},
\)
and for any $k\notin\{i,j\}$,
\(
  (X_iX_j+Y_iY_j)\ket{e_k}=0.
\)
Thus, \emph{restricted to $\mathcal H_1$}, $H_M^{(b)}$ acts as the (weighted) adjacency of the complete graph $K_n$:
\begin{equation}
\label{eq:AdjK}
  \bigl.H_M^{(b)}\bigr|_{\mathcal H_1}
  \;=\; 2\sum_{0\le i<j\le n-1}\bigl(\ket{e_i}\!\bra{e_j}+\ket{e_j}\!\bra{e_i}\bigr)
  \;=\; 2\,A(K_n).
\end{equation}

Excitation number preservation is straight forward. Let $\hat N=\sum_{i=0}^{n-1} \sigma_i^{+}\sigma_i^{-}$ be the excitation–number operator on the block.
Each term $\sigma_i^{+}\sigma_j^{-}$ lowers at $j$ and raises at $i$, keeping the total count unchanged; likewise for its Hermitian conjugate. Hence
\[
  [\,H_M^{(b)},\,\hat N\,]=0,
\]
so $H_M^{(b)}$ preserves all fixed–Hamming–weight sectors, in particular $\mathcal H_1$.

\subsection{Proof of Proposition \ref{prop:controllability-universality}}
\label{app:proof}
\begin{proof}
\textbf{Step 1 (Representation on $\mathcal H_1$).}
With basis $\{\ket{e_k}\}_{k=0}^{n-1}$, the XY terms act as
\[
H_{ij}^{(b)}\big|_{\mathcal H_1} \;=\; E_{ij}+E_{ji},\qquad E_{ij}:=\ket{e_i}\!\bra{e_j},
\]
and $H_Z^{(1)}\big|_{\mathcal H_1}=D:=\sum_{k=0}^{n-1} d_k E_{kk}$ is diagonal. “Nontrivial” means $D\not\propto I$. 

\textbf{Step 2 (Skew-symmetric off-diagonals).}
For any pair $(i,j)$ with $d_i\neq d_j$,
\[
\bigl[D,\; E_{ij}+E_{ji}\bigr] \;=\; (d_i-d_j)\,(E_{ij}-E_{ji}),
\]
so $E_{ij}-E_{ji}$ lies in the Lie closure $\mathfrak L := \mathrm{Lie}\{\, i\mathcal G\}$.

\textbf{Step 3 (Traceless diagonals).}
With $S_{ij}:=E_{ij}+E_{ji}$ and $A_{ij}:=E_{ij}-E_{ji}$ obtained as above for some pair $(i,j)$ with $d_i\neq d_j$, compute
\[
\bigl[S_{ij},\, A_{ij}\bigr] \;=\; 2\,(E_{ii}-E_{jj}) \;\in\; \mathfrak L.
\]
Hence $\Delta_{ij}:=E_{ii}-E_{jj}\in\mathfrak L$ for at least one $(i,j)$. Since commutators of such $\Delta$’s generate the linear span of all traceless diagonals, we can obtain
\[
\mathrm{span}\{\,E_{ii}-E_{jj}\,:\, i\neq j\} \;\subseteq\; \mathfrak L .
\]

\textbf{Step 4 (All skew-symmetric off-diagonals).}
Given any $k\neq \ell$, use the available XY generator $S_{k\ell}=E_{k\ell}+E_{\ell k}\in\mathfrak L$ and a traceless diagonal that separates $k$ and $\ell$, namely $\Delta_{k\ell}=E_{kk}-E_{\ell\ell}\in\mathfrak L$ from Step~3, to get
\[
\bigl[\Delta_{k\ell},\, S_{k\ell}\bigr] \;=\; 2\,(E_{k\ell}-E_{\ell k}) \;\in\; \mathfrak L.
\]
Thus for every $k\neq \ell$ we have both $S_{k\ell}$ and $A_{k\ell}$.

\textbf{Step 5 (Lie algebra identification).}
The set
\[
\bigl\{\, i(E_{k\ell}+E_{\ell k}),\ i(E_{k\ell}-E_{\ell k}) \ (k<\ell);\ i(E_{kk}-E_{\ell\ell})\ (k<\ell)\,\bigr\}
\]
spans $\mathfrak{su}(n)$. Steps 3–4 show all these elements lie in $\mathfrak L$, hence $\mathfrak L=\mathfrak{su}(n)$.

\noindent
\textbf{Step 6 (Synthesis).}
By standard Lie–algebraic controllability (e.g., \cite{DAlessandro2007}), the connected Lie group generated by exponentials of elements in $i\mathcal G$ has Lie algebra $\mathfrak{su}(n)$, hence equals $\mathrm{SU}(n)$. Density plus compactness of $\mathrm{SU}(n)$ yields the stated approximation of any $V\in\mathrm{SU}(n)$ by a finite product of $\exp(-i\theta H)$ with $H\in\mathcal G$ to arbitrary precision $\varepsilon>0$.
\end{proof}

\end{appendix}

\end{document}